\newcommand{\Sl}{\mathrm{Sl}}
\newcommand{\tp}{\intercal}
\newtheorem{theorem}{Theorem}
\newtheorem{proposition}{Proposition}
\newtheorem{lemma}{Lemma}
\newtheorem{definition}{Definition}
\definecolor{DarkGreen}{rgb}{0.5, 0.5, 0}
\definecolor{teal}{rgb}{0.0, 0.5, 0.5}
\newcommand{\mc}[1]{\mathcal{#1}}
\newcommand{\mb}[1]{\mathbb{#1}}
\newcommand{\mbf}[1]{\mathbf{#1}}
\newcommand{\tr}{\mathrm{Tr}} %old
\newcommand{\Tr}{\mathrm{Tr}} %new
\newcommand{\supp}{\mathrm{supp}}
\newcommand{\iden}{\mb{I}}
\newcommand{\1}{\mathbf{1}}
\newcommand{\toGCO}{\stackrel{\mathrm{GCO}}{\longrightarrow}}
\newcommand{\Inm}{m_\mathrm{in}}
\newcommand{\Outm}{m_{\mathrm{out}}}
\newcommand{\mU}{\mc{U}}
\newcommand{\mE}{\mc{E}}
\newcommand{\mmZ}{\mbf{Z}}
\newcommand{\ket}[1]{\left.\left|{#1}\right.\right\rangle}
\newcommand{\bra}[1]{\left.\left\langle{#1}\right.\right|}
\newcommand{\vecr}{\vec{r}}
\newcommand{\vecalpha}{\vec{\alpha}}
\newcommand{\tX}{X}
\newcommand{\tY}{Y}
\newcommand{\td}{\vec{d}}
\newcommand{\tS}{S}
\newcommand{\tR}{R}
\def\maketitle{
	\@author@finish
	\title@column\titleblock@produce
	\suppressfloats[t]}
\newcommand{\ntu}{Nanyang Quantum Hub, School of Physical and Mathematical Sciences, Nanyang Technological University, 637371, Singapore}
\newcommand{\ulm}{Institute of Theoretical Physics \& IQST, Ulm University, Albert-Einstein-Allee 11 89081, Ulm, Germany}
\newcommand{\shandong}{School of Information Science and Engineering, Shandong University, Qingdao 266237, China}
\newcommand{\cqt}{Centre for Quantum Technologies, Nanyang Technological University, 50 Nanyang Avenue, 639798 Singapore}
\begin{document}

	\title{Gaussian time-translation covariant operations: structure, implementation, and thermodynamics}

	\author{Xueyuan Hu}
    \email{xyhu@sdu.edu.cn}
	\affiliation{\shandong}
	\author{Lea Lautenbacher}
    \email{lea.lautenbacher@uni-ulm.de}
	\affiliation{\ulm}
	\author{Giovanni Spaventa}
    \email{giovanni.spaventa@uni-ulm.de}
	\affiliation{\ulm}
	\author{Martin B. Plenio}
    \email{martin.plenio@uni-ulm.de}
	\affiliation{\ulm}
	\author{Nelly H.Y. Ng}
    \email{nelly.hy.ng@gmail.com}
	\affiliation{\ntu}
	\affiliation{\cqt}
	\author{Jeongrak Son}
	\email{jeongrak.son@gmail.com}
	\affiliation{\ntu}
    \affiliation{Dahlem Center for Complex Quantum Systems, Freie Universität Berlin, 14195 Berlin, Germany}
	\begin{abstract}
    Time-translation symmetry strongly constrains physical dynamics, yet systematic characterization for continuous-variable systems lags behind its discrete-variable counterpart. We close this gap by providing a rigorous classification of Gaussian quantum operations that are covariant under time translations, termed Gaussian covariant operations. We show that several key results known for discrete-variable covariant operations break down in the Gaussian optical setting: discrepancies arise in physical and thermodynamic implementation, in the extensivity of asymmetry, and in catalytic advantages. Our results provide comprehensive mathematical and operational toolkits for Gaussian covariant operations, including a peculiar pair of asymmetry measures that are completely non‑extensive. Our findings also reveal surprising consequences of the interplay among symmetry, Gaussianity, and thermodynamic constraints, suggesting that real-world scenarios with multiple constraints have a rich structure not accessible from examining individual constraints separately. 
	\end{abstract}
	\maketitle

Time flows whether we have a clock or not.
Without a clock, however, we cannot keep track of temporal information, which severely limits our ability to control the state.
Suppose that a clock-less agent applies a quantum channel to a system that is prepared in a known initial state, but the agent does not know how much time has passed since preparation.
Then any information the agent has about the system after the operation must be averaged over all possible timings of the channel.
Mathematically, this corresponds to time-averaged versions of channels, which form a class of $\mathrm{U}(1)$-covariant operations, hereafter referred to as phase-covariant operations~\cite{Bartlett07}. 

To access the full information, timekeeping devices such as clocks are needed. 
Then the operations before the time-averaging, which are not covariant in general, can be considered. 
A common way to model such processes is to treat the timekeeping device as an additional quantum system and describe the overall evolution on the enlarged system by a globally covariant operation. Notably, timekeeping devices generally degrade with use unless additional resources are supplied~\cite{Erker2017Clock, Woods2019_Clocks}.
Thus, to study the cost of these devices, a good understanding of covariant operations is crucial. 

Quantum resource theories~\cite{Gour25} are developed precisely for such careful accounting of resources. 
The properties of covariant operations and how they utilize timekeeping devices are extensively studied in the context of the resource theory of asymmetry~\cite{Bartlett07, Gour08, MarvianThesis}.
This approach has proved useful for understanding quantum metrology~\cite{Streltsov2017_Coherence, Marvian2014_Noether}, quantum clocks and reference frames~\cite{Bartlett07}, and the cost of realizing quantum measurements~\cite{Tajima22}.
Most resource-theoretic studies to date have focused on finite-dimensional systems, whilst time-translation symmetry for continuous-variable systems, especially Gaussian optical ones, has been studied separately under the name phase-insensitive operators.
In particular, seminal results, such as studies on linear amplifiers~\cite{Haus1962, Caves1982, Clerk2010Amp}, the resolution of the minimum output entropy conjecture for Gaussian bosonic channels~\cite{Giovannetti04, Giovannetti13, DePalma14, DePalma15}, extensive studies of the classical and quantum capacities of such channels~\cite{Holevo01, Giovannetti04Cap}, and analyses of boson sampling~\cite{Aaronson10, Hamilton17}, have relied heavily on time-translation symmetry.
Yet, Gaussian operations that are time-translation symmetric---which we call Gaussian covariant operations in the rest of this letter---have rarely been studied in their full generality. 

In this work, we unify and generalize existing studies on Gaussian phase-insensitive operations into a framework that encompasses input and output systems with arbitrary numbers of modes.
In particular, we establish both mathematical and physical characterizations of Gaussian covariant operations, namely i) a necessary and sufficient condition for them to be freely dilatable and ii) a monotone function that does not increase under Gaussian covariant operations and exhibits several peculiar properties.
Each of these results showcases surprising deviations from general time-translation symmetric operations and from Gaussian operations when both conditions are imposed.

Lastly, we apply our characterization results to thermodynamics, where time-translation symmetry plays a central role~\cite{Lostaglio15, Korzekwa2016, Smirne16, Kwon2018, Shiraishi2025GPC}. 
Since time-translation-covariant unitaries coincide with energy-preserving unitaries, demanding the first law (total energy preservation) amounts to requiring operations to be time-translation covariant unless free coherence is supplied via auxiliary systems. 
Moreover, a version of the second law can be encoded as Gibbs‑state preservation (non-increasing non‑equilibrium free energies).
Enhanced thermal operations~\cite{EnTO2015} are defined as the set of all operations satisfying these two conditions.
However, not all enhanced thermal operations are freely implementable: there exists a gap~\cite{Ding21} between them and the set of operations realizable with a Gibbs-state ancilla and an energy-preserving unitary, known as thermal operations~\cite{Janzing00_TO}.
Understanding the exact nature of this gap (beyond its origin in asymmetry manipulation) remains a long-standing open problem~\cite{Lostaglio15Symmetry, Gour2018EnTO, Lie2025TO}; progress here will elucidate the genuinely quantum aspects of thermodynamic operations.
In Gaussian optics, time-translation-covariant unitaries correspond exactly to passive elements~\cite{SerafiniBook, Weedbrook12}, and the Gaussian analogue of thermal operations has been introduced~\cite{Serafini20, Narasimhachar2021GTO, Yadin22}.
We show that, in this setting, the gap closes: Gaussian thermal operations and Gaussian enhanced thermal operations coincide, resolving the question for a large and experimentally relevant class.

\emph{Notations}---%
We consider continuous-variable systems consisting of $m$ bosonic modes with Hamiltonian $H=\sum_{j=1}^m\omega_j\hat{a}_j^\dagger \hat{a}_j$, where $\hat a_j$ and $\hat a_j^\dagger$ are the annihilation/creation operators and $\omega_j> 0$ is the frequency of the $j$th mode. 
A Gaussian state of this system is fully characterized by its first and second moments defined as $r_{j} = r_{m+j}^{*} = \langle \hat{a}_{j} \rangle$ for $j = 1,\cdots,m$ and $M_{j,k} = M_{k+m,j+m}= \frac{1}{2}\langle \{\hat{a}_{j},\hat{a}_{k}^{\dagger}\} \rangle - \langle \hat{a}_{j}\rangle \langle \hat{a}_{k}^{\dagger}\rangle $, $M_{j,k+m} = M_{j+m,k}^{*} = \frac{1}{2}\langle \{\hat{a}_{j},\hat{a}_{k}\} \rangle - \langle \hat{a}_{j}\rangle \langle \hat{a}_{k}\rangle $ for $j,k = 1,\cdots,m$, respectively.
Accordingly, the first moment vector $\vecr$ and the second moment matrix $M$ can be written as 
\begin{align}\label{eq:cvm}
    \vecr=\begin{pmatrix}
        \vecalpha^* \\ \vecalpha
    \end{pmatrix},\ M = \begin{pmatrix}
        \mu^* & \chi \\ \chi^* & \mu
    \end{pmatrix},
\end{align}
where $\vecalpha$ is a $m$-dimensional vector and $\mu$ and $\chi$ are $m\times m$-dimensional matrices that are Hermitian and symmetric, respectively. 
Hence, a Gaussian state is fully characterized by $(\vec\alpha,\mu,\chi)$.
Moreover, the second moment must obey the uncertainty principle, which requires $M+\mathbf{Z}\geq \mathbf{0}$. Here $\mathbf{Z}\coloneqq \frac{1}{2}\begin{pmatrix}
	\mathbb{I} & \mathbf{0}\\ \mathbf{0} & -\mathbb{I}
\end{pmatrix}$ and $\mathbb{I}$ is the $m\times m$ identity matrix.
Note that we use the first and the second moments defined with $\hat{a}_j$, $\hat{a}_j^\dagger$ operators instead of the more popular choice of $\hat{x}_{j} = \frac{1}{\sqrt{2}}(\hat{a}_{j}^{\dagger}+\hat{a}_{j})$, $\hat{p}_{j} = \frac{i}{\sqrt{2}}(\hat{a}_{j}^{\dagger} - \hat{a}_{j})$; these two representations are equivalent up to a simple transformation and we choose the former for notational simplicity in most of our proofs. 

Similarly, Gaussian operations are specified by the tuple $(\tX, \tY,\td)$ acting on the first and second moments as $\vecr \rightarrow \tX \vecr + \td$ and $M\rightarrow \tX M \tX^\dagger + \tY$, with the constraint $\tY \geq \tX \mathbf{Z} \tX^\dagger - \mathbf{Z}$. 
We also allow scenarios where the input and the output modes are different. 
These operations emerge when e.g. input system $S_\mathrm{in}$ interact with ancilla system $S_\mathrm{out}R$ followed by the partial trace of $S_\mathrm{in}R$.
In such cases, $\tX$ may not be square matrices, and the constraint becomes $\tY \geq \tX \mathbf{Z}_\mathrm{in} \tX^\dagger - \mathbf{Z}_\mathrm{out}$. 
For Gaussian unitary operations, the tuple simplifies to $(\tS, 0,\td)$, where $\td$ describes the displacement operation, and $\tS$ (satisfying $\tS\mmZ \tS^\dagger=\mmZ$) is the total effect from beam-splitters, phase-shifters and squeezing. 

\emph{Mathematical Characterization}---%
In this work, we are mainly interested in operations that are covariant to time-translation under the free Hamiltonian.
A \emph{Gaussian covariant operation (GCO)} can be formally defined as a Gaussian channel $\mE_{\mathrm{GCO}}$ that commutes with time-translation: 
\begin{align}\label{eq:GCO_condition}
    \mE_{\mathrm{GCO}}\circ\mathcal{U}_t^\mathrm{in}=\mathcal{U}_t^\mathrm{out}\circ\mE_{\mathrm{GCO}}
\end{align}
for all $t\in\mathbb{R}$, where $\mathcal{U}_t^\mathrm{in/out}(\cdot)=e^{-iH_\mathrm{in/out}t}(\cdot)e^{iH_\mathrm{in/out}t}$, and $H_\mathrm{in/out}$ are the free Hamiltonians of the input and output systems of $\mE_{\mathrm{GCO}}$.

The commutation relation Eq.~\eqref{eq:GCO_condition} greatly simplifies the structure of GCO. 
Proposition~\ref{pro: indep evols} in Appendix~\ref{ap:proofs1} demonstrates that input modes $j,k$ of different frequencies $\omega_{j}\neq\omega_{k}$ evolve independently under GCO. 
It is therefore sufficient to study the case where all input and output modes have the same frequency $\omega$.
Hence, the time-translation operation $\mathcal{U}_{t}$ is equivalent to an overall phase shift by $\theta = \omega t$, corresponding to a tuple $(\tR(\theta),0,0)$, where $\tR(\theta) = \cos(\theta) \mathbb{I}-i\sin(\theta) \cdot 2\mathbf{Z}$.

The mathematical characterization of GCOs is derived in the literature~\cite{Heinosaari2010GaugeCov, Holevo2015GaugeCov, Frank2017GaugeCov} with the name Gaussian gauge-covariant operations.
In our notation, the characterization can be stated as follows.
Any $\Inm$-mode to $\Outm$-mode GCO is \emph{fully characterized} by a pair $(A,B)$ of $\Outm\times \Inm$ matrix $A$ and $\Outm\times \Outm$ matrix $B$, transforming first and second moments as: 
    \begin{align}
        \vecalpha &\rightarrow A\vecalpha,\label{eq:1st_moment}\\
        \mu & \rightarrow   A\mu A^\dagger + B,\label{eq:2nd_moment_mu}\\
        \chi &\rightarrow A^* \chi A^\dagger,\label{eq:2nd_moment_chi}
    \end{align}
    with the constraint
    \begin{align}
    	B\geq \pm\frac{1}{2}(  \iden-AA^\dagger).\label{eq:HUR_A_B}
	\end{align}
For completeness, we provide a self-contained proof in Lemma~\ref{le:gene_GCO_app}, Appendix~\ref{ap:proofs1}.
This characterization greatly simplifies the analysis of GCO by reducing the tuple $(X,Y,\td)$ to the much smaller pair $(A,B)$ and showing that second moments $\mu$ and $\chi$ evolve independently.
Furthermore, if a GCO $(A,B)$ is unitary, corresponding to passive elements in optics or energy-preserving unitaries in thermodynamics, the reversibility of unitary operations demands $B=0$, and in turn, $AA^\dagger=\iden$ from Eq.~\eqref{eq:HUR_A_B}. 
Therefore, a unitary GCO is fully characterized by a single $m\times m$ unitary matrix $A=V$, which acts on the first and second moments as $\vecalpha\rightarrow V\vecalpha$, $\mu\rightarrow V\mu V^\dagger$, and $\chi \rightarrow V^* \chi V^\dagger$. 

\emph{Physical and operational characterization}---
Beyond the mathematical characterization, we provide operational ones.
Typically, a desired quantum channel is implemented by its dilation, i.e., by preparing an auxiliary state and applying a unitary.
For this dilation to be free, the auxiliary and the unitary both need to be free.
We define a class of GCOs with such free dilations (also known as physically implementable operations in the literature~\cite{Gour25}). 
\begin{definition}[Freely dilatable GCO]\label{def: free dilation}
	A GCO $\mE$ on system $S$ has a \emph{free dilation}, if it can be written as 
	\begin{align}\label{eq: PI GCO def}
		\mE(\cdot) = \tr_R\left[\mathbf{U}^{\mathrm{PC}}(\cdot\otimes\rho_{R})\mathbf{U}^{\mathrm{PC}\dagger}\right],
	\end{align}
where $\rho_{R}$ is a Gaussian symmetric state satisfying $[\rho_{R},H_{R}] = 0$
for the Hamiltonian $H_{R}$, and $\mathbf{U}^{\mathrm{PC}}$ is a Gaussian covariant unitary on $SR$, i.e., $[\mathbf{U}^{\mathrm{PC}},H_S+H_R]=0$. 
\end{definition}

Here symmetric states are those invariant under time-translation, or equivalently, under phase-shifts.
In our $(\vec{\alpha},\mu,\chi)$ notation, this corresponds to $\vec{\alpha} = 0$ and $\chi = 0$.
In the following lemma (proved in Lemma~\ref{le:free_dilation_app}, Appendix~\ref{ap:proofs1_2}), we provide a complete characterization of the GCOs that admit a free dilation.
\begin{lemma}\label{le:free_dilation}
A GCO $(A,B)$ is freely dilatable if and only if the following two conditions are satisfied.\\
(F1) $\iden-AA^\dagger\geq0$;\\
(F2) $\supp(B)= \supp(\iden-AA^\dagger)$.
\end{lemma}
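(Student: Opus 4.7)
The plan is to translate the free-dilation requirement into matrix conditions on $(A,B)$ using the GCO classification recalled in the excerpt. A symmetric Gaussian ancilla $\rho_R$ is specified solely by a Hermitian $\mu_R\geq\tfrac12\iden_R$ (since $\vecalpha_R=0$ and $\chi_R=0$), and a Gaussian covariant unitary on $SR$ is specified by a single unitary matrix $V$ on the joint mode space. So a free dilation amounts to a pair $(V,\mu_R)$, and the whole proof reduces to a block-matrix calculation. Writing
\[
V=\begin{pmatrix} V_{SS} & V_{SR}\\ V_{RS} & V_{RR}\end{pmatrix},
\]
pushing $\rho_S\otimes\rho_R$ through the dilation, and tracing out $R$, I compare with Eqs.~\eqref{eq:1st_moment}--\eqref{eq:2nd_moment_chi} and use $\chi_R=0$, $\vecalpha_R=0$ to read off $A=V_{SS}$ and $B = V_{SR}\mu_R V_{SR}^\dagger$.

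For necessity, unitarity of $V$ forces $V_{SR}V_{SR}^\dagger=\iden-AA^\dagger$, which must therefore be positive semidefinite, giving (F1). Since $\mu_R\geq\tfrac12\iden>0$, a standard kernel computation gives $\supp(V_{SR}\mu_R V_{SR}^\dagger)=\supp(V_{SR}V_{SR}^\dagger)$, so $\supp(B)=\supp(\iden-AA^\dagger)$, which is (F2). Enlarging $R$ does not relax these constraints: via the SVD of $V_{SR}$ only an $r\times r$ compression of $\mu_R$ contributes to $B$, and that compression inherits $\geq\tfrac12\iden_r$, so the support of $B$ still matches that of $\iden-AA^\dagger$.

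For sufficiency, assume (F1), (F2), and the standing validity constraint~\eqref{eq:HUR_A_B}. Form the reduced spectral decomposition $\iden-AA^\dagger = P\Lambda P^\dagger$ with $P$ an $\Outm\times r$ isometry and $\Lambda>0$ diagonal of rank $r$, and take an $r$-mode ancilla. Set $V_{SR}:=P\sqrt{\Lambda}$ and
\[
\mu_R:=\Lambda^{-1/2}\bigl(P^\dagger B P\bigr)\Lambda^{-1/2},
\]
which is Hermitian and positive definite thanks to (F2). A direct computation yields $V_{SR}\mu_R V_{SR}^\dagger = B$, and conjugating the bound $B\geq\tfrac12(\iden-AA^\dagger)$ with $\Lambda^{-1/2}P^\dagger$ delivers $\mu_R\geq\tfrac12\iden_r$. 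Finally, I complete $V$ by supplying the remaining blocks; since $\rank(\iden-A^\dagger A)=\rank(\iden-AA^\dagger)=r$, a standard partial-isometry extension produces a unitary $V$ that encodes a Gaussian covariant unitary on $SR$, and the corresponding dilation reproduces $(A,B)$.

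The main obstacle lies in the sufficiency direction, specifically in the check $\mu_R\geq\tfrac12\iden_r$: this is exactly where (F2) is indispensable, because without support equality the candidate $\mu_R$ necessarily has a non-trivial kernel on the $r$-dimensional reduced subspace and cannot satisfy the uncertainty bound. The SVD argument for necessity is the companion observation, ensuring that no alternative ancilla size or choice of $V_{SR}$ circumvents this obstruction. This step also highlights the peculiar interplay between Gaussianity, which enforces $\mu_R\geq\tfrac12\iden_R$ rather than mere positivity, and covariance, which fixes the block-unitary structure of $V$.
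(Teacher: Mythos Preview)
Your proof is correct and follows essentially the same route as the paper's: both identify $A=V_{SS}$ and $B=V_{SR}\mu_R V_{SR}^\dagger$, derive (F1) from $V_{SR}V_{SR}^\dagger=\iden-AA^\dagger$, derive (F2) from the strict positivity $\mu_R\geq\tfrac12\iden$, and for sufficiency build $V_{SR}$ out of the spectral data of $\iden-AA^\dagger$ and set $\mu_R=\Lambda^{-1/2}(P^\dagger B P)\Lambda^{-1/2}$, checking $\mu_R\geq\tfrac12\iden$ via~\eqref{eq:HUR_A_B}. The only cosmetic difference is that the paper starts from the full SVD of $A$ and writes down the complete unitary $V$ explicitly (their Eq.~(S15)), whereas you work directly with the reduced spectral decomposition of $\iden-AA^\dagger$ and invoke the standard orthonormal-completion argument to finish $V$; both are the same construction up to notation.
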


Using Lemma~\ref{le:free_dilation}, we also identify an operationally relevant class of GCOs that do not admit a free dilation in Theorem~\ref{th:GCO_PI} (proved in Theorem~\ref{th:GCO_PI_app}, Appendix~\ref{ap:proofs1_2}).

\begin{theorem}
\label{th:GCO_PI}
    If a GCO mapping a system to itself has no fixed-point that is a valid quantum state, then it is not freely dilatable. 
    However, the converse is not true.
\end{theorem}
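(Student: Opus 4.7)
The plan is to establish the first claim by its contrapositive---if $(A,B)$ is freely dilatable then it admits a valid fixed-point state---and to show the converse fails via a small explicit counterexample. By Proposition~\ref{pro: indep evols} I may assume a single mode frequency throughout, and by Lemma~\ref{le:free_dilation} I may assume (F1) $\iden-AA^\dagger\ge 0$ and (F2) $\supp B=\supp(\iden-AA^\dagger)$.

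For the fixed-point construction I restrict to a symmetric ansatz $\vec\alpha=0$, $\chi=0$, so the only non-trivial requirement is a Hermitian $\mu\ge\tfrac12\iden$ solving $\mu=A\mu A^\dagger+B$. I run the iteration $\mu_{k+1}=A\mu_k A^\dagger+B$ starting from $\mu_0=\tfrac12\iden$. The physicality constraint Eq.~\eqref{eq:HUR_A_B} gives $B\ge\tfrac12(\iden-AA^\dagger)$, hence $\mu_1\ge\mu_0$; since $A(\,\cdot\,)A^\dagger$ preserves the operator order, the sequence is monotone non-decreasing. For boundedness, (F2) together with a finite-dimensional Douglas-type comparison of positive operators yields some $C>0$ with $B\le C(\iden-AA^\dagger)$. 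Unrolling the iteration gives
\begin{equation}
\mu_k = \tfrac12\,A^k A^{\dagger k} + \sum_{j=0}^{k-1} A^j B A^{\dagger j},
\end{equation}
and the telescoping identity $\sum_{j=0}^{k-1}A^j(\iden-AA^\dagger)A^{\dagger j}=\iden-A^k A^{\dagger k}\le\iden$ bounds $\mu_k\le(\tfrac12+C)\iden$. A monotone bounded Hermitian sequence converges, and the limit $\mu_\infty\ge\tfrac12\iden$ satisfies the fixed-point equation, furnishing a valid symmetric fixed-point state.

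For the converse I exhibit a GCO that obeys Eq.~\eqref{eq:HUR_A_B} yet violates (F1). Taking $A=\bigl(\begin{smallmatrix}0&2\\0&0\end{smallmatrix}\bigr)$ and $B=\bigl(\begin{smallmatrix}3/2&0\\0&1/2\end{smallmatrix}\bigr)$, one checks $B\pm\tfrac12(\iden-AA^\dagger)\ge0$, so $(A,B)$ is a valid GCO; since $\iden-AA^\dagger$ has a negative eigenvalue, (F1) fails and the map is not freely dilatable. Directly solving $\mu=A\mu A^\dagger+B$ gives $\mu=\diag(7/2,1/2)\ge\tfrac12\iden$, a valid symmetric fixed-point state, completing the counterexample.

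The main obstacle is the boundedness of the iteration. Condition (F1) only makes $A$ a contraction, not a strict one, so $B$-iterates could in principle accumulate along invariant isometric subspaces of $A$; it is precisely the support equality in (F2) that rules this out, via the Douglas-type comparison combined with the telescoping identity above. The remaining pieces---monotonicity from the uncertainty constraint, the reduction to a common frequency through Proposition~\ref{pro: indep evols}, and the verification of the limit---are routine manipulations.
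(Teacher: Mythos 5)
Your proposal is correct and follows essentially the same route as the paper: your iteration from $\mu_0=\tfrac12\iden$ unrolls to exactly the paper's fixed point $\mu_\star=\tfrac{\iden}{2}+\sum_{n\ge0}A^n\Delta A^{\dagger n}$ with $\Delta=B-\tfrac12(\iden-AA^\dagger)$, with your monotone-bounded argument (via $B\le C(\iden-AA^\dagger)$ from (F2) and the telescoping identity) serving as a slightly more explicit justification of the convergence the paper attributes to $\Delta$ being supported on the strict-contraction subspace of $A$. Your counterexample is a particular instance of the paper's two-mode family $A=\bigl(\begin{smallmatrix}0&\sqrt{\eta_1}\\\sqrt{\eta_2}&0\end{smallmatrix}\bigr)$ with one amplifying entry and $\eta_1\eta_2<1$, and it checks out.
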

Theorem~\ref{th:GCO_PI} is a concrete instance of Lemma~\ref{le:free_dilation} and provides the physical intuition behind the impossibility of a free implementation.
We present this intuition here, in summary, and in Appendix~\ref{ap:proofs1_2}, in detail. 
If a GCO (from a system to itself) has no fixed-point quantum state, it must be capable of increasing the energy of the system’s hottest mode, irrespective of how large that energy already is.
Conversely, if a GCO is freely dilatable, it is implemented by a Gaussian covariant unitary $\mathbf{U}^{\mathrm{PC}}$ as in Eq.~\eqref{eq: PI GCO def}, and Gaussian covariant unitaries (i.e. beam splitters and phase shifters) cannot further excite the hottest mode.
Consequently, when the system’s hottest mode is initially hotter than any mode of the ancilla, its energy cannot increase under a freely dilatable GCO.  
This contradicts the required energy‑increase of a GCO without a fixed point.

Lemma~\ref{le:free_dilation} and Theorem~\ref{th:GCO_PI} reveal that not all GCOs admit a free dilation. 
This is not the case for general time-translation covariant operations, where any covariant map (including infinite-dimensional ones) can be dilated using a symmetric auxiliary and a covariant unitary~\cite{Keyl99, MarvianThesis}.
Nor is it the case for general Gaussian operations, which all admit a dilation with a Gaussian auxiliary and a Gaussian unitary~\cite{SerafiniBook}.
These discrepancies follow from the intuition above: covariant unitaries and Gaussian unitaries can further excite the system’s hottest mode, whereas unitaries that are Gaussian \emph{and} covariant cannot.

Lemma~\ref{le:free_dilation} and Theorem~\ref{th:GCO_PI} also sharply contrast with recent results that every GCO admits a Gaussian-covariant dilation~\cite{Koukoulekidis2025}. 
The main difference lies in the allowed auxiliaries.
In our framework frequencies are required to be positive; Ref.~\cite{Koukoulekidis2025} allows auxiliary modes with negative frequencies, making the energy spectrum unbounded from below. 
Although unphysical in the Schr\"{o}dinger picture, negative effective frequencies can appear in a rotating (modulation) frame~\cite{PhysRevA.31.3068,PhysRevX.2.031016}. 
In that picture two bosonic modes can have effective frequencies $\pm \omega$ and two-mode squeezing unitaries become covariant under the modulated free Hamiltonian.
The most salient examples are Gaussian amplifiers~\cite{Holevo2015GaugeCov}.
They increase the system's energy regardless of the initial state and thus have no fixed-point.
Theorem~\ref{th:GCO_PI} then implies that they do not admit a free dilation; however, within the framework of Ref.~\cite{Koukoulekidis2025}, they are freely dilatable. 
The main difference is that a two‑mode‑squeezing unitary---available when negative‑frequency modes are allowed---can further excite the hottest mode, unlike passive linear unitaries.
In this case, the conserved quantity associated with the time‑translation symmetry is no longer the original free Hamiltonian but the modulated one.
Consequently, implementations of covariant operations defined in the modulation picture generally require external energy source (e.g. a strong pump beam at high frequency~\cite{PhysRevA.31.3068}) and are therefore not thermodynamically free.
In contrast, $\mathbf{U}^{\mathrm{PC}}$ in Definition~\ref{def: free dilation} is both covariant and energy-preserving, making it more relevant for thermodynamic considerations.

We can attempt to measure the resource cost required for implementing non-freely dilatable (but resource non-generating) GCOs. 
The cost can be measured as the amount of resource that must be supplied in the auxiliary state to implement the desired operation; such a framework is standard in quantifying thermodynamic~\cite{Bennett82, Faist15, Faist18} and asymmetry cost~\cite{Tajima20, Tajima22, Tajima25GP}. 
In the case of GCOs, Lemma~\ref{le:free_dilation_app} (Appendix~\ref{ap:proofs1_2}) shows that any auxiliary state (non‑Gaussian and/or asymmetric) cannot implement a GCO without a free dilation using Gaussian covariant unitaries.
In other words, the cost of implementing such operations is infinite.

On the other hand, \emph{Gaussian thermal operations (GTO)}~\cite{Serafini20, Narasimhachar2021GTO} constitute an important class of freely dilatable GCOs. 
A GTO is defined as a freely dilatable GCO with the constraint that the ancilla $\rho_{R}$ is a Gibbs state at ambient temperature $\beta^{-1}$, i.e., $\rho_{R} = e^{- \beta H_R}/\tr[{e^{-\beta H_R}}]$.
It is the Gaussian analogue of thermal operations~\cite{Janzing00_TO} defined as channels implementable with an energy-preserving unitary and a Gibbs state ancilla.
Thermal operations exactly constitute the set of freely dilatable operations in thermodynamic theories, yet they are notoriously hard to characterize, as are the state transformations resulting from them.
This motivated the definition of a slightly larger set known as enhanced thermal operations~\cite{EnTO2015}.
Enhanced thermal operations are defined axiomatically as the set of covariant operations that preserve the Gibbs state at the ambient temperature, and are therefore easier to characterize~\cite{Gour2018EnTO}.
Unfortunately, the set of enhanced thermal operations is strictly larger than that of thermal operations

Unfortunately, the set of enhanced thermal operations is strictly larger than that of thermal operations at the level of state transformations, i.e. there exists a pair of states that an enhanced thermal operation can map, while no thermal operation can achieve even an approximate version of this transformation~\cite{Ding21}.
This separation is stronger than the one at the channel level, which only implies that some enhanced thermal operations are not thermal operations and does not rule out the possibility that every state transformation achievable by an enhanced thermal operation could be reproduced by a different thermal operation.

The \emph{Gaussian analogue of enhanced thermal operations (GEnTO)} can be defined similarly, as the subset of GCO that preserves a Gibbs state.
Surprisingly, we show that every GEnTO is freely dilatable using thermodynamically free and Gaussian resources.

\begin{theorem}\label{th:gento_gto}
A Gaussian enhanced thermal operation is always a Gaussian thermal operation.
\end{theorem}
A more technical statement would be the following. 
Let $\mE_{\mathrm{GEnTO}}$ be a GEnTO on system $S$.
Then there exist an auxiliary Gibbs state $\gamma_R=\frac{e^{-\beta H_R}}{\tr[e^{-\beta H_R}]}$ and a unitary GCO $\mathbf{U}^{\mathrm{PC}}$, such that $\mE_{\mathrm{GEnTO}}(\cdot)=\tr_R[\mathbf{U}^{\mathrm{PC}}(\cdot\otimes\gamma_R)\mathbf{U}^{\mathrm{PC}\dagger}]$.
See Theorem~\ref{th:gento_gto_app}, Appendix~\ref{ap: EnTO} for a proof. 
This result contrasts with the scenario without the Gaussian constraint, where a gap exists between thermal operations and enhanced thermal operations at the level of state transformations. 
Theorem~\ref{th:gento_gto} indicates that imposing the Gaussian-preserving constraint on both classes removes this gap and, moreover, establishes a stronger equivalence at the level of channels.

Consider the simplest case of a single-mode GCO $(A,B)$.
If it has a physical fixed-point $(\vec{\alpha},\mu,\chi)$, it also fixes a Gaussian symmetric state $(0,\mu,0)$, which is a single-mode Gibbs state at temperature $\beta(\mu)=\frac{1}{\omega}\ln\frac{\mu-1/2}{\mu+1/2}$.
Thus Theorem~\ref{th:gento_gto} implies that any single-mode GCO with a fixed-point $(\vec{\alpha},\mu,\chi)$ is a GTO implemented at temperature $\beta(\mu)$, and consequently, is freely dilatable.

Unlike the single-mode scenario, the existence of a fixed-point is no longer sufficient to guarantee physical implementability for multi-mode systems, as shown in Theorem~\ref{th:GCO_PI}. 
Nevertheless, Theorem~\ref{th:gento_gto} guarantees that if a Gibbs state is one of the fixed-points, a GCO is freely dilatable. 

\emph{Completely non-extensive monotones}---%
We have so far focused on the properties of GCO channels themselves; below we analyze second moments transformations $(\mu,\chi)\stackrel{ \mathrm{GCO}}{\longrightarrow}(\mu',\chi')$ induced by these channels.
For this purpose, we identify a pair of monotones.

\begin{definition}\label{def: Sl monotones}
	Let $(\mu,\chi)$ be the second moment of a quantum state. 
	$\Sl_{\pm}(\mu,\chi)$ are defined as 
	\begin{align}\label{eq: Sl monotone def}
		\Sl_{\pm}(\mu,\chi) \coloneqq \boldsymbol{\sigma}_{1}\left[ \left(\mu^{*}\pm\frac{\iden}{2}\right)^{-\frac{1}{2}}\chi\ \left(\mu\pm\frac{\iden}{2}\right)^{-\frac{1}{2}}\right],
	\end{align}
	where $(\cdot)^{-\frac{1}{2}}$ denotes the pseudoinverse of $(\cdot)^{\frac{1}{2}}$ and $\boldsymbol{\sigma}_{1}[\cdot]$ denotes the largest singular value.
\end{definition}

Intuitively, $\Sl_{\pm}(\mu,\chi)$ quantify the asymmetry of a state: they increase when $\chi$ becomes more prominent with $\mu$ fixed.
Yet, they do not capture all the aspects of asymmetry.
In particular, while quantifying the asymmetry in the second moment (type-2 asymmetry), they are independent of the first moment and thus fail to capture the asymmetry induced by displacements (type-1 asymmetry).
It has been found that type-1 and type-2 asymmetry are not interconvertible via GCO~\cite{Koukoulekidis2025}.
Here, we focus solely on type-2 asymmetry.

We now prove several important properties of $\Sl_{\pm}$ functions.
\begin{itemize}
	\item[(P1)]  Finiteness and faithfulness: $0\leq \Sl_\pm(\mu,\chi)<\infty$ with $\Sl_{\pm}(\mu,\chi) = 0$ if and only if $\chi = 0$. 
	\item [(P2)] Monotonicity: $\Sl_{\pm}(\mu,\chi) \geq \Sl_{\pm}(\mu',\chi')$ whenever $(\mu,\chi)\toGCO(\mu',\chi')$.
	\item [(P3)] Complete non-extensiveness: $\Sl_{\pm}(\mu_{1}\oplus\mu_{2},\chi_{1}\oplus\chi_{2}) = \max\{\Sl_{\pm}(\mu_{1},\chi_{1}),\, \Sl_{\pm}(\mu_{2},\chi_{2})\}$.
\end{itemize}
(P1--3) are proved in Proposition~\ref{proposition: kernels inclusion}, Lemmas~\ref{le:monotones_app} and~\ref{le:monotones-composite}, respectively.
Note that $(\mu_{1}\oplus\mu_{2},\chi_{1}\oplus\chi_{2})$ represents the second moment of a tensor product of two states each with the second moments $(\mu_{1},\chi_{1})$ and $(\mu_{2},\chi_{2})$.

Among the properties, the complete non-extensiveness of (P3) is rare and entails surprising consequences for state transformations. 
Importantly, it rules out any multi-copy or distillation processes achieving higher $\Sl_{\pm}$ values. 
To see this, suppose that $(\oplus_{i=1}^{N}\mu,\oplus_{i=1}^{N}\chi)\toGCO(\oplus_{i=1}^{M}\mu',\oplus_{i=1}^{M}\chi')$. 
(P2) and (P3) imply that $\Sl_{\pm}(\mu,\chi) \geq  \Sl_{\pm}(\mu',\chi')$, regardless of $N$ and $M$.
Therefore, arbitrarily many copies of a state with $(\mu,\chi)$ cannot yield even a single copy of a state with $(\mu',\chi')$ if $\Sl_{\pm}(\mu',\chi')>\Sl_{\pm}(\mu,\chi)$.
Similar no-go theorems have been established for various tasks when the allowed operations are restricted to be Gaussian, including distillation of entanglement~\cite{Eisert2002_GaussianDistillation} and squeezing~\cite{Lami2018_GaussianQRT}, quantum error correction~\cite{Niset2009_GaussianQEC}, and quantum sensing~\cite{hnhp-jhr2}.
Yet, our result on asymmetry is especially extreme:
while the marginal asymptotic distillation rate of asymmetry diverges for almost all initial states in the finite-dimensional (and thus non-Gaussian) case~\cite{Shiraishi2024_ArbitraryAmplification}, the complete non-extensiveness of $\Sl_{\pm}$ here implies that it is impossible to accumulate type-2 asymmetry across independent subsystems when the Gaussian condition is imposed.
Note that our monotones only forbid the distillation of type-2 asymmetry (concerning second moments) while type-1 asymmetry is known to be distillable via GCOs~\cite{Yadavalli2025}.

Finally, we study whether catalysts can relax state transformation conditions for GCOs.
Catalysts are auxiliary systems that participate in the process to activate otherwise prohibited transformations~\cite{JonathanP1999, LipkaBartosik2024_CatalysisReview}. 
For instance, even if the direct transformation $\rho_{S}\to\rho'_{S'}$ cannot be performed using a set of allowed operations, a transformation $\rho_{S}\otimes\tau_{C} \to \rho'_{S'}\otimes\tau_{C}$ (strict catalysis) or $\to \varrho_{S'C}$ such that $\Tr_{C}[\varrho_{S'C}] = \rho'_{S}$ and $\Tr_{S'}[\varrho_{S'C}] = \tau_{C}$ (correlated catalysis) is often achievable with an allowed operation. 
In both strict and correlated catalysis, the catalyst state $\tau_{C}$ is (marginally) preserved and can be reused for an indefinite number of times. 
Correlated catalysis, in particular, is found to relax almost all state transformation conditions in various resource theories, including the theory of athermality~\cite{Shiraishi2021_GP,Son2024hierarchy,  Shiraishi2025GPC, shiraishi2025recoverysecondlawfully}, nonuniformity~\cite{Wilming2022correlationsin}, entanglement (for pure states)~\cite{Kondra2021Entanglement}.
In the case of $\mathrm{U}(1)$-asymmetry, i.e. when covariant operations (not necessarily Gaussian) are allowed, correlated catalysis even enables arbitrary amplification of the asymmetry~\cite{Ding21_Asymmetry, Shiraishi2024_ArbitraryAmplification, Kondra2024_CoherenceAsymmetry}.
Remarkably, correlated catalysis loses its power of amplifying the type-2 asymmetry when Gaussianity is imposed. 
\begin{itemize}
	\item[(P4)] $\Sl_{\pm}(\mu,\chi)$ are monotonic under (correlated) catalytic transformations.    
\end{itemize}
A full technical statement and its proof are in Lemma~\ref{le: no catalysis}.

\emph{Discussions}---%
In this work, we studied the interplay between Gaussianity and time-translation covariance, two properties well-studied separately but rarely together.
Our main technical results are the characterization of Gaussian covariant operations.
We fully characterized the subset with a free dilation, the structure of operations with a fixed-point, and monotone functions that never increase after an operation.
We also provide a useful decomposition for operations with an asymmetric fixed-point as Lemma~\ref{le: fixed point decomposition} in Section~\ref{ap:proofs1_2}. 
These characterizations will be useful for future studies on Gaussian covariant operations, which encompass numerous important operations in optics. 
In particular, we expect the new, completely non-extensive monotones to serve as simple and powerful tools for studying general state transformation problems. 

The significance of our results is not limited to Gaussian covariant operations. 
By contrasting general time-translation‑symmetric quantum operations and thermal processes with their Gaussian subset, we obtain three main contrasts arising from Gaussianity.
First, we identify a subset of Gaussian covariant operations that cannot be dilated into any free unitary, indicating a discrepancy between axiomatically defined and operationally defined sets. 
This separation is reminiscent of similar ones that appear when additional restrictions such as Markovianity~\cite{Spaventa2022MTO, Korzekwa2022MTO} or locality~\cite{Marvian22Locality} are imposed.
Second, we prove that asymptotic or catalytic advantages vanish for single-shot restrictions given by the monotones $\Sl_{\pm}$ introduced here. 
This mirrors the impossibility of Gaussian resource distillation and broadcasting in various setups~\cite{Eisert2002_GaussianDistillation, Lami2018_GaussianQRT, Chatterjee2025}.
Finally, we demonstrate that Gaussianity can remove the separation between axiomatically and operationally defined sets by proving that Gaussian thermal operations are identical to Gaussian enhanced thermal operations. 
Such closure of the gap between thermal and enhanced thermal operations has not been observed at the channel level in other settings including those allowing catalytic channels~\cite{Son2024RC}.

What are the reasons behind these contrasts?
We outline a few possible explanations.
Given that Gaussian thermal operations are Markovian~\cite{Serafini20}, one might ask whether Markovianity is the decisive factor.
Ref.~\cite{Haagerup2011} gives a negative answer by investigating the Markovian versions of thermal operations and enhanced thermal operations and explicitly constructing a counterexample showing the gap between the two. 
This is a clear sign that other aspects of Gaussianity are more crucial.
Another physical intuition is that Gaussian operations, generated by quadratic Hamiltonian, decompose into two-mode interactions, unlike general symmetric operations~\cite{Marvian22Locality}.
This locality might underlie the absence of asymptotic/catalytic advantages and the absence of separation between GEnTO and GTO.
Further work on the complete characterization of asymptotic and catalytic transformations beyond our single‑mode and second‑moment results would help test this intuition.

We finally point out two promising avenues for future work. 
First is the calculation of approximate implementation cost of GCOs.
We established that the exact implementation cost diverges for some GCOs including amplifiers, yet they are used in practice. 
It means that with sufficiently large cost, GCOs without free dilation can be approximated well. 
Determining the trade-off between approximation accuracy and implementation cost would impact theorists and practitioners alike.
Second is the better understanding of $\Sl_{\pm}$ monotones. 
Definition~\ref{def: Sl monotones} suggests that a single mode, corresponding to the vector achieving the optimal ratio of $\chi$ and $\mu$, determines the monotone values, but it is not clear what this mode is for general multi-mode systems. 
We anticipate that better insight into this optimal mode should illuminate hidden structure in Gaussian states.

\emph{Acknowledgements}---%
The work of the NTU team is supported by the start-up grant of the Nanyang Assistant Professorship at the Nanyang Technological University in Singapore, and by the National Research Foundation, Singapore through the National Quantum Office, hosted in A*STAR, under its Centre for Quantum Technologies Funding Initiative (S24Q2d0009). 
The work of the Ulm team is supported by the DFG via QuantERA project ExTRaQT (grant no. 499241080) and the ERC Synergy grant HyperQ (grant no. 856432). 
This project was initiated as a result of discussions during the conference Quantum Resources 2023 held in Singapore.

\bibliography{ref.bib}

@article{Ding21,
	author = {Ding, Yuqiang and Ding, Feng and Hu, Xueyuan},
	date-added = {2025-12-19 21:24:35 +0800},
	date-modified = {2025-12-19 21:24:46 +0800},
	doi = {10.1103/PhysRevA.103.052214},
	issue = {5},
	journal = {Phys. Rev. A},
	month = {May},
	numpages = {9},
	pages = {052214},
	publisher = {American Physical Society},
	title = {Exploring the gap between thermal operations and enhanced thermal operations},
	url = {https://link.aps.org/doi/10.1103/PhysRevA.103.052214},
	volume = {103},
	year = {2021},
	bdsk-url-1 = {https://link.aps.org/doi/10.1103/PhysRevA.103.052214},
	bdsk-url-2 = {https://doi.org/10.1103/PhysRevA.103.052214}}

@article{Smirne16,
	author = {Smirne, Andrea and Ko\l{}ody\ifmmode \acute{n}\else \'{n}\fi{}ski, Jan and Huelga, Susana F. and Demkowicz-Dobrza\ifmmode \acute{n}\else \'{n}\fi{}ski, Rafa\l{}},
	date-added = {2025-12-19 21:23:06 +0800},
	date-modified = {2025-12-19 21:24:09 +0800},
	doi = {10.1103/PhysRevLett.116.120801},
	issue = {12},
	journal = {Phys. Rev. Lett.},
	month = {Mar},
	numpages = {6},
	pages = {120801},
	publisher = {American Physical Society},
	title = {Ultimate Precision Limits for Noisy Frequency Estimation},
	url = {https://link.aps.org/doi/10.1103/PhysRevLett.116.120801},
	volume = {116},
	year = {2016},
	bdsk-url-1 = {https://link.aps.org/doi/10.1103/PhysRevLett.116.120801},
	bdsk-url-2 = {https://doi.org/10.1103/PhysRevLett.116.120801}}

@article{Hamilton17,
	author = {Hamilton, Craig S. and Kruse, Regina and Sansoni, Linda and Barkhofen, Sonja and Silberhorn, Christine and Jex, Igor},
	date-added = {2025-12-19 21:20:46 +0800},
	date-modified = {2025-12-19 21:21:00 +0800},
	doi = {10.1103/PhysRevLett.119.170501},
	issue = {17},
	journal = {Phys. Rev. Lett.},
	month = {Oct},
	numpages = {5},
	pages = {170501},
	publisher = {American Physical Society},
	title = {Gaussian Boson Sampling},
	url = {https://link.aps.org/doi/10.1103/PhysRevLett.119.170501},
	volume = {119},
	year = {2017},
	bdsk-url-1 = {https://link.aps.org/doi/10.1103/PhysRevLett.119.170501},
	bdsk-url-2 = {https://doi.org/10.1103/PhysRevLett.119.170501}}

@article{Son2024hierarchy,
	author = {Jeongrak Son and Nelly H Y Ng},
	date-added = {2025-11-20 23:05:29 +0900},
	date-modified = {2025-11-20 23:05:29 +0900},
	doi = {10.1088/2058-9565/ad7ef5},
	journal = {Quantum Sci. Technol.},
	month = {oct},
	number = {1},
	pages = {015011},
	publisher = {IOP Publishing},
	title = {A hierarchy of thermal processes collapses under catalysis},
	url = {https://dx.doi.org/10.1088/2058-9565/ad7ef5},
	volume = {10},
	year = {2024},
	bdsk-url-1 = {https://dx.doi.org/10.1088/2058-9565/ad7ef5}}

@article{Kwon2018,
	author = {Kwon, Hyukjoon and Jeong, Hyunseok and Jennings, David and Yadin, Benjamin and Kim, M. S.},
	date-added = {2025-11-19 15:21:46 +0900},
	date-modified = {2025-11-19 15:21:51 +0900},
	doi = {10.1103/PhysRevLett.120.150602},
	issue = {15},
	journal = {Phys. Rev. Lett.},
	month = {Apr},
	numpages = {6},
	pages = {150602},
	publisher = {American Physical Society},
	title = {Clock--Work Trade-Off Relation for Coherence in Quantum Thermodynamics},
	url = {https://link.aps.org/doi/10.1103/PhysRevLett.120.150602},
	volume = {120},
	year = {2018},
	bdsk-url-1 = {https://link.aps.org/doi/10.1103/PhysRevLett.120.150602},
	bdsk-url-2 = {https://doi.org/10.1103/PhysRevLett.120.150602}}

@article{Korzekwa2016,
	author = {Korzekwa, Kamil and Lostaglio, Matteo and Oppenheim, Jonathan and Jennings, David},
	date-added = {2025-11-19 15:21:09 +0900},
	date-modified = {2025-11-19 15:21:22 +0900},
	doi = {10.1088/1367-2630/18/2/023045},
	journal = {New J. Phys.},
	number = {2},
	pages = {023045},
	title = {The extraction of work from quantum coherence},
	url = {https://doi.org/10.1088/1367-2630/18/2/023045},
	volume = {18},
	year = {2016},
	bdsk-url-1 = {https://doi.org/10.1088/1367-2630/18/2/023045}}

@misc{Yadavalli2025,
	archiveprefix = {arXiv},
	author = {Shiv Akshar Yadavalli and Iman Marvian},
	date-added = {2025-11-18 22:04:00 +0900},
	date-modified = {2025-11-18 22:04:10 +0900},
	eprint = {2409.05974},
	primaryclass = {quant-ph},
	title = {Optimal Distillation of Coherent States with Phase-Insensitive Operations},
	url = {https://arxiv.org/abs/2409.05974},
	year = {2025},
	bdsk-url-1 = {https://arxiv.org/abs/2409.05974}}

@inproceedings{Aaronson10,
	abstract = {We give new evidence that quantum computers -- moreover, rudimentary quantum computers built entirely out of linear-optical elements -- cannot be efficiently simulated by classical computers. In particular, we define a model of computation in which identical photons are generated, sent through a linear-optical network, then nonadaptively measured to count the number of photons in each mode. This model is not known or believed to be universal for quantum computation, and indeed, we discuss the prospects for realizing the model using current technology. On the other hand, we prove that the model is able to solve sampling problems and search problems that are classically intractable under plausible assumptions. Our first result says that, if there exists a polynomial-time classical algorithm that samples from the same probability distribution as a linear-optical network, then P#P=BPPNP, and hence the polynomial hierarchy collapses to the third level. Unfortunately, this result assumes an extremely accurate simulation.Our main result suggests that even an approximate or noisy classical simulation would already imply a collapse of the polynomial hierarchy. For this, we need two unproven conjectures: the Permanent-of-Gaussians Conjecture, which says that it is #P-hard to approximate the permanent of a matrix A of independent N(0,1) Gaussian entries, with high probability over A; and the Permanent Anti-Concentration Conjecture, which says that |Per(A)|>=√(n!)poly(n) with high probability over A. We present evidence for these conjectures, both of which seem interesting even apart from our application.This paper does not assume knowledge of quantum optics. Indeed, part of its goal is to develop the beautiful theory of noninteracting bosons underlying our model, and its connection to the permanent function, in a self-contained way accessible to theoretical computer scientists.},
	address = {New York, NY, USA},
	author = {Aaronson, Scott and Arkhipov, Alex},
	booktitle = {Proceedings of the Forty-Third Annual ACM Symposium on Theory of Computing},
	date-added = {2025-11-18 21:57:56 +0900},
	date-modified = {2025-11-18 21:58:09 +0900},
	doi = {10.1145/1993636.1993682},
	isbn = {9781450306911},
	keywords = {#p, BGP, bosons, linear optics, permanent, polynomial hierarchy, random self-reducibility, sampling},
	location = {San Jose, California, USA},
	numpages = {10},
	pages = {333--342},
	publisher = {Association for Computing Machinery},
	series = {STOC '11},
	title = {The computational complexity of linear optics},
	url = {https://doi.org/10.1145/1993636.1993682},
	year = {2011},
	bdsk-url-1 = {https://doi.org/10.1145/1993636.1993682}}

@article{Giovannetti04,
	author = {Giovannetti, Vittorio and Guha, Saikat and Lloyd, Seth and Maccone, Lorenzo and Shapiro, Jeffrey H.},
	date-added = {2025-11-18 21:51:00 +0900},
	date-modified = {2025-11-18 21:51:09 +0900},
	doi = {10.1103/PhysRevA.70.032315},
	issue = {3},
	journal = {Phys. Rev. A},
	month = {Sep},
	numpages = {14},
	pages = {032315},
	publisher = {American Physical Society},
	title = {Minimum output entropy of bosonic channels: A conjecture},
	url = {https://link.aps.org/doi/10.1103/PhysRevA.70.032315},
	volume = {70},
	year = {2004},
	bdsk-url-1 = {https://link.aps.org/doi/10.1103/PhysRevA.70.032315},
	bdsk-url-2 = {https://doi.org/10.1103/PhysRevA.70.032315}}

@article{Giovannetti04Cap,
	author = {Giovannetti, V. and Guha, S. and Lloyd, S. and Maccone, L. and Shapiro, J. H. and Yuen, H. P.},
	date-added = {2025-11-18 21:49:47 +0900},
	date-modified = {2025-11-18 21:50:00 +0900},
	doi = {10.1103/PhysRevLett.92.027902},
	issue = {2},
	journal = {Phys. Rev. Lett.},
	month = {Jan},
	numpages = {4},
	pages = {027902},
	publisher = {American Physical Society},
	title = {Classical Capacity of the Lossy Bosonic Channel: The Exact Solution},
	url = {https://link.aps.org/doi/10.1103/PhysRevLett.92.027902},
	volume = {92},
	year = {2004},
	bdsk-url-1 = {https://link.aps.org/doi/10.1103/PhysRevLett.92.027902},
	bdsk-url-2 = {https://doi.org/10.1103/PhysRevLett.92.027902}}

@article{Holevo01,
	author = {Holevo, A. S. and Werner, R. F.},
	date-added = {2025-11-18 21:49:09 +0900},
	date-modified = {2025-11-18 21:49:16 +0900},
	doi = {10.1103/PhysRevA.63.032312},
	issue = {3},
	journal = {Phys. Rev. A},
	month = {Feb},
	numpages = {14},
	pages = {032312},
	publisher = {American Physical Society},
	title = {Evaluating capacities of bosonic Gaussian channels},
	url = {https://link.aps.org/doi/10.1103/PhysRevA.63.032312},
	volume = {63},
	year = {2001},
	bdsk-url-1 = {https://link.aps.org/doi/10.1103/PhysRevA.63.032312},
	bdsk-url-2 = {https://doi.org/10.1103/PhysRevA.63.032312}}

@article{DePalma14,
	author = {De Palma, Giacomo and Trevisan, Dario and Giovannetti, Vittorio},
	date-added = {2025-11-18 21:47:44 +0900},
	date-modified = {2025-11-18 21:47:55 +0900},
	doi = {10.1109/TIT.2016.2621748},
	journal = {IEEE Transactions on Information Theory},
	keywords = {Entropy;Attenuators;Quantum mechanics;Probability distribution;Attenuation;Photonics;Random variables;Gaussian quantum channels;Gaussian quantum attenuator;thinning;von Neumann entropy;isoperimetric inequality},
	number = {1},
	pages = {728-737},
	title = {Gaussian States Minimize the Output Entropy of the One-Mode Quantum Attenuator},
	volume = {63},
	year = {2017},
	bdsk-url-1 = {https://doi.org/10.1109/TIT.2016.2621748}}

@article{DePalma15,
	author = {De Palma, Giacomo and Trevisan, Dario and Giovannetti, Vittorio},
	date-added = {2025-11-18 21:44:35 +0900},
	date-modified = {2025-11-18 21:45:03 +0900},
	doi = {10.1109/TIT.2016.2547426},
	journal = {IEEE Transactions on Information Theory},
	number = {5},
	pages = {2895-2906},
	title = {Passive States Optimize the Output of Bosonic Gaussian Quantum Channels},
	volume = {62},
	year = {2016},
	bdsk-url-1 = {https://doi.org/10.1109/TIT.2016.2547426}}

@misc{Lie2025TO,
	archiveprefix = {arXiv},
	author = {Seok Hyung Lie and Jeongrak Son and Paul Boes and Nelly H. Y. Ng and Henrik Wilming},
	date-added = {2025-11-18 15:01:24 +0900},
	date-modified = {2025-11-18 15:01:28 +0900},
	eprint = {2507.16637},
	primaryclass = {quant-ph},
	title = {Thermal operations from informational equilibrium},
	url = {https://arxiv.org/abs/2507.16637},
	year = {2025},
	bdsk-url-1 = {https://arxiv.org/abs/2507.16637}}

@article{Lostaglio15Symmetry,
	author = {Lostaglio, Matteo and Korzekwa, Kamil and Jennings, David and Rudolph, Terry},
	date-added = {2025-11-18 14:59:03 +0900},
	date-modified = {2025-11-18 14:59:13 +0900},
	doi = {10.1103/PhysRevX.5.021001},
	issue = {2},
	journal = {Phys. Rev. X},
	month = {Apr},
	numpages = {11},
	pages = {021001},
	publisher = {American Physical Society},
	title = {Quantum Coherence, Time-Translation Symmetry, and Thermodynamics},
	url = {https://link.aps.org/doi/10.1103/PhysRevX.5.021001},
	volume = {5},
	year = {2015},
	bdsk-url-1 = {https://link.aps.org/doi/10.1103/PhysRevX.5.021001},
	bdsk-url-2 = {https://doi.org/10.1103/PhysRevX.5.021001}}

@article{Marvian22Locality,
	abstract = {According to a fundamental result in quantum computing, any unitary transformation on a composite system can be generated using so-called 2-local unitaries that act only on two subsystems. Beyond its importance in quantum computing, this result can also be regarded as a statement about the dynamics of systems with local Hamiltonians: although locality puts various constraints on the short-term dynamics, it does not restrict the possible unitary evolutions that a composite system with a general local Hamiltonian can experience after a sufficiently long time. Here we show that this universality does not remain valid in the presence of conservation laws and global continuous symmetries such as U(1) and SU(2). In particular, we show that generic symmetric unitaries cannot be implemented, even approximately, using local symmetric unitaries. Based on this no-go theorem, we propose a method for experimentally probing the locality of interactions in nature. In the context of quantum thermodynamics, our results mean that generic energy-conserving unitary transformations on a composite system cannot be realized solely by combining local energy-conserving unitaries on the components. We show how this can be circumvented via catalysis.},
	author = {Marvian, Iman},
	date = {2022/03/01},
	date-added = {2025-11-18 14:24:56 +0900},
	date-modified = {2025-11-18 14:25:16 +0900},
	doi = {10.1038/s41567-021-01464-0},
	id = {Marvian2022},
	isbn = {1745-2481},
	journal = {Nat. Phys.},
	number = {3},
	pages = {283--289},
	title = {Restrictions on realizable unitary operations imposed by symmetry and locality},
	url = {https://doi.org/10.1038/s41567-021-01464-0},
	volume = {18},
	year = {2022},
	bdsk-url-1 = {https://doi.org/10.1038/s41567-021-01464-0}}

@article{Caves1982,
	author = {Caves, Carlton M.},
	date-added = {2025-11-16 00:26:30 +0900},
	date-modified = {2025-11-16 00:26:34 +0900},
	doi = {10.1103/PhysRevD.26.1817},
	issue = {8},
	journal = {Phys. Rev. D},
	month = {Oct},
	numpages = {0},
	pages = {1817--1839},
	publisher = {American Physical Society},
	title = {Quantum limits on noise in linear amplifiers},
	url = {https://link.aps.org/doi/10.1103/PhysRevD.26.1817},
	volume = {26},
	year = {1982},
	bdsk-url-1 = {https://link.aps.org/doi/10.1103/PhysRevD.26.1817},
	bdsk-url-2 = {https://doi.org/10.1103/PhysRevD.26.1817}}

@article{Haus1962,
	author = {Haus, H. A. and Mullen, J. A.},
	date-added = {2025-11-16 00:25:22 +0900},
	date-modified = {2025-11-16 00:25:27 +0900},
	doi = {10.1103/PhysRev.128.2407},
	issue = {5},
	journal = {Phys. Rev.},
	month = {Dec},
	numpages = {0},
	pages = {2407--2413},
	publisher = {American Physical Society},
	title = {Quantum Noise in Linear Amplifiers},
	url = {https://link.aps.org/doi/10.1103/PhysRev.128.2407},
	volume = {128},
	year = {1962},
	bdsk-url-1 = {https://link.aps.org/doi/10.1103/PhysRev.128.2407},
	bdsk-url-2 = {https://doi.org/10.1103/PhysRev.128.2407}}

@article{Woods2019_Clocks,
	abstract = {Processes such as quantum computation, or the evolution of quantum cellular automata, are typically described by a unitary operation implemented by an external observer. In particular, an interaction is generally turned on for a precise amount of time, using a classical clock. A fully quantum mechanical description of such a device would include a quantum description of the clock whose state is generally disturbed because of the back-reaction on it. Such a description is needed if we wish to consider finite-sized autonomous quantum machines requiring no external control. The extent of the back-reaction has implications on how small the device can be, on the length of time the device can run, and is required if we want to understand what a fully quantum mechanical treatment of an observer would look like. Here, we consider the implementation of a unitary by a finite-sized device and show that the back-reaction on it can be made exponentially small in the device's dimension while its energy only increases linearly with dimension. As a result, an autonomous quantum machine need only be of modest size and energy. We are also able to solve a long-standing open problem by using a finite-sized quantum clock to approximate the continuous evolution of an idealised clock. The result has implications for how well quantum devices can be controlled and on the equivalence of different paradigms of control.},
	author = {Woods, Mischa P. and Silva, Ralph and Oppenheim, Jonathan},
	date-added = {2025-11-11 18:31:00 +0800},
	date-modified = {2025-11-11 18:31:35 +0800},
	doi = {10.1007/s00023-018-0736-9},
	journal = {Ann. Henri Poincar{\'e}},
	number = {1},
	pages = {125--218},
	title = {Autonomous Quantum Machines and Finite-Sized Clocks},
	url = {https://doi.org/10.1007/s00023-018-0736-9},
	volume = {20},
	year = {2019},
	bdsk-url-1 = {https://doi.org/10.1007/s00023-018-0736-9}}

@article{Erker2017Clock,
	author = {Erker, Paul and Mitchison, Mark T. and Silva, Ralph and Woods, Mischa P. and Brunner, Nicolas and Huber, Marcus},
	date-added = {2025-11-11 18:13:53 +0800},
	date-modified = {2025-11-11 18:14:01 +0800},
	doi = {10.1103/PhysRevX.7.031022},
	issue = {3},
	journal = {Phys. Rev. X},
	month = {Aug},
	numpages = {12},
	pages = {031022},
	publisher = {American Physical Society},
	title = {Autonomous Quantum Clocks: Does Thermodynamics Limit Our Ability to Measure Time?},
	url = {https://link.aps.org/doi/10.1103/PhysRevX.7.031022},
	volume = {7},
	year = {2017},
	bdsk-url-1 = {https://link.aps.org/doi/10.1103/PhysRevX.7.031022},
	bdsk-url-2 = {https://doi.org/10.1103/PhysRevX.7.031022}}

@article{Kondra2021Entanglement,
	author = {Kondra, Tulja Varun and Datta, Chandan and Streltsov, Alexander},
	date-added = {2025-11-10 21:33:12 +0800},
	date-modified = {2025-11-10 21:33:22 +0800},
	doi = {10.1103/PhysRevLett.127.150503},
	issue = {15},
	journal = {Phys. Rev. Lett.},
	month = {Oct},
	numpages = {6},
	pages = {150503},
	publisher = {American Physical Society},
	title = {Catalytic Transformations of Pure Entangled States},
	url = {https://link.aps.org/doi/10.1103/PhysRevLett.127.150503},
	volume = {127},
	year = {2021},
	bdsk-url-1 = {https://link.aps.org/doi/10.1103/PhysRevLett.127.150503},
	bdsk-url-2 = {https://doi.org/10.1103/PhysRevLett.127.150503}}

@article{Shiraishi2025GPC,
	author = {Shiraishi, Naoto},
	date-added = {2025-11-10 21:29:16 +0800},
	date-modified = {2025-11-10 21:29:24 +0800},
	doi = {10.1103/PhysRevLett.134.160402},
	issue = {16},
	journal = {Phys. Rev. Lett.},
	month = {Apr},
	numpages = {6},
	pages = {160402},
	publisher = {American Physical Society},
	title = {Quantum Thermodynamics with Coherence: Covariant Gibbs-Preserving Operation Is Characterized by the Free Energy},
	url = {https://link.aps.org/doi/10.1103/PhysRevLett.134.160402},
	volume = {134},
	year = {2025},
	bdsk-url-1 = {https://link.aps.org/doi/10.1103/PhysRevLett.134.160402},
	bdsk-url-2 = {https://doi.org/10.1103/PhysRevLett.134.160402}}

@misc{shiraishi2025recoverysecondlawfully,
		archiveprefix = {arXiv},
		author = {Naoto Shiraishi and Ryuji Takagi},
		eprint = {2510.05642},
		primaryclass = {quant-ph},
		title = {Recovery of the second law in fully quantum thermodynamics},
		url = {https://arxiv.org/abs/2510.05642},
		year = {2025},
		bdsk-url-1 = {https://arxiv.org/abs/2510.05642}}

@article{Wilming2022correlationsin,
	author = {Wilming, Henrik},
	date-added = {2025-11-10 21:21:29 +0800},
	date-modified = {2025-11-10 21:21:29 +0800},
	doi = {10.22331/q-2022-11-10-858},
	issn = {2521-327X},
	journal = {{Quantum}},
	month = nov,
	pages = {858},
	publisher = {{Verein zur F{\"{o}}rderung des Open Access Publizierens in den Quantenwissenschaften}},
	title = {Correlations in typicality and an affirmative solution to the exact catalytic entropy conjecture},
	url = {https://doi.org/10.22331/q-2022-11-10-858},
	volume = {6},
	year = {2022},
	bdsk-url-1 = {https://doi.org/10.22331/q-2022-11-10-858}}

@article{Shiraishi2021_GP,
	author = {Shiraishi, Naoto and Sagawa, Takahiro},
	date-added = {2025-11-10 21:18:24 +0800},
	date-modified = {2025-11-10 21:18:35 +0800},
	doi = {10.1103/PhysRevLett.126.150502},
	issue = {15},
	journal = {Phys. Rev. Lett.},
	month = {Apr},
	numpages = {6},
	pages = {150502},
	publisher = {American Physical Society},
	title = {Quantum Thermodynamics of Correlated-Catalytic State Conversion at Small Scale},
	url = {https://link.aps.org/doi/10.1103/PhysRevLett.126.150502},
	volume = {126},
	year = {2021},
	bdsk-url-1 = {https://link.aps.org/doi/10.1103/PhysRevLett.126.150502},
	bdsk-url-2 = {https://doi.org/10.1103/PhysRevLett.126.150502}}

@article{Shiraishi2024_ArbitraryAmplification,
	author = {Shiraishi, Naoto and Takagi, Ryuji},
	date-added = {2025-11-03 13:15:48 +0800},
	date-modified = {2025-11-03 13:16:01 +0800},
	doi = {10.1103/PhysRevLett.132.180202},
	issue = {18},
	journal = {Phys. Rev. Lett.},
	month = {May},
	numpages = {7},
	pages = {180202},
	publisher = {American Physical Society},
	title = {Arbitrary Amplification of Quantum Coherence in Asymptotic and Catalytic Transformation},
	url = {https://link.aps.org/doi/10.1103/PhysRevLett.132.180202},
	volume = {132},
	year = {2024},
	bdsk-url-1 = {https://link.aps.org/doi/10.1103/PhysRevLett.132.180202},
	bdsk-url-2 = {https://doi.org/10.1103/PhysRevLett.132.180202}}

@article{Kondra2024_CoherenceAsymmetry,
	author = {Kondra, Tulja Varun and Ganardi, Ray and Streltsov, Alexander},
	date-added = {2025-11-03 13:15:03 +0800},
	date-modified = {2025-11-03 13:15:28 +0800},
	doi = {10.1103/PhysRevLett.132.200201},
	issue = {20},
	journal = {Phys. Rev. Lett.},
	month = {May},
	numpages = {6},
	pages = {200201},
	publisher = {American Physical Society},
	title = {Coherence Manipulation in Asymmetry and Thermodynamics},
	url = {https://link.aps.org/doi/10.1103/PhysRevLett.132.200201},
	volume = {132},
	year = {2024},
	bdsk-url-1 = {https://link.aps.org/doi/10.1103/PhysRevLett.132.200201},
	bdsk-url-2 = {https://doi.org/10.1103/PhysRevLett.132.200201}}

@article{LipkaBartosik2024_CatalysisReview,
	author = {Lipka-Bartosik, Patryk and Wilming, Henrik and Ng, Nelly H. Y.},
	date-added = {2025-11-03 13:14:04 +0800},
	date-modified = {2025-11-03 13:14:32 +0800},
	doi = {10.1103/RevModPhys.96.025005},
	issue = {2},
	journal = {Rev. Mod. Phys.},
	month = {Jun},
	numpages = {56},
	pages = {025005},
	publisher = {American Physical Society},
	title = {Catalysis in quantum information theory},
	url = {https://link.aps.org/doi/10.1103/RevModPhys.96.025005},
	volume = {96},
	year = {2024},
	bdsk-url-1 = {https://link.aps.org/doi/10.1103/RevModPhys.96.025005},
	bdsk-url-2 = {https://doi.org/10.1103/RevModPhys.96.025005}}

@article{Lami2018_GaussianQRT,
	author = {Lami, Ludovico and Regula, Bartosz and Wang, Xin and Nichols, Rosanna and Winter, Andreas and Adesso, Gerardo},
	date-added = {2025-11-03 13:13:00 +0800},
	date-modified = {2025-11-03 13:13:11 +0800},
	doi = {10.1103/PhysRevA.98.022335},
	issue = {2},
	journal = {Phys. Rev. A},
	month = {Aug},
	numpages = {13},
	pages = {022335},
	publisher = {American Physical Society},
	title = {Gaussian quantum resource theories},
	url = {https://link.aps.org/doi/10.1103/PhysRevA.98.022335},
	volume = {98},
	year = {2018},
	bdsk-url-1 = {https://link.aps.org/doi/10.1103/PhysRevA.98.022335},
	bdsk-url-2 = {https://doi.org/10.1103/PhysRevA.98.022335}}

@article{Eisert2002_GaussianDistillation,
	author = {Eisert, J. and Scheel, S. and Plenio, M. B.},
	date-added = {2025-11-03 13:10:52 +0800},
	date-modified = {2025-11-03 13:11:07 +0800},
	doi = {10.1103/PhysRevLett.89.137903},
	issue = {13},
	journal = {Phys. Rev. Lett.},
	month = {Sep},
	numpages = {4},
	pages = {137903},
	publisher = {American Physical Society},
	title = {Distilling Gaussian States with Gaussian Operations is Impossible},
	url = {https://link.aps.org/doi/10.1103/PhysRevLett.89.137903},
	volume = {89},
	year = {2002},
	bdsk-url-1 = {https://link.aps.org/doi/10.1103/PhysRevLett.89.137903},
	bdsk-url-2 = {https://doi.org/10.1103/PhysRevLett.89.137903}}

@article{Niset2009_GaussianQEC,
	author = {Niset, Julien and Fiur\'a\ifmmode \check{s}\else \v{s}\fi{}ek, Jarom\'{\i}r and Cerf, Nicolas J.},
	date-added = {2025-11-03 13:09:33 +0800},
	date-modified = {2025-11-03 13:09:53 +0800},
	doi = {10.1103/PhysRevLett.102.120501},
	issue = {12},
	journal = {Phys. Rev. Lett.},
	numpages = {4},
	pages = {120501},
	title = {No-Go Theorem for Gaussian Quantum Error Correction},
	url = {https://link.aps.org/doi/10.1103/PhysRevLett.102.120501},
	volume = {102},
	year = {2009},
	bdsk-url-1 = {https://link.aps.org/doi/10.1103/PhysRevLett.102.120501},
	bdsk-url-2 = {https://doi.org/10.1103/PhysRevLett.102.120501}}

@misc{Son2024RC,
	archiveprefix = {arXiv},
	author = {Jeongrak Son and Ray Ganardi and Shintaro Minagawa and Francesco Buscemi and Seok Hyung Lie and Nelly H. Y. Ng},
	date-added = {2025-10-03 01:46:02 +0800},
	date-modified = {2025-10-03 01:46:08 +0800},
	eprint = {2412.06900},
	primaryclass = {quant-ph},
	title = {Robust Catalysis and Resource Broadcasting: The Possible and the Impossible},
	url = {https://arxiv.org/abs/2412.06900},
	year = {2024},
	bdsk-url-1 = {https://arxiv.org/abs/2412.06900}}

@article{Gour2018EnTO,
	abstract = {What does it mean for one quantum process to be more disordered than another? Interestingly, this apparently abstract question arises naturally in a wide range of areas such as information theory, thermodynamics, quantum reference frames, and the resource theory of asymmetry. Here we use a quantum-mechanical generalization of majorization to develop a framework for answering this question, in terms of single-shot entropies, or equivalently, in terms of semi-definite programs. We also investigate some of the applications of this framework, and remarkably find that, in the context of quantum thermodynamics it provides the first complete set of necessary and sufficient conditions for arbitrary quantum state transformations under thermodynamic processes, which rigorously accounts for quantum-mechanical properties, such as coherence. Our framework of generalized thermal processes extends thermal operations, and is based on natural physical principles, namely, energy conservation, the existence of equilibrium states, and the requirement that quantum coherence be accounted for thermodynamically.},
	author = {Gour, Gilad and Jennings, David and Buscemi, Francesco and Duan, Runyao and Marvian, Iman},
	date = {2018/12/17},
	date-added = {2025-09-30 17:11:23 +0800},
	date-modified = {2025-09-30 17:11:23 +0800},
	doi = {10.1038/s41467-018-06261-7},
	id = {Gour2018},
	isbn = {2041-1723},
	journal = {Nat. Commun.},
	number = {1},
	pages = {5352},
	title = {Quantum majorization and a complete set of entropic conditions for quantum thermodynamics},
	url = {https://doi.org/10.1038/s41467-018-06261-7},
	volume = {9},
	year = {2018},
	bdsk-url-1 = {https://doi.org/10.1038/s41467-018-06261-7}}

@article{EnTO2015,
	author = {\ifmmode \acute{C}\else \'{C}\fi{}wikli\ifmmode \acute{n}\else \'{n}\fi{}ski, Piotr and Studzi\ifmmode \acute{n}\else \'{n}\fi{}ski, Micha\l{} and Horodecki, Micha\l{} and Oppenheim, Jonathan},
	date-added = {2025-09-30 17:02:41 +0800},
	date-modified = {2025-09-30 17:02:53 +0800},
	doi = {10.1103/PhysRevLett.115.210403},
	issue = {21},
	journal = {Phys. Rev. Lett.},
	month = {Nov},
	numpages = {5},
	pages = {210403},
	publisher = {American Physical Society},
	title = {Limitations on the Evolution of Quantum Coherences: Towards Fully Quantum Second Laws of Thermodynamics},
	url = {https://link.aps.org/doi/10.1103/PhysRevLett.115.210403},
	volume = {115},
	year = {2015},
	bdsk-url-1 = {https://link.aps.org/doi/10.1103/PhysRevLett.115.210403},
	bdsk-url-2 = {https://doi.org/10.1103/PhysRevLett.115.210403}}

@article{Janzing00_TO,
	abstract = {Landauer's principle states that the erasure of one bit of information requires thefree energy kT ln 2. We argue that the reliability of the bit erasure process isbounded by the accuracy inherent in the statistical state of the energy source(``the resources'') driving the process. We develop a general framework describingthe ``thermodynamic worth''of the resources with respect to reliable bit erasureor good cooling. This worth turns out to be given by the distinguishability of theresource's state from its equilibrium state in the sense of a statistical inferenceproblem. Accordingly, Kullback---Leibler relative information is a decisivequantity for the ``worth''of the resource's state. Due to the asymmetry of relativeinformation, the reliability of the erasure process is bounded rather by the relativeinformation of the equilibrium state with respect to the actual state than by therelative information of the actual state with respect to the equilibrium state (whichis the free energy up to constants).},
	author = {Janzing, D. and Wocjan, P. and Zeier, R. and Geiss, R. and Beth, Th.},
	date-added = {2025-09-30 16:30:01 +0800},
	date-modified = {2025-09-30 16:30:01 +0800},
	doi = {10.1023/A:1026422630734},
	journal = {Int. J. Th. Phys.},
	number = {12},
	pages = {2717--2753},
	title = {Thermodynamic Cost of Reliability and Low Temperatures: Tightening {L}andauer's Principle and the Second Law},
	volume = {39},
	year = {2000},
	bdsk-url-1 = {https://doi.org/10.1023/A:1026422630734}}

@article{Narasimhachar2021GTO,
	abstract = {Thermodynamic resources, beyond their well-known usefulness in work extraction and other thermodynamic tasks, are often important also in tasks that are not evidently thermodynamic. Here we develop a framework for identifying such resources in diverse applications of bosonic continuous-variable systems. Introducing the class of bosonic linear thermal operations to model operationally feasible processes, we apply this model to identify uniquely quantum properties of bosonic states that refine classical notions of thermodynamic resourcefulness. Among these are (1) a suite of temperature-like quantities generalizing the equilibrium temperature to quantum, non-equilibrium scenarios; (2) signal-to-noise ratios quantifying a system's capacity to carry information in phase-space displacement; and (3) well-established non-classicality measures quantifying the resolution in sensing and parameter estimation tasks.},
	author = {Narasimhachar, Varun and Assad, Syed and Binder, Felix C. and Thompson, Jayne and Yadin, Benjamin and Gu, Mile},
	date = {2021/01/27},
	date-added = {2025-09-30 16:24:02 +0800},
	date-modified = {2025-09-30 16:24:02 +0800},
	doi = {10.1038/s41534-020-00342-6},
	id = {Narasimhachar2021},
	isbn = {2056-6387},
	journal = {npj Quantum Inf.},
	number = {1},
	pages = {9},
	title = {Thermodynamic resources in continuous-variable quantum systems},
	url = {https://doi.org/10.1038/s41534-020-00342-6},
	volume = {7},
	year = {2021},
	bdsk-url-1 = {https://doi.org/10.1038/s41534-020-00342-6}}

@article{Holevo2015GaugeCov,
	author = {Holevo, A S},
	copyright = {http://iopscience.iop.org/info/page/text-and-data-mining},
	date-added = {2025-09-29 19:28:14 +0800},
	date-modified = {2025-09-29 19:28:31 +0800},
	doi = {10.1070/RM2015v070n02ABEH004949},
	issn = {0036-0279, 1468-4829},
	journal = {Russ. Math. Surv.},
	month = apr,
	number = {2},
	pages = {331--367},
	title = {Gaussian optimizers and the additivity problem in quantum information theory},
	url = {https://www.mathnet.ru/eng/rm9634},
	urldate = {2025-09-29},
	volume = {70},
	year = {2015},
	bdsk-url-1 = {https://www.mathnet.ru/eng/rm9634},
	bdsk-url-2 = {https://doi.org/10.1070/RM2015v070n02ABEH004949}}

@article{Frank2017GaugeCov,
	author = {Frank, Rupert L. and Lieb, Elliott H.},
	date-added = {2025-09-29 19:28:14 +0800},
	date-modified = {2025-09-29 21:46:00 +0800},
	doi = {10.1063/1.4989809},
	journal = {J. Math. Phys.},
	number = {6},
	pages = {062204},
	title = {Norms of quantum {Gaussian} multi-mode channels},
	volume = {58},
	year = {2017},
	bdsk-url-1 = {https://pubs.aip.org/jmp/article/58/6/062204/945550/Norms-of-quantum-Gaussian-multi-mode-channels},
	bdsk-url-2 = {https://doi.org/10.1063/1.4989809}}

@article{Heinosaari2010GaugeCov,
	abstract = {We investigate the semigroup structure of bosonic Gaussian quantum channels. Par-ticular focus lies on the sets of channels which are divisible, idempotent or Markovian(in the sense of either belonging to one-parameter semigroups or being infinitesimal di-visible). We show that the non-compactness of the set of Gaussian channels allows forremarkable differences when comparing the semigroup structure with that of finite di-mensional quantum channels. For instance, every irreversible Gaussian channel is shownto be divisible in spite of the existence of Gaussian channels which are not infinitesimaldivisible. A simpler and known consequence of non-compactness is the lack of generatorsfor certain reversible channels. Along the way we provide new representations for classesof Gaussian channels: as matrix semigroup, complex valued positive matrices or in termsof a simple form describing almost all one-parameter semigroups.},
	address = {Paramus, NJ},
	author = {Heinosaari, Teiko and Holevo, Alexander S. and Wolf, Michael M.},
	date-added = {2025-09-29 19:22:40 +0800},
	date-modified = {2025-11-18 21:53:44 +0900},
	issn = {1533-7146},
	issue_date = {July 2010},
	journal = {Quantum Info. Comput.},
	number = {7},
	numpages = {17},
	pages = {619--635},
	publisher = {Rinton Press, Incorporated},
	title = {The semigroup structure of gaussian channels},
	url = {https://dl.acm.org/doi/10.5555/2011373.2011377},
	volume = {10},
	year = {2010},
	bdsk-url-1 = {https://dl.acm.org/doi/10.5555/2011373.2011377}}

@article{Spaventa2022MTO,
	author = {Spaventa, Giovanni and Huelga, Susana F. and Plenio, Martin B.},
	date-added = {2025-09-29 15:48:55 +0800},
	date-modified = {2025-09-29 15:49:02 +0800},
	doi = {10.1103/PhysRevA.105.012420},
	issue = {1},
	journal = {Phys. Rev. A},
	month = {Jan},
	numpages = {10},
	pages = {012420},
	publisher = {American Physical Society},
	title = {Capacity of non-Markovianity to boost the efficiency of molecular switches},
	url = {https://link.aps.org/doi/10.1103/PhysRevA.105.012420},
	volume = {105},
	year = {2022},
	bdsk-url-1 = {https://link.aps.org/doi/10.1103/PhysRevA.105.012420},
	bdsk-url-2 = {https://doi.org/10.1103/PhysRevA.105.012420}}

@article{Korzekwa2022MTO,
	author = {Korzekwa, Kamil and Lostaglio, Matteo},
	date-added = {2025-09-29 15:45:51 +0800},
	date-modified = {2025-09-29 15:46:01 +0800},
	doi = {10.1103/PhysRevLett.129.040602},
	issue = {4},
	journal = {Phys. Rev. Lett.},
	month = {Jul},
	numpages = {7},
	pages = {040602},
	publisher = {American Physical Society},
	title = {Optimizing Thermalization},
	url = {https://link.aps.org/doi/10.1103/PhysRevLett.129.040602},
	volume = {129},
	year = {2022},
	bdsk-url-1 = {https://link.aps.org/doi/10.1103/PhysRevLett.129.040602},
	bdsk-url-2 = {https://doi.org/10.1103/PhysRevLett.129.040602}}

@article{Tajima25GP,
	author = {Tajima, Hiroyasu and Takagi, Ryuji},
	date-added = {2025-08-22 00:04:48 +0800},
	date-modified = {2025-08-22 00:04:55 +0800},
	doi = {10.1103/PhysRevLett.134.170201},
	issue = {17},
	journal = {Phys. Rev. Lett.},
	month = {Apr},
	numpages = {7},
	pages = {170201},
	publisher = {American Physical Society},
	title = {Gibbs-Preserving Operations Requiring Infinite Amount of Quantum Coherence},
	url = {https://link.aps.org/doi/10.1103/PhysRevLett.134.170201},
	volume = {134},
	year = {2025},
	bdsk-url-1 = {https://link.aps.org/doi/10.1103/PhysRevLett.134.170201},
	bdsk-url-2 = {https://doi.org/10.1103/PhysRevLett.134.170201}}

@misc{Tajima22,
	archiveprefix = {arXiv},
	author = {Hiroyasu Tajima and Ryuji Takagi and Yui Kuramochi},
	date-added = {2025-08-22 00:04:14 +0800},
	date-modified = {2025-08-22 00:04:32 +0800},
	eprint = {2206.11086},
	primaryclass = {quant-ph},
	title = {Universal trade-off structure between symmetry, irreversibility, and quantum coherence in quantum processes},
	url = {https://arxiv.org/abs/2206.11086},
	year = {2025},
	bdsk-url-1 = {https://arxiv.org/abs/2206.11086}}

@article{Tajima20,
	author = {Tajima, Hiroyasu and Shiraishi, Naoto and Saito, Keiji},
	date-added = {2025-08-22 00:03:47 +0800},
	date-modified = {2025-08-22 00:03:54 +0800},
	doi = {10.1103/PhysRevResearch.2.043374},
	issue = {4},
	journal = {Phys. Rev. Res.},
	month = {Dec},
	numpages = {28},
	pages = {043374},
	publisher = {American Physical Society},
	title = {Coherence cost for violating conservation laws},
	url = {https://link.aps.org/doi/10.1103/PhysRevResearch.2.043374},
	volume = {2},
	year = {2020},
	bdsk-url-1 = {https://link.aps.org/doi/10.1103/PhysRevResearch.2.043374},
	bdsk-url-2 = {https://doi.org/10.1103/PhysRevResearch.2.043374}}

@article{Bennett82,
	abstract = {Computers may be thought of as engines for transforming free energy into waste heat and mathematical work. Existing electronic computers dissipate energy vastly in excess of the mean thermal energykT, for purposes such as maintaining volatile storage devices in a bistable condition, synchronizing and standardizing signals, and maximizing switching speed. On the other hand, recent models due to Fredkin and Toffoli show that in principle a computer could compute at finite speed with zero energy dissipation and zero error. In these models, a simple assemblage of simple but idealized mechanical parts (e.g., hard spheres and flat plates) determines a ballistic trajectory isomorphic with the desired computation, a trajectory therefore not foreseen in detail by the builder of the computer. In a classical or semiclassical setting, ballistic models are unrealistic because they require the parts to be assembled with perfect precision and isolated from thermal noise, which would eventually randomize the trajectory and lead to errors. Possibly quantum effects could be exploited to prevent this undesired equipartition of the kinetic energy. Another family of models may be called Brownian computers, because they allow thermal noise to influence the trajectory so strongly that it becomes a random walk through the entire accessible (low-potential-energy) portion of the computer's configuration space. In these computers, a simple assemblage of simple parts determines a low-energy labyrinth isomorphic to the desired computation, through which the system executes its random walk, with a slight drift velocity due to a weak driving force in the direction of forward computation. In return for their greater realism, Brownian models are more dissipative than ballistic ones: the drift velocity is proportional to the driving force, and hence the energy dissipated approaches zero only in the limit of zero speed. In this regard Brownian models resemble the traditional apparatus of thermodynamic thought experiments, where reversibility is also typically only attainable in the limit of zero speed. The enzymatic apparatus of DNA replication, transcription, and translation appear to be nature's closest approach to a Brownian computer, dissipating 20--100kT per step. Both the ballistic and Brownian computers require a change in programming style: computations must be renderedlogically reversible, so that no machine state has more than one logical predecessor. In a ballistic computer, the merging of two trajectories clearly cannot be brought about by purely conservative forces; in a Brownian computer, any extensive amount of merging of computation paths would cause the Brownian computer to spend most of its time bogged down in extraneous predecessors of states on the intended path, unless an extra driving force ofkTln2 were applied (and dissipated) at each merge point. The mathematical means of rendering a computation logically reversible (e.g., creation and annihilation of a history file) will be discussed. The old Maxwell's demon problem is discussed in the light of the relation between logical and thermodynamic reversibility: the essential irreversible step, which prevents the demon from breaking the second law, is not the making of a measurement (which in principle can be done reversibly) but rather the logically irreversible act of erasing the record of one measurement to make room for the next. Converse to the rule that logically irreversible operations on data require an entropy increase elsewhere in the computer is the fact that a tape full of zeros, or one containing some computable pseudorandom sequence such as pi, has fuel value and can be made to do useful thermodynamic work as it randomizes itself. A tape containing an algorithmically random sequence lacks this ability.},
	author = {Bennett, Charles H.},
	date = {1982/12/01},
	date-added = {2025-08-22 00:01:58 +0800},
	date-modified = {2025-08-22 00:01:58 +0800},
	doi = {10.1007/BF02084158},
	isbn = {1572-9575},
	journal = {Int. J. Theor. Phys.},
	number = {12},
	pages = {905--940},
	title = {The thermodynamics of computation---a review},
	url = {https://doi.org/10.1007/BF02084158},
	volume = {21},
	year = {1982},
	bdsk-url-1 = {https://doi.org/10.1007/BF02084158}}

@article{Faist18,
	author = {Faist, Philippe and Renner, Renato},
	date-added = {2025-08-22 00:01:04 +0800},
	date-modified = {2025-08-22 00:01:09 +0800},
	doi = {10.1103/PhysRevX.8.021011},
	issue = {2},
	journal = {Phys. Rev. X},
	month = {Apr},
	numpages = {19},
	pages = {021011},
	publisher = {American Physical Society},
	title = {Fundamental Work Cost of Quantum Processes},
	url = {https://link.aps.org/doi/10.1103/PhysRevX.8.021011},
	volume = {8},
	year = {2018},
	bdsk-url-1 = {https://link.aps.org/doi/10.1103/PhysRevX.8.021011},
	bdsk-url-2 = {https://doi.org/10.1103/PhysRevX.8.021011}}

@article{Faist15,
	abstract = {Irreversible information processing cannot be carried out without some inevitable thermodynamical work cost. This fundamental restriction, known as Landauer's principle, is increasingly relevant today, as the energy dissipation of computing devices impedes the development of their performance. Here we determine the minimal work required to carry out any logical process, for instance a computation. It is given by the entropy of the discarded information conditional to the output of the computation. Our formula takes precisely into account the statistically fluctuating work requirement of the logical process. It enables the explicit calculation of practical scenarios, such as computational circuits or quantum measurements. On the conceptual level, our result gives a precise and operational connection between thermodynamic and information entropy, and explains the emergence of the entropy state function in macroscopic thermodynamics.},
	author = {Faist, Philippe and Dupuis, Fr{\'e}d{\'e}ric and Oppenheim, Jonathan and Renner, Renato},
	date = {2015/07/07},
	date-added = {2025-08-22 00:00:18 +0800},
	date-modified = {2025-08-22 00:00:44 +0800},
	doi = {10.1038/ncomms8669},
	id = {Faist2015},
	isbn = {2041-1723},
	journal = {Nat. Commun.},
	number = {1},
	pages = {7669},
	title = {The minimal work cost of information processing},
	url = {https://doi.org/10.1038/ncomms8669},
	volume = {6},
	year = {2015},
	bdsk-url-1 = {https://doi.org/10.1038/ncomms8669}}

@phdthesis{MarvianThesis,
	address = {Waterloo, Ontario, Canada},
	author = {Iman Marvian Mashhad},
	date-added = {2025-08-21 21:59:35 +0800},
	date-modified = {2025-08-21 22:00:28 +0800},
	school = {University of Waterloo},
	title = {Symmetry, Asymmetry and Quantum Information},
	year = {2012}}

@book{Gour25,
	address = {Cambridge},
	author = {Gilad Gour},
	date-added = {2025-08-21 20:18:35 +0800},
	date-modified = {2025-08-21 20:22:39 +0800},
	doi = {10.1017/9781009560870},
	publisher = {Cambridge University Press},
	title = {Quantum Resource Theories},
	year = {2025},
	bdsk-url-1 = {https://doi.org/10.1017/9781009560870}}

@article{Keyl99,
	abstract = {We consider quantum devices for turning a finite number N of d-level quantum systems in the same unknown pure state σ into M\&gt;N systems of the same kind, in an approximation of the M-fold tensor product of the state σ. In a previous paper it was shown that this problem has a unique optimal solution, when the quality of the output is judged by arbitrary measurements, involving also the correlations between the clones. We show in this paper, that if the quality judgment is based solely on measurements of single output clones, there is again a unique optimal cloning device, which coincides with the one found previously.},
	author = {Keyl, M. and Werner, R. F.},
	date-added = {2025-08-21 18:09:36 +0800},
	date-modified = {2025-11-18 21:55:00 +0900},
	doi = {10.1063/1.532887},
	issn = {0022-2488},
	journal = {J. Math. Phys.},
	month = {07},
	number = {7},
	pages = {3283-3299},
	title = {Optimal cloning of pure states, testing single clones},
	volume = {40},
	year = {1999},
	bdsk-url-1 = {https://doi.org/10.1063/1.532887}}

@article{Haagerup2011,
	abstract = {We study factorization and dilation properties of Markov maps between von Neumann algebras equipped with normal faithful states, i.e., completely positive unital maps which preserve the given states and also intertwine their automorphism groups. The starting point for our investigation has been the question of existence of non-factorizable Markov maps, as formulated by C. Anantharaman-Delaroche. We provide simple examples of non-factorizable Markov maps on {\$}{\$}{\{}M{\_}n({$\backslash$}mathbb{\{}C{\}}){\}}{\$}{\$}for all n ≥3, as well as an example of a one-parameter semigroup (T(t))t≥0 of Markov maps on {\$}{\$}{\{}M{\_}4({$\backslash$}mathbb{\{}C{\}}){\}}{\$}{\$}such that T(t) fails to be factorizable for all small values of t > 0. As applications, we solve in the negative an open problem in quantum information theory concerning an asymptotic version of the quantum Birkhoff conjecture, as well as we sharpen the existing lower bound estimate for the best constant in the noncommutative little Grothendieck inequality.},
	author = {Haagerup, Uffe and Musat, Magdalena},
	date = {2011/04/01},
	date-added = {2025-08-07 18:38:02 +0800},
	date-modified = {2025-08-07 18:38:14 +0800},
	doi = {10.1007/s00220-011-1216-y},
	id = {Haagerup2011},
	isbn = {1432-0916},
	journal = {Commun. Math. Phys.},
	number = {2},
	pages = {555--594},
	title = {Factorization and Dilation Problems for Completely Positive Maps on von Neumann Algebras},
	url = {https://doi.org/10.1007/s00220-011-1216-y},
	volume = {303},
	year = {2011},
	bdsk-url-1 = {https://doi.org/10.1007/s00220-011-1216-y}}

@book{SerafiniBook,
	author = {Serafini, Alessio},
	publisher = {CRC Press},
	title = {Quantum Continuous Variables: A Primer of Theoretical Methods},
	year = {2017}}

@article{Weedbrook12,
	author = {Weedbrook, Christian and Pirandola, Stefano and Garcia-Patron, Raul and Cerf, Nicolas J and Ralph, Timothy C and Shapiro, Jeffrey H and Lloyd, Seth},
	date-modified = {2025-11-14 21:39:40 +0800},
	doi = {https://doi.org/10.1103/RevModPhys.84.621},
	journal = {Rev. Mod. Phys.},
	number = {2},
	pages = {621},
	title = {Gaussian quantum information},
	volume = {84},
	year = {2012},
	bdsk-url-1 = {https://doi.org/10.1103/RevModPhys.84.621}}

@article{Bartlett07,
	author = {Bartlett, Stephen D. and Rudolph, Terry and Spekkens, Robert W.},
	doi = {10.1103/RevModPhys.79.555},
	issue = {2},
	journal = {Rev. Mod. Phys.},
	month = {Apr},
	numpages = {0},
	pages = {555--609},
	publisher = {American Physical Society},
	title = {Reference frames, superselection rules, and quantum information},
	url = {https://link.aps.org/doi/10.1103/RevModPhys.79.555},
	volume = {79},
	year = {2007},
	bdsk-url-1 = {https://link.aps.org/doi/10.1103/RevModPhys.79.555},
	bdsk-url-2 = {https://doi.org/10.1103/RevModPhys.79.555}}

@article{Gour08,
	author = {Gour, Gilad and Spekkens, Robert W},
	date-modified = {2025-11-18 21:55:35 +0900},
	doi = {10.1088/1367-2630/10/3/033023},
	issn = {1367-2630},
	journal = {New J. Phys.},
	month = mar,
	number = {3},
	pages = {033023},
	publisher = {IOP Publishing},
	title = {The resource theory of quantum reference frames: manipulations and monotones},
	url = {http://dx.doi.org/10.1088/1367-2630/10/3/033023},
	volume = {10},
	year = {2008},
	bdsk-url-1 = {http://dx.doi.org/10.1088/1367-2630/10/3/033023}}

@article{Giovannetti13,
	author = {Giovannetti, Vittorio and Holevo, Alexander S},
	date-modified = {2025-11-14 21:40:48 +0800},
	doi = {https://doi.org/10.1038/nphoton.2014.216},
	journal = {Nat. Photonics},
	pages = {796--800},
	title = {Ultimate classical communication rates of quantum optical channels},
	volume = {8},
	year = {2014},
	bdsk-url-1 = {https://doi.org/10.1038/nphoton.2014.216}}

@article{Lostaglio15,
	author = {Lostaglio, Matteo and Jennings, David and Rudolph, Terry},
	date-modified = {2025-11-14 21:38:54 +0800},
	doi = {https://doi.org/10.1038/ncomms7383},
	journal = {Nat. Commun.},
	pages = {6383},
	title = {Description of quantum coherence in thermodynamic processes requires constraints beyond free energy},
	volume = {6},
	year = {2015},
	bdsk-url-1 = {https://doi.org/10.1038/ncomms7383}}

@article{Serafini20,
	author = {Serafini, A. and Lostaglio, M. and Longden, S. and Shackerley-Bennett, U. and Hsieh, C.-Y. and Adesso, G.},
	doi = {10.1103/PhysRevLett.124.010602},
	issue = {1},
	journal = {Phys. Rev. Lett.},
	month = {Jan},
	numpages = {6},
	pages = {010602},
	publisher = {American Physical Society},
	title = {Gaussian Thermal Operations and The Limits of Algorithmic Cooling},
	url = {https://link.aps.org/doi/10.1103/PhysRevLett.124.010602},
	volume = {124},
	year = {2020},
	bdsk-url-1 = {https://link.aps.org/doi/10.1103/PhysRevLett.124.010602},
	bdsk-url-2 = {https://doi.org/10.1103/PhysRevLett.124.010602}}

@article{Yadin22,
	author = {Yadin, Benjamin and Jee, Hyejung H and Sparaciari, Carlo and Adesso, Gerardo and Serafini, Alessio},
	date-modified = {2025-11-18 21:54:25 +0900},
	doi = {10.1088/1751-8121/ac7e09},
	issn = {1751-8121},
	journal = {J. Phys. A: Math. Theor.},
	month = jul,
	number = {32},
	pages = {325301},
	publisher = {IOP Publishing},
	title = {Catalytic Gaussian thermal operations},
	url = {http://dx.doi.org/10.1088/1751-8121/ac7e09},
	volume = {55},
	year = {2022},
	bdsk-url-1 = {http://dx.doi.org/10.1088/1751-8121/ac7e09}}

@article{Ding21_Asymmetry,
	author = {Ding, Feng and Hu, Xueyuan and Fan, Heng},
	doi = {10.1103/PhysRevA.103.022403},
	issue = {2},
	journal = {Phys. Rev. A},
	month = {Feb},
	numpages = {11},
	pages = {022403},
	publisher = {American Physical Society},
	title = {Amplifying asymmetry with correlating catalysts},
	url = {https://link.aps.org/doi/10.1103/PhysRevA.103.022403},
	volume = {103},
	year = {2021},
	bdsk-url-1 = {https://link.aps.org/doi/10.1103/PhysRevA.103.022403},
	bdsk-url-2 = {https://doi.org/10.1103/PhysRevA.103.022403}}

@article{Clerk2010Amp,
	author = {Clerk, A. A. and Devoret, M. H. and Girvin, S. M. and Marquardt, Florian and Schoelkopf, R. J.},
	doi = {10.1103/RevModPhys.82.1155},
	issue = {2},
	journal = {Rev. Mod. Phys.},
	month = {Apr},
	numpages = {0},
	pages = {1155--1208},
	publisher = {American Physical Society},
	title = {Introduction to quantum noise, measurement, and amplification},
	url = {https://link.aps.org/doi/10.1103/RevModPhys.82.1155},
	volume = {82},
	year = {2010},
	bdsk-url-1 = {https://link.aps.org/doi/10.1103/RevModPhys.82.1155},
	bdsk-url-2 = {https://doi.org/10.1103/RevModPhys.82.1155}}

@article{Streltsov2017_Coherence,
	author = {Streltsov, Alexander and Adesso, Gerardo and Plenio, Martin B.},
	doi = {10.1103/RevModPhys.89.041003},
	issue = {4},
	journal = {Rev. Mod. Phys.},
	month = {Oct},
	numpages = {34},
	pages = {041003},
	publisher = {American Physical Society},
	title = {Colloquium: Quantum coherence as a resource},
	url = {https://link.aps.org/doi/10.1103/RevModPhys.89.041003},
	volume = {89},
	year = {2017},
	bdsk-url-1 = {https://link.aps.org/doi/10.1103/RevModPhys.89.041003},
	bdsk-url-2 = {https://doi.org/10.1103/RevModPhys.89.041003}}

@article{Marvian2014_Noether,
	author = {Marvian, Iman and Spekkens, Robert W},
	doi = {10.1038/ncomms4821},
	journal = {Nat. Commun.},
	number = {1},
	pages = {3821},
	title = {Extending Noether's theorem by quantifying the asymmetry of quantum states},
	url = {https://doi.org/10.1038/ncomms4821},
	volume = {5},
	year = {2014},
	bdsk-url-1 = {https://doi.org/10.1038/ncomms4821}}

@article{hnhp-jhr2,
	author = {Wang, Yunkai and Zhou, Sisi},
	doi = {10.1103/hnhp-jhr2},
	issue = {12},
	journal = {Phys. Rev. Lett.},
	month = {Sep},
	numpages = {7},
	pages = {120201},
	publisher = {American Physical Society},
	title = {Limitations of Gaussian Measurements in Quantum Imaging},
	url = {https://link.aps.org/doi/10.1103/hnhp-jhr2},
	volume = {135},
	year = {2025},
	bdsk-url-1 = {https://link.aps.org/doi/10.1103/hnhp-jhr2},
	bdsk-url-2 = {https://doi.org/10.1103/hnhp-jhr2}}

@misc{Koukoulekidis2025,
	archiveprefix = {arXiv},
	author = {Nikolaos Koukoulekidis and Iman Marvian},
	eprint = {2510.25719},
	primaryclass = {quant-ph},
	title = {Symmetry and Asymmetry in Bosonic Gaussian Systems: A Resource-Theoretic Framework},
	url = {https://arxiv.org/abs/2510.25719},
	year = {2025},
	bdsk-url-1 = {https://arxiv.org/abs/2510.25719}}

@article{JonathanP1999,
	author = {Jonathan, Daniel and Plenio, Martin B},
	date-modified = {2025-12-19 21:27:03 +0800},
	journal = {Phys. Rev. Lett.},
	number = {17},
	pages = {3566},
	title = {Entanglement-assisted local manipulation of pure quantum states},
	url = {https://doi.org/10.1103/PhysRevLett.83.3566},
	volume = {83},
	year = {1999},
	bdsk-url-1 = {https://doi.org/10.1103/PhysRevLett.83.3566}}

@article{PhysRevA.31.3068,
  title = {New formalism for two-photon quantum optics. I. Quadrature phases and squeezed states},
  author = {Caves, Carlton M. and Schumaker, Bonny L.},
  journal = {Phys. Rev. A},
  volume = {31},
  issue = {5},
  pages = {3068--3092},
  numpages = {0},
  year = {1985},
  month = {May},
  publisher = {American Physical Society},
  doi = {10.1103/PhysRevA.31.3068},
  url = {https://link.aps.org/doi/10.1103/PhysRevA.31.3068}
}

@article{PhysRevX.2.031016,
  title = {Evading Quantum Mechanics: Engineering a Classical Subsystem within a Quantum Environment},
  author = {Tsang, Mankei and Caves, Carlton M.},
  journal = {Phys. Rev. X},
  volume = {2},
  issue = {3},
  pages = {031016},
  numpages = {7},
  year = {2012},
  month = {Sep},
  publisher = {American Physical Society},
  doi = {10.1103/PhysRevX.2.031016},
  url = {https://link.aps.org/doi/10.1103/PhysRevX.2.031016}
}

@misc{Chatterjee2025,
      title={No-broadcasting of non-Gaussian states}, 
      author={Kaustav Chatterjee and Ulrik Lund Andersen},
      year={2025},
      eprint={2510.20971},
      archivePrefix={arXiv},
      primaryClass={quant-ph},
      url={https://arxiv.org/abs/2510.20971}, 
}

\newpage

\normalsize
\renewcommand{\theequation}{A\arabic{equation}}
\setcounter{equation}{0}  

\section*{End Matter}

\emph{Appendix A: Single-mode state transformations}---%
It is instructive to consider the simplest case of single-mode bosonic systems.
Note that the second moments $\mu$ and $\chi$, as well as the GCO matrices $A$ and $B$ are numbers in this case.
For a single-mode second moment $(\mu,\chi)$, our monotones are given as $\Sl_{\pm}(\mu,\chi) = \frac{|\chi|}{\mu \pm \frac{1}{2}}$ when $\mu>\frac{1}{2}$ and $\Sl_{\pm}(\mu,\chi) = 0$ when $\mu=\frac{1}{2}$, corresponding to the second moment of a vacuum state. 
In Figure~\ref{fig:single_mode_cone}, single-mode states are visualized in a two-dimensional plot where $\Sl_{-}$ corresponds to the slope of the line connecting the state of interest and the single mode vacuum, whereas $\Sl_{+}$ corresponds to the other slope from the $(-\frac{1}{2},0)$ point. 

Furthermore, the monotones $\Sl_{\pm}$ also become sufficient for the second moment transformation, i.e. they become a pair of complete monotones for the second moments under GCO. 
\begin{lemma}
	\label{le:single_mode_state}
	Suppose that second moments $(\mu,\chi)$ and $(\mu',\chi')$ represent a single-mode system.
    Then, $(\mu,\chi)\toGCO(\mu',\chi')$ if and only if 
	\begin{align}
		\Sl_{\pm}(\mu,\chi)\geq \Sl_{\pm}(\mu',\chi').\label{eq:slope}
	\end{align}
\end{lemma}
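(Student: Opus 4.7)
The plan is to reduce everything to the explicit $1\times 1$ parameterization of a GCO supplied by the characterization in Eqs.~\eqref{eq:1st_moment}--\eqref{eq:HUR_A_B}. For a single-mode system, both $A$ and $B$ collapse to scalars (call them $a\in\CC$ and $b\in\RR$), the second-moment action reads $\mu'=|a|^2\mu+b$ and $\chi'=(a^*)^2\chi$, and the cone condition~\eqref{eq:HUR_A_B} reduces to the scalar pair $b\geq\pm\tfrac12(1-|a|^2)$. In this notation the monotones become $\Sl_\pm(\mu,\chi)=|\chi|/(\mu\pm\tfrac12)$, with the pseudoinverse convention setting the value to zero at the vacuum $(\mu,\chi)=(\tfrac12,0)$. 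The necessity direction ($\Rightarrow$) is immediate from property (P2), so all the content lies in the converse.

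\textbf{Sufficiency by explicit construction.} In the generic case $\chi\neq 0$, I would write down $(a,b)$ directly: choose $|a|^2=|\chi'|/|\chi|$ and pick $\arg a$ so that $(a^*)^2\chi=\chi'$ (always possible, since only the phase of $(a^*)^2$ is prescribed), and then set $b:=\mu'-|a|^2\mu$. By construction this pair reproduces the required first- and second-moment equalities; the only nontrivial task is to check that $(a,b)$ lies in the GCO cone. Substituting and rearranging, the inequality $b\geq\tfrac12(1-|a|^2)$ is equivalent to $(\mu'-\tfrac12)\geq(|\chi'|/|\chi|)(\mu-\tfrac12)$. Dividing by $\mu-\tfrac12>0$---which is guaranteed by $\chi\neq 0$ together with the uncertainty relation $\mu^2-\tfrac14\geq|\chi|^2$---turns this into $\Sl_-(\mu,\chi)\geq\Sl_-(\mu',\chi')$. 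The same manipulation with the opposite sign, dividing instead by $\mu+\tfrac12>0$, gives $b\geq-\tfrac12(1-|a|^2)\Leftrightarrow\Sl_+(\mu,\chi)\geq\Sl_+(\mu',\chi')$. Thus the hypothesis~\eqref{eq:slope} supplies precisely the two inequalities needed to certify $(a,b)$ as a legitimate GCO, and this GCO realizes the desired transformation.

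\textbf{Edge cases and main obstacle.} The only case not covered above is $\chi=0$. Then $\Sl_\pm(\mu,\chi)=0$, so~\eqref{eq:slope} forces $\chi'=0$, and I would realize the transformation by the degenerate choice $a=0$, $b=\mu'$, for which the cone constraint reduces to $\mu'\geq\tfrac12$ and is automatic from the single-mode uncertainty inequality. I do not anticipate a serious obstacle: the substantive content of the lemma is the short algebraic identification of the two facets of the $(A,B)$-cone constraint with the two monotones $\Sl_\pm$. The only points that demand a little care are (i) confirming that $\arg a$ can always be tuned to synchronize the phases of $\chi$ and $\chi'$ (squaring loses a sign, so any phase is attainable) and (ii) handling the pseudoinverse at $\mu=\tfrac12$, which is harmless since that point forces $\chi=0$ and therefore belongs to the degenerate sub-case already treated.
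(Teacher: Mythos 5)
Your proof is correct and follows essentially the same route as the paper's: necessity from monotonicity (P2), and sufficiency by explicitly constructing the scalar GCO $(a,b)$ with $|a|^2=|\chi'/\chi|$, $b=\mu'-|a|^2\mu$ and observing that the two facets of the cone constraint $b\geq\pm\tfrac12(1-|a|^2)$ are exactly the two inequalities $\Sl_\pm(\mu,\chi)\geq\Sl_\pm(\mu',\chi')$. The only (immaterial) difference is in the degenerate $\chi=0$ case, where you use the replace-by-a-symmetric-state channel $a=0$, $b=\mu'$ while the paper uses an attenuator/amplifier $a_\pm=\sqrt{(\mu'\pm\tfrac12)/(\mu\pm\tfrac12)}$; both are valid GCOs.
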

\begin{proof}
	The `only if' part follows directly from monotonicity of $\Sl_{\pm}$.
	To prove the `if' part, we construct a GCO converting $(\mu,\chi)$ to $(\mu',\chi')$ when monotonicity is satisfied.
	
	Recall that the monotones $\Sl_{\pm}$ are always non-negative from (P1).
	If $\chi = 0$, the monotone $0 = \Sl_{\pm}(\mu,\chi) \geq \Sl_{\pm}(\mu',\chi')$; which means that $\chi' = 0$ must hold.
	Let
	\begin{align}
		a_{\pm}=\sqrt{\frac{\mu'\pm\frac{1}{2}}{\mu\pm\frac{1}{2}}},\ b_{\pm}=\pm\frac{1}{2}(a^2_\pm-1).
	\end{align}
	For $\mu'\geq \mu$, which include the case where $\mu=\frac{1}{2}$, $(\mu,0)$ can be transformed to $(\mu',0)$ by the GCO $(A,B)=(a_+,b_+)$; For $\mu'< \mu$, which implies $\mu>\frac{1}{2}$, the GCO $(A,B)=(a_-,b_-)$ transforms $(\mu,0)$ to $(\mu',0)$.
	The constraint $B\geq\pm\frac{1}{2}(1-|A|^2)$ is satisfied for both cases.
	
	If $\chi\neq0$, the second moment transformation by a GCO $(A,B)$ reads
	\begin{align}
		\mu' &= |A|^{2}\mu + B,\\
		\chi' &= A^{*2}\chi,
	\end{align}
	which can be achieved by $A=\sqrt{\chi^{\prime*}/\chi^*}$ and $B=\mu'-\mu \cdot |\chi'/\chi|$.
	Since
	\begin{align}
		B\pm\frac{1}{2}(1-|A|^2)
		&=  (\mu'\pm\frac{1}{2})-\left|\frac{\chi'}{\chi}\right|(\mu\pm\frac{1}{2})\\
		&=  (\mu'\pm\frac{1}{2})\left[1-\frac{\Sl_{\pm}(\mu',\chi')}{\Sl_{\pm}(\mu,\chi)}\right]\geq0.\nonumber
	\end{align}
	$(A,B)$ constitutes a proper GCO.
\end{proof}

The reachable single-mode states following Lemma~\ref{le:single_mode_state} can be represented as the blue colored region in Fig.~\ref{fig:single_mode_cone}.

\begin{figure}[h]
	\centering
	\includegraphics[width=\linewidth]{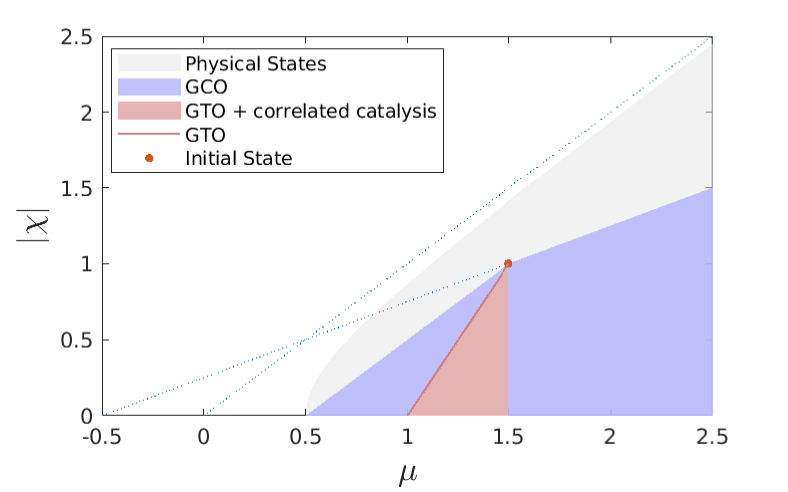}
	\caption{Single-mode state transformations for the second moment $(\mu,\chi)$.
		Colored parts represent the reachable states from a given initial state (red dot) via GCO (purple region), GTO (red line), and GTO with a correlated catalysis.}
	\label{fig:single_mode_cone}
\end{figure}

The completeness of $\Sl_{\pm}$ monotones also implies that correlated catalysts (however big and complicated they are) cannot offer any advantage for single-mode transformations. 
\begin{theorem}[No-catalysis]\label{thm: no catalysis} 
	Let $S$ be a single-mode system and $C$ be an $m$-mode catalyst system. 
    Suppose that the second moment of $S$ can be transformed as $(\mu_S,\chi_S)\to(\mu_S',\chi_S')$ with the help of a catalytic auxiliary state. 
    In other words, the transformation is possible via a joint GCO on $SC$ that preserves the second moment of the system $C$.
    Then, there is a GCO acting locally on $S$ achieving $(\mu_S,\chi_S)\toGCO(\mu_S',\chi_S')$. 
\end{theorem}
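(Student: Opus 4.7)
The plan is to reduce Theorem~\ref{thm: no catalysis} to the combination of two previously stated results: the completeness of the monotones $\Sl_\pm$ for single-mode transformations (Lemma~\ref{le:single_mode_state}) and the catalytic monotonicity (property~(P4), proved in Lemma~\ref{le: no catalysis}). Once both are in hand, the theorem drops out as a direct corollary.

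First I would identify the hypothesis of the theorem as an instance of correlated catalysis: a joint GCO on $SC$ maps the product second moment $(\mu_S\oplus\mu_C,\chi_S\oplus\chi_C)$ to a composite state whose $S$-marginal is $(\mu_S',\chi_S')$ and whose $C$-marginal is exactly $(\mu_C,\chi_C)$. Property~(P4) then immediately gives $\Sl_\pm(\mu_S,\chi_S)\geq \Sl_\pm(\mu_S',\chi_S')$.

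Next I would invoke Lemma~\ref{le:single_mode_state}: because $S$ is single-mode, these two monotone inequalities are not only necessary but also sufficient for the existence of a GCO acting locally on $S$ that implements $(\mu_S,\chi_S)\toGCO(\mu_S',\chi_S')$. Concatenating the two steps will complete the proof.

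The hard part lies upstream, in (P4) rather than in the present combination. Attempting to bound $\Sl_\pm(\mu_S',\chi_S')$ by applying only the channel-level monotonicity (P2) together with the complete non-extensiveness (P3) yields the weaker estimate $\Sl_\pm(\mu_S',\chi_S')\leq \max\{\Sl_\pm(\mu_S,\chi_S),\Sl_\pm(\mu_C,\chi_C)\}$, which is not tight enough when the catalyst carries more type-2 asymmetry than the system. Closing that gap will require exploiting the explicit catalyst-preservation equations for both $\mu_C$ and $\chi_C$ to constrain the off-diagonal blocks of the joint GCO, ultimately forbidding any net transfer of $\Sl_\pm$ from $C$ to $S$---a Gaussian-specific rigidity absent in the general $\mathrm{U}(1)$-asymmetry theory, where correlated catalysis famously enables arbitrary amplification.
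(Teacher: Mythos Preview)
Your proposal is correct and matches the paper's own proof essentially verbatim: the paper's argument for Theorem~\ref{thm: no catalysis} is precisely ``monotonicity under correlated catalysis (P4) combined with the completeness of $\Sl_\pm$ for single-mode transformations (Lemma~\ref{le:single_mode_state})''. Your additional remarks about why (P2)+(P3) alone are insufficient and the real work lives in (P4) are accurate commentary but not needed for the proof itself.
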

\begin{proof}
	The monotonicity of $\Sl_{\pm}$ under correlated catalytic transformation (P4) combined with the completeness of these functions (Lemma~\ref{le:single_mode_state}) implies the theorem. 
\end{proof}

\clearpage
\newpage
\setcounter{page}{1}
\title{Supplemental Materials for ``Time-translation symmetry and thermality in Gaussian operations"}
\maketitle
\onecolumngrid

\setcounter{secnumdepth}{3}
\setcounter{section}{0}
\renewcommand{\thesection}{S\arabic{section}}
\renewcommand{\thetheorem}{S.\arabic{theorem}}
\setcounter{theorem}{0}
\renewcommand{\thelemma}{S.\arabic{lemma}}
\setcounter{lemma}{0}
\renewcommand{\thecorollary}{S.\arabic{corollary}}
\setcounter{corollary}{0}
\renewcommand{\theequation}{S\arabic{equation}}
\setcounter{equation}{0} 
\renewcommand{\theremark}{S\arabic{remark}}
\setcounter{remark}{0} 
\renewcommand{\theproposition}{S\arabic{proposition}}
\setcounter{proposition}{0}

\section{Full characterization of Gaussian Covariant Channels}\label{ap:proofs1}
In this section we scrutinize the full characterization of Gaussian covariant channels. We begin with full generality, and first show in Proposition~\ref{pro: indep evols} that modes with non-resonant frequencies can be considered independently of each other. 

\begin{proposition}\label{pro: indep evols}
	Let $F^{\omega}(S)$ be a subsystem of $S$ consisting of all bosonic modes with frequency $\omega$. 
	Suppose that $\mE_{\mathrm{GCO}}$ is a GCO from an $\Inm$ modes system $I$ to an $\Outm$ modes system $O$. 
	Then we can always write $\mE_{\mathrm{GCO}} = \mE_{3}\circ\mE_{2}\circ\mE_{1}$, i.e. as a concatenation of three GCOs:
	\begin{enumerate}
		\item $\mE_{1} = \Tr_{\{F^{\omega}(I)\,|\, F^{\omega}(O)=\emptyset\}}$ tracing out all input modes that do not have the output modes with the same frequency,
		\item $\mE_{2} = \bigoplus_{\{\omega \,|\, F^{\omega}(I),F^{\omega}(O) \neq \emptyset\}} \mE_{\omega}$, where each $\mE_{\omega}$ is a GCO from $F^{\omega}(I)$ to $F^{\omega}(O)$ when both are non-empty, and
		\item $\mE_{3}(\cdot) = \cdot\bigotimes_{\{F^{\omega}(O)\,|\, F^{\omega}(I)=\emptyset\}} \sigma_{F^{\omega}(O)}$ appending a fixed symmetric state $\sigma_{F^{\omega}(O)}$ that is independent of the input state for each output modes that do not have the input modes with the same frequency.
	\end{enumerate}
\end{proposition}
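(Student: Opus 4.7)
The plan is to use the $(A,B)$ characterization of GCOs (stated in the main text and proved in Lemma~\ref{le:gene_GCO_app}) and show that time-translation covariance forces $(A,B)$ to have a block structure indexed by mode frequency. The three-part decomposition will then fall out by reading off this block structure.

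First I would derive the structural constraints on $A$ and $B$. Writing time-translation on the input/output side as conjugation by the diagonal matrices $D_\mathrm{in}(t) = \mathrm{diag}(e^{-i\omega_k^\mathrm{in} t})_k$ and $D_\mathrm{out}(t) = \mathrm{diag}(e^{-i\omega_j^\mathrm{out} t})_j$, the covariance condition on first moments (Eq.~\eqref{eq:1st_moment}) demands $A D_\mathrm{in}(t) = D_\mathrm{out}(t) A$ for all $t$, which forces $A_{jk}=0$ unless $\omega_j^\mathrm{out} = \omega_k^\mathrm{in}$. Substituting this into Eq.~\eqref{eq:2nd_moment_mu} and demanding covariance of $\mu$, $B$ must satisfy $D_\mathrm{out}(t) B D_\mathrm{out}(t)^\dagger = B$, hence $B_{jk}=0$ unless $\omega_j^\mathrm{out} = \omega_k^\mathrm{out}$. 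The $\chi$-equation Eq.~\eqref{eq:2nd_moment_chi} reproduces the same constraint on $A$ and yields no new conditions.

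Next I would read off the three pieces from this block structure. An input mode of frequency $\omega$ that has no matching output frequency contributes only zero columns to $A$, so it affects neither $A\vec\alpha$, $A\mu A^\dagger$, nor $A^*\chi A^\dagger$; equivalently, it may be traced out, defining $\mE_1$. Symmetrically, an output mode of frequency $\omega$ with no input counterpart has zero rows in $A$, so its first moment and $\chi$ block vanish, and only the diagonal $B$-block indexed by that orphan frequency remains; being input-independent it describes a fixed Gaussian symmetric state, which we prepend via $\mE_3$. For each frequency $\omega$ present on both sides, the restriction $(A_\omega,B_\omega)$ defines a single-frequency GCO $\mE_\omega$ from $F^\omega(I)$ to $F^\omega(O)$; since $AA^\dagger$ inherits the frequency block-diagonal structure of $A$, the uncertainty inequality Eq.~\eqref{eq:HUR_A_B} decouples into blockwise inequalities $B_\omega \geq \pm\tfrac{1}{2}(\iden - A_\omega A_\omega^\dagger)$, so each $\mE_\omega$ is a valid GCO and $\mE_2 = \bigoplus_\omega \mE_\omega$. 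Composing $\mE_3\circ\mE_2\circ\mE_1$ and tracking $(\vec\alpha,\mu,\chi)$ block-by-block then reproduces the original $(A,B)$, establishing equality with $\mE_\mathrm{GCO}$.

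The step most likely to need care is confirming that the orphan output block $B_\omega$ used by $\mE_3$ is itself the second moment of a legitimate symmetric Gaussian state. For such blocks $A_\omega A_\omega^\dagger = 0$, so Eq.~\eqref{eq:HUR_A_B} collapses to $B_\omega \geq \tfrac{1}{2}\iden$, which is precisely the uncertainty condition for a valid symmetric Gaussian second moment (with $\vec\alpha=0$ and $\chi=0$). Once this is noted, everything else is bookkeeping within each frequency block.
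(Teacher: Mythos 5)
Your frequency bookkeeping is exactly the paper's: covariance forces the selection rules $A_{jk}=0$ unless $\omega_k^{\mathrm{in}}=\omega_j^{\mathrm{out}}$ and $B_{jk}=0$ unless $\omega_j^{\mathrm{out}}=\omega_k^{\mathrm{out}}$, and $\mE_1,\mE_2,\mE_3$ are read off from the resulting block structure. Your additional checks---that Eq.~\eqref{eq:HUR_A_B} decouples blockwise and that an orphan output block satisfies $B_\omega\geq\tfrac{1}{2}\iden$, so $\sigma_{F^{\omega}(O)}$ is a legitimate symmetric Gaussian state---are details the paper leaves implicit, and they are correct.

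The one genuine problem is your starting point. You invoke the $(A,B)$ characterization ``proved in Lemma~\ref{le:gene_GCO_app}'', but that lemma is stated and proved \emph{after} Proposition~\ref{pro: indep evols} and its proof explicitly assumes that all input and output modes share a single frequency---the very reduction that Proposition~\ref{pro: indep evols} is meant to license. As written, your argument is therefore circular within the paper's logical structure. The gap is repairable in two ways. Either extend the $(A,B)$ form to the multi-frequency setting first: for a general covariant tuple $(X,Y,\vec{d})$ one gets $\vec{d}=0$, and the off-diagonal blocks of $X$ (those mixing $\vec\alpha$ with $\vec\alpha^{*}$) must satisfy $X_{jk}\,e^{\pm i(\omega_j^{\mathrm{out}}+\omega_k^{\mathrm{in}})t}=X_{jk}$ for all $t$, hence vanish because all frequencies are positive; similarly for $Y$. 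Or do what the paper does and run the whole argument directly on $(X,Y,\vec{d})$, where $X R_{\mathrm{in}}(t)=R_{\mathrm{out}}(t)X$ and $Y=R_{\mathrm{out}}(t)YR_{\mathrm{out}}(t)^{\dagger}$ yield the same selection rules without ever passing through the lemma. With either fix, the rest of your proof goes through as you describe.
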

\begin{proof}
	Suppose that $\mE_{\mathrm{GCO}}$ is a GCO, i.e. it satisfies the covariance condition $\mE_{\mathrm{GCO}}\circ\mathcal{U}_t^\mathrm{in} = \mathcal{U}_t^\mathrm{out}\circ\mE_{\mathrm{GCO}}$ with time evolution operations $\mathcal{U}_{t}^{\mathrm{in/out}}$. 
	Note that the unitary processes $\mathcal{U}_t^\mathrm{in}$ and $\mathcal{U}_t^\mathrm{out}$ are associated with the tuples $(\tR_{\mathrm{in}}(t),0,0)$ and $(\tR_{\mathrm{out}}(t),0,0)$, where
	\begin{align}
		\tR_{\mathrm{in}}(t) = \left(\bigoplus_{i} e^{-i\omega_{i}t}\right)\oplus \left(\bigoplus_{i} e^{i\omega_{i}t}\right),\quad \tR_{\mathrm{out}}(t) = \left(\bigoplus_{i} e^{-i\omega'_{i}t}\right)\oplus \left(\bigoplus_{i} e^{i\omega'_{i}t}\right),
	\end{align}
	and $\{\omega_{i}\}_{i}$ and $\{\omega'_{i}\}_{i}$ are input/output frequencies.
	Then $\mE_{\mathrm{GCO}}$ has an one-to-one correspondence with a tuple $(\tX,\tY,\td)$ satisfying
	\begin{align}
		\tX\tR_{\mathrm{in}}(t) \vecr + \td & =  \tR_{\mathrm{out}}(t)\tX \vecr+\tR_{\mathrm{out}}(t)\td, \label{eq:c1} \\
		\tX\tR_{\mathrm{in}}(t)  M \tR_{\mathrm{in}}^{\dagger}(t)\tX^\dagger+\tY & = \tR_{\mathrm{out}}(t)\tX  M \tX^\dagger\tR_{\mathrm{out}}^{\dagger}(t)+\tR_{\mathrm{out}}(t)\tY\tR_{\mathrm{out}}^{\dagger}(t),\label{eq:c2}
	\end{align}
	for all $t,\vecr,M$. 
	It follows that $\tX\tR_{\mathrm{in}}(t) = \tR_{\mathrm{out}}(t)\tX$ from Eq.~\eqref{eq:c1} and this combined with Eq.~\eqref{eq:c2} gives $\tY = \tR_{\mathrm{out}}(t)\tY\tR_{\mathrm{out}}^{\dagger}(t)$ for all $t$.
	The last condition implies that $\tY$ is block-diagonal where each block corresponds to each output frequencies $F^{\omega}(O)$. Meanwhile, $\tX\tR_{\mathrm{in}}(t) = \tR_{\mathrm{out}}(t)\tX$ gives us
	\begin{align}
		X_{i,j}\omega_{j} = \omega'_{i}X_{i,j},\quad X_{i+\Outm,j+\Inm}\omega_{j} = \omega'_{i}X_{i+\Outm,j+\Inm},\quad X_{i,j+\Inm} = X_{i+\Outm,j} = 0, 
	\end{align}
	for all $i = 1,\cdots,\Outm$, and $j = 1,\cdots,\Inm$.
	Hence, the $i$th output mode after the operation $\mE_{\mathrm{GCO}}$ is independent of the $j$th input mode, unless $\omega_{j} = \omega'_{i}$.
	This necessarily gives the structure of the channel as stated in the proposition. 
\end{proof}

Proposition~\ref{pro: indep evols} allows us to then focus on cases where all input and output frequencies are the same. 

\begin{lemma}
\label{le:gene_GCO_app}
    Any $\Inm$-mode to $\Outm$-mode GCO is fully characterized by a pair $(A,B)$ of $\Outm\times \Inm$ matrix $A$ and $\Outm\times \Outm$ matrix $B$, transforming first and second moments as: 
    \begin{align}\label{eq:Gaussian_characterization}
        \vecalpha \rightarrow A\vecalpha, \qquad 
        \mu \rightarrow A\mu A^\dagger + B, \qquad
        \chi \rightarrow A^* \chi A^\dagger,
    \end{align}
    with the constraint
    \begin{align}
    	B\geq \pm\frac{1}{2}(  \iden-AA^\dagger).\label{eq:HUR_A_B_app}
	\end{align}
\end{lemma}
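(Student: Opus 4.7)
The plan is to translate the abstract covariance condition into constraints on the tuple $(X,Y,\vec d)$ and then solve those constraints by exploiting the eigenstructure of the time-translation generator. Since Proposition~\ref{pro: indep evols} (proved just above) allows us to assume all input and output modes share a common frequency $\omega$, the time-translation on both sides acts as
\[
R(\theta) = \begin{pmatrix} e^{-i\theta}\iden & 0 \\ 0 & e^{i\theta}\iden \end{pmatrix},\qquad \theta=\omega t.
\]
Plugging $(\tR(\theta),0,0)$ into the defining relation $\mE\circ\mathcal U_t^{\rm in}=\mathcal U_t^{\rm out}\circ\mE$ and comparing the induced affine action on $(\vecr,M)$, I would extract three independent conditions that must hold for all $\theta$: $\tX R_{\rm in}(\theta)=R_{\rm out}(\theta)\tX$; $\tR_{\rm out}(\theta)\,\tY\,\tR_{\rm out}(\theta)^\dagger=\tY$; and $\tR_{\rm out}(\theta)\vec d=\vec d$.

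Next, I would solve each of these in turn using that, for generic $\theta$, the eigenvalues $e^{\pm i\theta}$ of $R(\theta)$ are distinct and nonzero. The displacement condition immediately forces $\vec d=0$. Writing $\tX$ in $2\times 2$ block form with $\Outm\times \Inm$ blocks and matching phases $e^{\pm i\theta}$ on the two sides of the intertwining relation kills the off-diagonal blocks, leaving $\tX=\mathrm{diag}(A^*,A)$ for some $\Outm\times \Inm$ matrix $A$ (the conjugation pattern being forced by the structure $\vecr=(\vecalpha^*,\vecalpha)^{\tp}$). The same phase-matching applied to $\tY$ kills its off-diagonal blocks, so $\tY=\mathrm{diag}(B^*,B)$ for some $\Outm\times\Outm$ matrix $B$. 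Substituting these block forms into $\vecr\mapsto \tX\vecr$ and $M\mapsto \tX M\tX^\dagger+\tY$, and reading off the four blocks of $M$ as the transformations of $(\vecalpha,\mu,\chi)$, yields exactly Eqs.~\eqref{eq:1st_moment}--\eqref{eq:2nd_moment_chi}. Note that the vanishing of the off-diagonal block of $\tY$ is precisely why $\chi$ picks up \emph{no} additive term.

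Finally, for the uncertainty constraint $\tY\geq \tX\mmZ_{\rm in}\tX^\dagger-\mmZ_{\rm out}$, I would compute the right-hand side using the block-diagonal form of $\tX$ and the definition of $\mmZ$:
\[
\tX\mmZ_{\rm in}\tX^\dagger-\mmZ_{\rm out} = \tfrac{1}{2}\begin{pmatrix} A^* A^{\tp}-\iden & 0 \\ 0 & \iden-AA^\dagger \end{pmatrix}.
\]
The lower block gives $B\geq \tfrac{1}{2}(\iden-AA^\dagger)$, while the upper block gives $B^*\geq -\tfrac{1}{2}(\iden-A^*A^{\tp})$; taking complex conjugates of the latter (which preserves positivity) produces $B\geq -\tfrac{1}{2}(\iden-AA^\dagger)$. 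Combining both yields precisely Eq.~\eqref{eq:HUR_A_B_app}. Conversely, one checks that any pair $(A,B)$ satisfying \eqref{eq:HUR_A_B_app} defines via $\tX=\mathrm{diag}(A^*,A)$, $\tY=\mathrm{diag}(B^*,B)$, $\vec d=0$ a legitimate Gaussian channel that commutes with $R(\theta)$.

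The main obstacle I anticipate is not a single deep step but rather the bookkeeping around the redundant block structure of $(\vecr,M,\tX,\tY)$: one must verify consistently that the reality/Hermiticity constraints inherited from $\vecr=(\vecalpha^*,\vecalpha)^{\tp}$ and $M^\dagger=M$ are compatible with the block-diagonal forms forced by covariance, and that the two inequalities $B\geq \pm\tfrac12(\iden-AA^\dagger)$ together (rather than just one of them) fully capture the original one-sided semidefinite constraint on the off-diagonal $2m\times 2m$ matrix. Once this is organized cleanly, the remaining computations are short substitutions.
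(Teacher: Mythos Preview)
Your proposal is correct and follows essentially the same route as the paper's own proof: derive $\vec d=0$, $\tX R_{\rm in}(\theta)=R_{\rm out}(\theta)\tX$, and $R_{\rm out}(\theta)\tY R_{\rm out}(\theta)^\dagger=\tY$ from the covariance condition, solve these to obtain the block-diagonal forms $\tX=\mathrm{diag}(A^*,A)$ and $\tY=\mathrm{diag}(B^*,B)$, and then read off Eq.~\eqref{eq:HUR_A_B_app} from the two diagonal blocks of the complete-positivity constraint. The only cosmetic difference is that the paper phrases the phase-matching as $\mmZ_{S'}\tX\mmZ_{S}=\tX$ and $\mmZ_{S'}\tY\mmZ_{S'}=\tY$ rather than invoking the eigenvalues $e^{\pm i\theta}$ directly, but this is the same argument.
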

\begin{proof}
	Let $S$ be an $\Inm$-mode system and $S'$ be an $\Outm$-mode system. 
	As we are assuming all input and output mode frequencies are the same, we denote $\tR_{\mathrm{in}}(\theta)= \cos(\theta) \iden_{S}-i\sin(\theta) \cdot 2\mathbf{Z}_{S}$ and $\tR_{\mathrm{out}}(\theta)= \cos(\theta) \iden_{S'}-i\sin(\theta) \cdot 2\mathbf{Z}_{S'}$ for $\theta$ phase shifts in $S$ and $S'$, respectively.
	By definition, a GCO tuple $(\tX,\tY,\td)$  satisfies
	\begin{align}
		\tX\tR_{\mathrm{in}}(\theta) \vecr + \td & =  \tR_{\mathrm{out}}(\theta)\tX \vecr+\tR_{\mathrm{out}}(\theta)\td,\\
		\tX\tR_{\mathrm{in}}(\theta)  M \tR^{\dagger}_{\mathrm{in}}(\theta)\tX^\dagger+\tY & =  \tR_{\mathrm{out}}(\theta)\tX  M \tX^\dagger\tR_{\mathrm{out}}^{\dagger}(\theta)+\tR_{\mathrm{out}}(\theta)\tY\tR_{\mathrm{out}}^{\dagger}(\theta),
	\end{align}
	for all $M$, $\vecr$, and $\theta$. It follows that $\td=0$, $\mmZ_{S'}\tX\mmZ_{S}=\tX$, and $\mmZ_{S'}\tY\mmZ_{S'}=\tY$. Recall, furthermore, that $M$ has a block-diagonal structure as specified in Eq.~\eqref{eq:cvm}. 
	Therefore, $\tX, \tY$ admit the block forms:
	\begin{align}
		\tX=\begin{pNiceArray}{c:c}
			A^* & 0 \\
			\hdottedline
			0 & A
		\end{pNiceArray}, \qquad \ 
		\tY=\begin{pNiceArray}{c:c}
			B^* & 0 \\
			\hdottedline
			0 & B
		\end{pNiceArray},
	\end{align}
	with an $\Outm\times \Inm$ matrix $A$ and an $\Outm\times \Outm$ matrix $B$. 
	Then the constraint $\tY \geq \tX \mathbf{Z}_{S} \tX^\dagger - \mathbf{Z}_{S'}$ translates to
	\begin{align}
		\begin{pNiceArray}{cc}
			B^* & 0 \\
			0 & B
		\end{pNiceArray}\geq \frac{1}{2}\begin{pNiceArray}{cc}
			A^* & 0 \\
			0 & A
		\end{pNiceArray}\begin{pNiceArray}{cc}
			\iden_{S} & 0 \\
			0 & -\iden_{S}
		\end{pNiceArray}\begin{pNiceArray}{cc}
			A^{*\dagger} & 0 \\
			0 & A^\dagger
		\end{pNiceArray}-\frac{1}{2}\begin{pNiceArray}{cc}
			\iden_{S'} & 0 \\
			0 & -\iden_{S'}
		\end{pNiceArray},
	\end{align}
	which is equivalent to Eq.~\eqref{eq:HUR_A_B_app}.
	The state transformation under a generic GCO $(A,B)$ is then described as Eqs.~\eqref{eq:Gaussian_characterization}.
\end{proof}

Any GCO $(A,B)$ with $\Inm\neq\Outm$ can be realized by a GCO $(A',B')$ with $\Inm'=\Outm'=\max\{\Inm,\Outm\}$ for the following reason. 
If $\Inm>\Outm$, we construct $(A',B')$ as $A'=\begin{pmatrix} A\\0 \end{pmatrix},B'=B\oplus\frac{\iden_{\Inm-\Outm}}{2}$. Then, $(A,B)$ is equivalent to $(A',B')$ followed by removing $(\Inm-\Outm)$ vacuum modes from the output. 
Similarly, if $\Inm<\Outm$, we set $A'=\begin{pmatrix} A & 0 \end{pmatrix},B'=B$. 
Then $(A,B)$ is equivalent to adding $(\Outm-\Inm)$ modes in a free state $(\mu_v,0)$ to the input followed by $(A',B')$. 
Therefore, in the subsequent section, we only need to focus on the GCOs with $\Inm=\Outm\equiv m$.

\section{Full characterization of freely dilatable Gaussian covariant channels}\label{ap:proofs1_2}
We restate and prove Lemma~\ref{le:free_dilation} of the main text, which is our central technical result and sets a foundation for all the subsequent main results such as Theorem~\ref{th:gento_gto}, Lemma~\ref{le:single_mode_state} and Theorem~\ref{thm: no catalysis}. 
We restate the lemma here as Lemma~\ref{le:free_dilation_app} for the reader's convenience.
\begin{lemma}\label{le:free_dilation_app}
A GCO $(A,B)$ is freely dilatable if and only if the following two conditions are satisfied.
\begin{itemize}
    \item[(F1)] $\iden-AA^\dagger\geq0$;
    \item[(F2)] $\supp(B)= \supp(\iden-AA^\dagger)$.
\end{itemize}
Furthermore, when (F1) or (F2) is not satisfied, even using a non-Gaussian and/or asymmetric auxiliary state does not enable the dilation with a Gaussian covariant unitary. 
\end{lemma}
\begin{proof}
    If the two conditions are satisfied, the dilation in Definition~\ref{def: free dilation} can be constructed as follows. 
    Let
    \begin{align}
        A=U\begin{pNiceArray}{cc}
            \sqrt{1-\Lambda_+} & 0 \\
            0 & \iden_{m-m_+}
        \end{pNiceArray}W,\label{eq:A_SVD}
    \end{align}
    be the singular value decomposition of $A$, where $\Lambda_+=\mathrm{diag}(\lambda_1,\dots,\lambda_{m_+})>0$ is a $m_+$-dimensional diagonal matrix, $m_+\leq m$, and $U,W$ are $m\times m$-dimensional unitaries.
    Then the spectral decomposition of  $\iden-AA^\dagger$ is 
	\begin{align}
	    \iden-AA^\dagger = U\begin{pNiceArray}{cc}
		        \Lambda_+ & 0 \\
		        0 & 0
		    \end{pNiceArray}U^\dagger\label{eq:I_AA_SVD},
	\end{align}
	and (F2) is then equivalent to $B$ having the form
	\begin{align}
	    B=U\begin{pNiceArray}{cc}
		        B_+ & 0 \\
		        0 & 0
		    \end{pNiceArray}U^\dagger,\label{eq:B_decom}
	\end{align}
	where $B_+>0$ is an $m_+$-dimensional completely positive matrix. 
   A straightforward calculation shows that an $m_+$-mode auxiliary state with
    \begin{align}
        \mu_R=\Lambda_+^{-\frac{1}{2}}B_+\Lambda_+^{-\frac{1}{2}},\ \qquad \chi_R=0,
    \end{align}
    and a joint unitary
    \NiceMatrixOptions{cell-space-limits = 2pt}
    \begin{align}
        V=\begin{pNiceArray}{c:c}
            U & 0 \\
            \hdottedline
            0 & \iden_{m_+}
        \end{pNiceArray}
        \begin{pNiceArray}{cc:c}
            \sqrt{1-\Lambda_+} & 0 & \sqrt{\Lambda_+} \\
            0 & \iden_{m-m_+} & 0 \\
            \hdottedline
            -\sqrt{\Lambda_+} & 0 & \sqrt{1-\Lambda_+} 
        \end{pNiceArray}
        \begin{pNiceArray}{c:c}
            W & 0 \\
            \hdottedline
            0 & \iden_{m_+}
        \end{pNiceArray}\label{eq:dila_V}
    \end{align}
	gives the desired dilation. Besides, the uncertainty relation $M_R+\mmZ\geq0$, which reduces to $\mu_R\geq\frac{\iden}{2}$, is ensured by the constraint $B\geq\frac{1}{2}(\iden-AA^\dagger)$.

    Conversely, if a GCO $(A,B)$ has a free dilation $(\mu_R,0)$ and
    \begin{align}
        V=\begin{pNiceArray}{c:c}
            V_{SS} & V_{SR} \\
            \hdottedline
            V_{RS} & V_{RR}
        \end{pNiceArray},
    \end{align}
    then
    \begin{align}
        V_{SS} &=A,\label{eq:vss}\\
        V_{SR}\mu_R V_{SR}^\dagger & = B.\label{eq:vsr}
    \end{align}
    Because $\iden-V_{SS}V_{SS}^\dagger=V_{SR}V_{SR}^\dagger\geq0$, Eq. (\ref{eq:vss}) holds only if condition (F1) is satisfied. Further, because
    \begin{align}
        V_{SR}V_{SR}^\dagger=\iden-AA^\dagger=U
        \begin{pNiceArray}{cc}
        \Lambda_+ & 0 \\
        0 & 0
    \end{pNiceArray}U^\dagger,
    \end{align}
    the SVD of $V_{SR}$ reads
    \begin{align}
        V_{SR}=U
        \begin{pNiceArray}{cc}
        \sqrt{\Lambda_+} & 0 \\
        0 & 0
    \end{pNiceArray}
    U_R.
    \end{align}
    Substituting it to Eq.~\eqref{eq:vsr}, we obtain
    \begin{align}
        B= U\begin{pNiceArray}{cc}
        \sqrt{\Lambda_+} & 0 \\
        0 & 0
    \end{pNiceArray}U_R\mu_R U_R^\dagger\begin{pNiceArray}{cc}
        \sqrt{\Lambda_+} & 0 \\
        0 & 0
    \end{pNiceArray} U^\dagger,
    \end{align}
    which is exactly the form of Eq.~\eqref{eq:B_decom}.

    Finally we observe that even if we change the auxiliary state to be a generic quanutm state with the second moment $(\mu_{R},\chi_{R})$ and the first moment $\vecalpha_{R}$, their action on the second moment of the system state does not change. 
    Therefore, adding non-Gaussian or asymmetric resource to the auxiliary state does not help implementing GCOs without free dilations.
\end{proof}

The next two results, Theorem \ref{th:GCO_PI_app} (a reiteration of Theorem \ref{th:GCO_PI} in the main text) and Lemma~\ref{le:free_dilation_app} relate to what properties we can infer from a Gaussian covariant operation that admits a particular fixed point dilation.

\begin{theorem}
\label{th:GCO_PI_app}
	If a GCO mapping a system to itself has no fixed-point that is a valid quantum state, then it is not freely dilatable. 
	However, the converse is not true.
\end{theorem}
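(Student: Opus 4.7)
The plan is to prove the contrapositive of the first claim—every freely dilatable GCO admits a Gaussian fixed-point—and then exhibit a two-mode counterexample for the converse. By Lemma~\ref{le:free_dilation_app}, free dilatability gives (F1) $\iden - AA^\dagger \geq 0$ and (F2) $\supp(B) = \supp(\iden - AA^\dagger)$; the task reduces to finding a PSD $\mu \geq \iden/2$ solving $\mu - A\mu A^\dagger = B$, since then the Gaussian state with $\vec\alpha = 0$, this $\mu$, and $\chi = 0$ is a valid fixed-point.

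For the construction of $\mu$ I would decompose $A$ via the maximal isometric subspace $\mc{H}_u \coloneqq \{v : \|A^n v\| = \|v\|\text{ for all }n \geq 0\}$, which is a linear subspace since $\mc{H}_u = \bigcap_n \ker(\iden - A^{n\dagger} A^n)$. The pivotal step is showing $\mc{H}_u$ is both $A$- and $A^\dagger$-invariant. The former is direct. For the latter, $A|_{\mc{H}_u}$ is an isometric bijection (hence unitary on finite-dim $\mc{H}_u$), so the $\mc{H}_u$-component of $A^\dagger v$ already has norm $\|v\|$ for $v \in \mc{H}_u$, and the contraction bound $\|A^\dagger\| \leq 1$ from (F1) then forces the $\mc{H}_u^\perp$-component to vanish. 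This yields a block decomposition $A = U_0 \oplus A_0$ on $\mc{H}_u \oplus \mc{H}_0$, where $U_0$ is unitary and $A_0$ has spectral radius strictly less than $1$ (an eigenvector of $A_0$ at a unit-modulus eigenvalue would lie in $\mc{H}_u$, forcing it to be zero in $\mc{H}_0$). Since $\mc{H}_u \subseteq \ker(\iden - AA^\dagger)$, condition (F2) forces $B = 0 \oplus B_0$, and the GCO constraint delivers $B_0 \geq \tfrac{1}{2}(\iden - A_0 A_0^\dagger)$. Taking $\mu \coloneqq \tfrac{1}{2}\iden_{\mc{H}_u} \oplus \sum_{n\geq 0} A_0^n B_0 (A_0^\dagger)^n$, the series converges geometrically, the Lyapunov equation holds by telescoping, and $\mu \geq \iden/2$ follows from the identity $\sum_n A_0^n (\iden - A_0 A_0^\dagger)(A_0^\dagger)^n = \iden_{\mc{H}_0}$.

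For the converse I would exhibit the two-mode GCO with $A = \begin{pmatrix} 0 & 1 \\ 0 & 0 \end{pmatrix}$ and $B = \diag(b,\tfrac{1}{2})$ for any $b > 0$. This is a valid GCO satisfying (F1) but failing (F2), since $\supp(B) = \CC^2 \neq \mathrm{span}\{(0,1)^\tp\} = \supp(\iden - AA^\dagger)$, so Lemma~\ref{le:free_dilation_app} rules out free dilation; yet the Lyapunov equation has the explicit solution $\mu = \diag(b+\tfrac{1}{2},\tfrac{1}{2}) \geq \iden/2$, furnishing a valid fixed-point. The main technical obstacle is the block decomposition above, especially the $A^\dagger$-invariance of $\mc{H}_u$ and the strict contractivity of $A_0$. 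Structurally, the gap powering the counterexample is that for non-normal $A$ (here $A$ is nilpotent yet has a unit singular value), $\supp(\iden - AA^\dagger)$ can be strictly smaller than $\mc{H}_0$, leaving room for $B$ to sit between these two subspaces—large enough to violate (F2) yet small enough (vanishing on $\mc{H}_u$) to admit a fixed-point.
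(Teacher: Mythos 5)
Your proposal is correct, and both halves take a genuinely different route from the paper's. For the first claim the paper also argues by contrapositive, but it writes the candidate fixed point in one line, $\mu_\star = \tfrac{\iden}{2} + \sum_{n\geq0}A^n\Delta(A^\dagger)^n$ with $\Delta \coloneqq B - \tfrac{1}{2}(\iden - AA^\dagger)\geq0$, and uses (F1)--(F2) only to assert that $\Delta$ is supported on the strictly contracting part of $A$ so that the series converges. Your splitting $A = U_0\oplus A_0$ along the maximal isometric subspace $\mc{H}_u$, with the $A^\dagger$-invariance argument and the spectral-radius bound on $A_0$, is exactly the structural fact that makes that convergence claim airtight; the paper leaves it at the level of a one-step contraction statement, so your version is longer but more self-contained. (In fact, after telescoping, your $\tfrac{1}{2}\iden_{\mc{H}_u}\oplus\sum_n A_0^n B_0 (A_0^\dagger)^n$ equals the paper's $\mu_\star$.) For the converse the two counterexamples are complementary: the paper uses a two-mode one-way amplifier $A = \begin{pmatrix} 0 & \sqrt{\eta_1} \\ \sqrt{\eta_2} & 0\end{pmatrix}$ with $\eta_2>1$ and $\eta_1\eta_2<1$, which has a fixed point yet violates (F1) and thereby ties the theorem to the physical discussion of amplification; your nilpotent $A=\begin{pmatrix}0&1\\0&0\end{pmatrix}$ with $B=\diag(b,\tfrac{1}{2})$ keeps $AA^\dagger\leq\iden$ and instead violates (F2), which makes the slightly sharper point that even a contractive GCO with a valid fixed point can fail to be freely dilatable. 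Your example does satisfy $B\geq\pm\tfrac{1}{2}(\iden-AA^\dagger)$ and the Lyapunov solution $\mu=\diag(b+\tfrac{1}{2},\tfrac{1}{2})\geq\tfrac{\iden}{2}$ is a legitimate state, so both parts stand.
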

\begin{proof}
	We prove the first claim by contrapositive. 
	We explicitly construct a valid fixed-point for any $m$-mode to $m$-mode GCO $(A,B)$ with a free dilation.
	Define a matrix 
	\begin{align}
		\Delta \coloneqq B-\frac{1}{2}(\iden-AA^\dagger) \geq 0,
	\end{align}
	where the positive semidefiniteness comes from the CP constraint $B \geq \frac{1}{2}(\iden - AA^{\dagger})$.
	Next, we define 
	\begin{align}
		\mu_{\star} \coloneqq \frac{\iden}{2} + \sum_{n=0}^{\infty} A^{n} \Delta (A^\dagger)^{n}.
	\end{align}
	We show that $(\mu_{\star},\chi_{\star} = 0)$ can be a valid choice of the second moment for a quantum state using the following arguments.
	\begin{enumerate}
        \item A direct calculation shows $\mu_\star=A\mu_\star A^\dagger+B$.
		\item $A^{n} \Delta (A^\dagger)^{n}\geq0$ for any $n\geq0$ since $\Delta\geq0$.
		Thus, $\mu_{\star} \geq \frac{\iden}{2}$.
		\item From Lemma~\ref{le:free_dilation_app}, we have $AA^\dagger\leq \iden$ and $\supp(\Delta)\subseteq\supp(\iden-AA^\dagger)$ meaning that $\Delta$ has support only on the strict-contraction subspace of $A$. 
		Hence, the series $\sum_{n=0}^{\infty} A^{n} \Delta (A^\dagger)^{n}$ converges and $\mu_{\star}$ is finite. 
	\end{enumerate}
	Therefore the centered symmetric Gaussian state with moments $(\vec\alpha_\star=0,\mu_\star,\chi_\star=0)$ is a valid fixed point.
	
	For the second claim, we provide a counterexample: a GCO with a fixed-point but admits no free dilation.
	Consider a two-mode GCO with
	\begin{align}
		A=\begin{pNiceArray}{cc}
			0 & \sqrt{\eta_1} \\
			\sqrt{\eta_2} & 0
		\end{pNiceArray},\quad 
		B=\begin{pNiceArray}{cc}
			b_{11} & 0 \\
			0 & b_{22}
		\end{pNiceArray},
	\end{align}
	where $\eta_1\eta_2<1$, $\eta_2>1$, and $b_{11}\geq\frac{1}{2}(1-\eta_1)$, $b_{22}\geq\frac{1}{2}(\eta_2-1)$.
	The fixed-point of such a channel is
	$\mu=\begin{pNiceArray}{cc}
		\mu_{11} & 0 \\
		0 & \mu_{22}
	\end{pNiceArray}$ and $\chi=0$, where
	\begin{align}
		\mu_{11}=\frac{b_{11}+\eta_1 b_{22}}{1-\eta_1\eta_2},\quad 
		\mu_{22}=\frac{b_{22}+\eta_2 b_{11}}{1-\eta_1\eta_2}.
	\end{align}
	However, this channel does not admit a physical implementation, because $\eta_2>1$ violates Condition (F1) in Lemma~\ref{le:free_dilation_app}.
\end{proof}

Here we present a more detailed and rigorous version of the physical intuition given in the main text that explains this no‑go result.
First we prove by contradiction that any GCO without a fixed-point quantum state must be capable of increasing the largest eigenvalue of the input $\mu$ matrix of the second moment (corresponding to the energy of the hottest normal mode), irrespective of how large it already is.
Define $\mathcal{F}_{\bar{n}_0}$ as the set of all $m$-mode second moments $\mu$ whose largest eigenvalue (denoted $\nu^{\downarrow}_{1}(\mu)+\frac{1}{2}$) is upper bounded by $\bar{n}_{0}+\frac{1}{2}$, i.e. $\nu^{\downarrow}_{1}(\mu)\leq\bar{n}_{0}$.
It is straightforward to check that $\mathcal{F}_{\bar{n}_0}$ is a nonempty, compact, convex set for any fixed $\bar{n}_{0}\geq 0$.
Recall that a GCO linearly transforms the second moment as $\mu \mapsto A\mu A^{\dagger} +B$.
Suppose a GCO cannot increase the largest eigenvalue $\nu^{\downarrow}_{1}(\mu)$ when it is already large; more precisely, suppose that the inequality $\nu^{\downarrow}_{1}(\mu') \leq \nu^{\downarrow}_{1}(\mu)$ holds for any input second moment $\mu$, output $\mu'$, whenever $ \nu^{\downarrow}_{1}(\mu)\geq \bar{n}_{0} $ for some fixed number $\bar{n}_{0}\geq0$.
Then the transformation corresponding to this GCO preserves the set $\mathcal{F}_{\bar{n}_0}$.
By the fixed-point theorem, this transformation has a fixed second moment in $\mathcal{F}_{\bar{n}_0}$.
Let $\mu_{0}$ be such a fixed second moment: then the Gaussian symmetric state with second moment $\mu_{0}$ and zero first moment is then a fixed-point quantum state, because GCOs cannot make a state non-Gaussian nor create a nonzero first moment.
This contradicts the assumption that the GCO has no fixed‑point quantum state.

Now we show that a GCO with a free dilation cannot increase the energy of the hottest mode when it is already too high. 
More precisely: a GCO that admits a free dilation always has a bound $\bar{n}_{0}\geq0$ such that $\nu^{\downarrow}_{1}(\mu') \leq \nu^{\downarrow}_{1}(\mu)$ for any input and output second moments $\mu, \mu'$ whenever $ \nu^{\downarrow}_{1}(\mu)\geq \bar{n}_{0} $.
When a GCO admits a free dilation, it can be regarded as a physical process of appending a Gaussian symmetric ancilla, applying passive linear unitaries, and then tracing out the ancilla.
In terms of second moment transformations this corresponds to i) $\mu \mapsto \mu\oplus\mu_{A}$ with the ancilla second moment $\mu_{A}$, ii) $\mu\oplus\mu_{A} \mapsto V\mu\oplus\mu_{A}V^{\dagger}$ for some unitary matrix $V$, and iii) taking an $m\times m$ submatrix $\mu'$ of $V\mu\oplus\mu_{A}V^{\dagger}$. 
Since the largest eigenvalue of any submatrix is bounded by that of the entire matrix, we find that $\nu^{\downarrow}_{1}(\mu') \leq \max\{\nu^{\downarrow}_{1}(\mu) , \nu^{\downarrow}_{1}(\mu_{A}) \}$.
Choosing $\bar{n}_{0}$ to be $\nu^{\downarrow}_{1}(\mu_{A}) < \bar{n}_{0}$ then implies this GCO preserves the set $\mathcal{F}_{\bar{n}_0}$ as desired.
Therefore it is impossible for a GCO to admit a free dilation but no fixed-point quantum state.

\newcommand{\mfA}{\mathfrak{A}}
\newcommand{\mfa}{\mathfrak{a}}
\newcommand{\mfB}{\mathfrak{B}}
\newcommand{\mfb}{\mathfrak{b}}
\newcommand{\mfX}{\mathfrak{X}}
\newcommand{\mfD}{\mathfrak{D}}
\newcommand{\mfd}{\mathfrak{d}_{11}}

Lastly, we present a powerful technical lemma that helps simplify much of our analysis.
This lemma decomposes a Gaussian covariant channel into two parts.
The parts depend on the fixed-point of the GCO, in particular the off-diagonal part of the fixed-point $\chi$.
This lemma tells us that one can decompose the GCO---up to some unitary---as an independent evolutions of the support and kernel of $\chi$.
We will later use it to prove Lemma \ref{le: catalyst-vac-removal}.

\begin{lemma}\label{le: fixed point decomposition}
	Suppose that an $m$-mode second moment $(\mu,\chi)$ with $\chi\neq0$ is a fixed-point of a GCO $(A,B)$. 
	Then there exists a unitary operator $U$ such that 
	\begin{align}
		U^{*}\chi U^{\dagger} = \hat{\chi}_{11} \oplus 0,\quad	UAU^{\dagger} = \hat{a}_{11} \oplus \hat{a}_{22},\quad UBU^{\dagger} = 0 \oplus \hat{b}_{22},
	\end{align}
	where the decomposition $X\oplus Y$ is such that $X$ acts on the support of $U^{*}\chi U^{\dagger}$, and $Y$ on the kernel of $U^{*}\chi U^{\dagger}$, respectively. 
	Furthermore, $\hat{\chi}_{11}>0$ and $\hat{a}_{11}$ is a unitary matrix. 
	
	Physically, it implies that a GCO $(A,B)$ with an asymmetric fixed-point $(\mu,\chi)$ can always be decomposed into three steps: i) a unitary GCO $(U,0)$, ii) a GCO $(\hat{a}_{11}\oplus \hat{a}_{22},0\oplus \hat{b}_{22})$ with the unitary $\hat{a}_{11}$, and iii) a unitary GCO $(U^{\dagger},0)$.
\end{lemma}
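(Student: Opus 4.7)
My plan is to construct the unitary $U$ by diagonalizing the complex symmetric matrix $\chi$ via the Autonne--Takagi factorization, then read off the block structure of $UAU^{\dagger}$ from the fixed-point equation for $\chi$, and finally combine the fixed-point equation for $\mu$ with the uncertainty relation $\mu\geq \iden/2$ and the CP constraint on $B$ to pin down unitarity of the support block and the vanishing of the remaining off-diagonal entries.

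For the first two stages, the Autonne--Takagi theorem supplies a unitary $V$ and a non-negative diagonal matrix $D = \hat{\chi}_{11}\oplus 0$ (with $\hat{\chi}_{11}>0$ collecting the positive singular values of $\chi$) such that $\chi = V D V^{T}$; setting $U := V^{T}$ then yields $U^{*}\chi U^{\dagger} = \hat{\chi}_{11}\oplus 0$. Conjugating the GCO data, I write $\tilde{A} := UAU^{\dagger} = \begin{pmatrix} X & P \\ Y & Q \end{pmatrix}$ and $\tilde{B} := UBU^{\dagger}$ in blocks adapted to $\supp(U^{*}\chi U^{\dagger})$ versus $\ker(U^{*}\chi U^{\dagger})$. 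The fixed-point equation $\tilde{\chi} = \tilde{A}^{*}\tilde{\chi}\tilde{A}^{\dagger}$ with $\tilde{\chi} = C\oplus 0$, $C := \hat{\chi}_{11}$, expands into three block relations: $X^{*}CX^{\dagger} = C$, $X^{*}CY^{\dagger} = 0$, and $Y^{*}CY^{\dagger} = 0$. Taking determinants of the first gives $|\det X|^{2} = 1$, so $X$ is invertible; the second, with $X^{*}$ and $C$ invertible, then forces $Y = 0$, making $\tilde{A}$ block upper-triangular.

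The third stage is the main obstacle, since $X^{*}CX^{\dagger} = C$ alone only constrains $X$ to be complex-orthogonal with respect to the bilinear form induced by $C$---a condition admitting non-unitary solutions such as the hyperbolic reflection $\begin{pmatrix} \cosh\theta & i\sinh\theta \\ i\sinh\theta & -\cosh\theta \end{pmatrix}$ for $\theta\neq 0$. To rule these out, I would iterate the fixed-point equation to $\tilde\mu = \tilde A^{n}\tilde\mu\tilde A^{\dagger n} + \sum_{k=0}^{n-1}\tilde A^{k}\tilde B\tilde A^{\dagger k}$, noting that summing the two CP inequalities $\tilde{B}\geq \pm\frac{1}{2}(\iden - \tilde{A}\tilde{A}^{\dagger})$ gives $\tilde{B}\geq 0$, so every term in the series is PSD. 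Combined with $\tilde\mu\geq \iden/2$ this bounds $\tilde{A}^{n}\tilde{A}^{\dagger n}$ uniformly in $n$; extracting the $(1,1)$-block shows $X^{n}$ is power-bounded, and rewriting the complex-orthogonal condition as $X^{-1} = CX^{T}C^{-1}$ shows $X^{-n}$ is also bounded, so the spectrum of $X$ lies on the unit circle and $X$ is diagonalizable. Returning to the $(1,1)$- and $(1,2)$-blocks of the $\tilde\mu$-equation, the convergence of the PSD series $\sum_{k}\tilde A^{k}\tilde B\tilde A^{\dagger k}$ together with the lower bound $\tilde\mu\geq \iden/2$ and the positivity of the coupling terms involving $P$ should then force $B_{11} = 0$, $X$ unitary, $P = 0$, and $B_{12} = 0$ in succession, completing the claimed decomposition.
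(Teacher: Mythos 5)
Your first two stages are fine, but the third stage---which is where essentially all of the content of the lemma lives---is not actually carried out. You correctly observe that $X^{*}CX^{\dagger}=C$ alone does not force unitarity, and your power-boundedness argument is legitimate as far as it goes: from $\tilde{A}^{n}\tilde{\mu}\tilde{A}^{\dagger n}\leq\tilde{\mu}$, $\tilde{\mu}\geq\iden/2$, and $X^{-1}=CX^{\tp}C^{-1}$ you do get that $X$ is diagonalizable with unimodular spectrum. But that only makes $X$ \emph{similar} to a unitary, not unitary, and the remaining claims ($\tilde{B}_{11}=0$, $X$ unitary, $P=0$, $\tilde{B}_{12}=0$) are introduced with ``should then force'' rather than derived. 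The specific mechanism you invoke---``positivity of the coupling terms involving $P$''---does not hold: because you Takagi-diagonalize $\chi$ \emph{before} doing anything about $\mu$, the matrix $\tilde{\mu}$ is generically not block-diagonal with respect to $\supp(\tilde{\chi})\oplus\ker(\tilde{\chi})$, so the $(1,1)$ block of the fixed-point equation reads $\tilde{\mu}_{11}=X\tilde{\mu}_{11}X^{\dagger}+X\tilde{\mu}_{12}P^{\dagger}+P\tilde{\mu}_{12}^{\dagger}X^{\dagger}+P\tilde{\mu}_{22}P^{\dagger}+\tilde{B}_{11}$, and the cross terms $X\tilde{\mu}_{12}P^{\dagger}+\mathrm{h.c.}$ are sign-indefinite. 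The ``sum of PSD terms'' structure you are implicitly relying on is simply absent in your basis, so the chain of implications at the end is a genuine gap, not a routine verification.

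This is precisely why the paper's proof normalizes first: setting $\mfA=\mu^{-\frac12}A\mu^{\frac12}$, $\mfB=\mu^{-\frac12}B\mu^{-\frac12}$, $\mfX=(\mu^{*})^{-\frac12}\chi\mu^{-\frac12}$ turns the fixed-point conditions into $\iden=\mfA\mfA^{\dagger}+\mfB$ and $\mfX=\mfA^{*}\mfX\mfA^{\dagger}$, and only then Takagi-diagonalizes $\mfX$. In that picture the $(1,1)$ block of the first equation is $\iden=\tilde{\mfa}_{11}\tilde{\mfa}_{11}^{\dagger}+\tilde{\mfa}_{12}\tilde{\mfa}_{12}^{\dagger}+\tilde{\mfb}_{11}$, a genuine sum of positive semidefinite terms, so $\tilde{\mfa}_{11}\tilde{\mfa}_{11}^{\dagger}\leq\iden$; combined with $|\det\tilde{\mfa}_{11}|=1$ this yields unitarity, $\tilde{\mfa}_{12}=0$, and $\tilde{\mfb}_{11}=0$ in one stroke. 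The price is the extra unitary $V$ needed to translate the block structure from the $\mu$-weighted basis back to an orthonormal one (this is where the paper spends the second half of its proof), a step your ordering avoids. To salvage your route you would need an actual argument that the convergence of the PSD series $\sum_{k}\tilde{A}^{k}\tilde{B}\tilde{A}^{\dagger k}\leq\tilde{\mu}$, restricted to the support block and combined with the uniform lower bound $\|X^{\dagger k}v\|\geq c\|v\|$ coming from the boundedness of $X^{-k}$, forces $\tilde{B}_{11}$, $P$ and $\tilde{B}_{12}$ to vanish; that may well be doable, but it is not done, and as written the proposal does not establish the lemma.
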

\begin{proof}
	The fixed point condition translates to 
	\begin{align}
		\mu &= A\mu A^{\dagger} + B,\label{eq: mu invariant}\\
		\chi &= A^{*}\chi A^{\dagger}.\label{eq: chi invariant}
	\end{align}
	We define another set of matrices
	\begin{align}\label{eq: mf var definitions}
		\mfA \coloneqq \mu^{-\frac{1}{2}}A\mu^{\frac{1}{2}},\quad \mfB \coloneqq \mu^{-\frac{1}{2}}B\mu^{-\frac{1}{2}},\quad \mfX \coloneqq (\mu^{*})^{-\frac{1}{2}}\chi\mu^{-\frac{1}{2}},
	\end{align}
	using that $\mu>0$. It follows from $B\geq0$ that $\mfB\geq0$.
	Eqs.~\eqref{eq: mu invariant} and \eqref{eq: chi invariant} 
    become
	\begin{align}
		\iden &= \mfA\mfA^{\dagger} + \mfB,\label{eq: mfA and mfB}\\
		\mfX &= \mfA^{*}\mfX\mfA^{\dagger}.\label{eq: mfX invariant}
	\end{align}
	
	$\mfX$ is a symmetric matrix and thus admits the Autonne-Takagi factorization, i.e. there exists an $m\times m$-unitary matrix $W$ such that $\mfX = W^{\tp}\mfD W$, where $\mfD$ is a diagonal matrix such that $\mfD = \mfd \oplus 0$ and $\mfd>0$ is a $m'\times m'$-diagonal matrix.
	Now Eq.~\eqref{eq: mfX invariant} can be rewritten as $W^{\tp}\mfD W = \mfA^{*}W^{\tp}\mfD W\mfA^{\dagger}$, or equivalently,
	\begin{align}
		\mfD = W^{*}\mfA^{*}W^{\tp} \mfD W\mfA^{\dagger}W^{\dagger} \eqqcolon \tilde{\mfA}^{*} \mfD \tilde{\mfA}^{\dagger},\label{eq: chi invariant alt}
	\end{align}
	where $\tilde{\mfA} \coloneqq W\mfA W^{\dagger} $.
	Similarly, we can define $\tilde{\mfB} \coloneqq W\mfB W^{\dagger}$, and Eq.~\eqref{eq: mfA and mfB} becomes
	\begin{align}
		\iden = \tilde{\mfA}\tilde{\mfA}^{\dagger} + \tilde{\mfB}.\label{eq: mu invariant alt}
	\end{align}
	Now we write $\tilde{\mfA}$ and $\tilde{\mfB}$ in the block form
	\begin{align}
		\tilde{\mfA} = \begin{pNiceArray}[cell-space-limits = 3pt]{c:c}
			\tilde{\mfa}_{11} & \tilde{\mfa}_{12}\\
			\hdottedline
			\tilde{\mfa}_{21} & \tilde{\mfa}_{22}
		\end{pNiceArray},\quad \tilde{\mfB} = \begin{pNiceArray}[cell-space-limits = 3pt]{c:c}
			\tilde{\mfb}_{11} & \tilde{\mfb}_{12}\\
			\hdottedline
			\tilde{\mfb}_{21} & \tilde{\mfb}_{22}
		\end{pNiceArray},
	\end{align}
	where the blocks $\tilde{\mfa}_{11}$ and $\tilde{\mfb}_{11}$ act on the same space as $\mfd$. 
	Using the block form, Eq.~\eqref{eq: chi invariant alt} is equivalent to 
	\begin{align}
		\tilde{\mfa}_{11}^{*}\mfd\tilde{\mfa}_{11}^{\dagger} &= \mfd,\label{eq: A11 equation}\\
		\tilde{\mfa}_{11}^{*}\mfd\tilde{\mfa}_{21}^{\dagger} &= 0,\quad
		\tilde{\mfa}_{21}^{*}\mfd\tilde{\mfa}_{11}^{\dagger} = 0,\quad 
		\tilde{\mfa}_{21}^{*}\mfd\tilde{\mfa}_{21}^{\dagger} = 0,\label{eq: A21 zero}
	\end{align}
	and Eq.~\eqref{eq: A21 zero} combined with $\mfd>0$ indicates $\tilde{\mfa}_{21} = 0$.
	Furthermore, Eq.~\eqref{eq: A11 equation} indicates that $|\det(\mfa_{11})| = 1$.
	Similarly, Eq.~\eqref{eq: mu invariant alt} becomes 
	\begin{align}
		\iden &= \tilde{\mfa}_{11}\tilde{\mfa}_{11}^{\dagger} + \tilde{\mfa}_{12}\tilde{\mfa}_{12}^{\dagger} + \tilde{\mfb}_{11},\label{eq: A11 equation 1}\\
		0 &= \tilde{\mfa}_{12}\tilde{\mfa}_{22}^{\dagger} + \tilde{\mfb}_{12},\label{eq: A12 equation 1}\\
		0 &= \tilde{\mfa}_{22}\tilde{\mfa}_{12}^{\dagger} + \tilde{\mfb}_{21},\label{eq: A21 equation 1}\\
		\iden &= \tilde{\mfa}_{22}\tilde{\mfa}_{22}^{\dagger} + \tilde{\mfb}_{22}.\label{eq: A22 equation 1}
	\end{align}
	We have $\tilde{\mfa}_{12}\tilde{\mfa}_{12}^{\dagger}\geq0$ and $\tilde{\mfb}_{11}\geq0$, which is implied from $\tilde{\mfb}\geq0$.
	Eq.~\eqref{eq: A11 equation 1} then leads to $\iden \geq \tilde{\mfa}_{11}\tilde{\mfa}_{11}^{\dagger}\geq0$.
	However, $|\det(\tilde{\mfa}_{11}\tilde{\mfa}_{11}^{\dagger})| = 1$, forcing the equality  $\iden = \tilde{\mfa}_{11}\tilde{\mfa}_{11}^{\dagger}$.
	In other words, $\tilde{\mfa}_{11}$ is a unitary matrix and $\tilde{\mfb}_{11} = 0$, $\tilde{\mfa}_{12} = 0$. 
	Eqs.~\eqref{eq: A12 equation 1} and~\eqref{eq: A21 equation 1} simplifies to $\tilde{\mfb}_{12} = 0$ and $\tilde{\mfb}_{21} = 0$.
	Combining everything, we arrive at the block diagonal structure
	\begin{align}
		\tilde{\mfA} = \tilde{\mfa}_{11} \oplus \tilde{\mfa}_{22},\quad \tilde{\mfB} = 0 \oplus \tilde{\mfb}_{22},\quad \mfD = \mfd \oplus 0.
	\end{align}
	
	Finally, we define $\tilde{A} \coloneqq WAW^{\dagger}$, $\tilde{B} \coloneqq WBW^{\dagger}$, $\tilde{\chi} \coloneqq W^{*}\chi W^{\dagger}$, and $\tilde{\mu} \coloneqq W\mu W^{\dagger}$.
	Note that $B\geq \pm\frac{1}{2}(\iden - AA^{\dagger})$ is equivalent to $\tilde{B} \geq \pm\frac{1}{2}(\iden - \tilde{A}\tilde{A}^{\dagger})$, i.e. the pair $(\tilde{A},\tilde{B})$ represents a valid $m$-mode GCO with a fixed-point $(\tilde{\mu},\tilde{\chi})$.
	Moreover $(\tilde{\mu},\tilde{\chi})$ represents the second moment that can be achieved by applying the GCO unitary $(W,0)$ to the original state with $(\mu,\chi)$.
	Using this notation, Eq.~\eqref{eq: mf var definitions} can be written as
	\begin{align}
		\tilde{A} = \tilde{\mu}^{\frac{1}{2}} \tilde{\mfA} \tilde{\mu}^{-\frac{1}{2}},\quad
		\tilde{B} = \tilde{\mu}^{\frac{1}{2}} \tilde{\mfB} \tilde{\mu}^{\frac{1}{2}},\quad
		\tilde{\chi} = (\tilde{\mu}^{*})^{\frac{1}{2}} \mfD \tilde{\mu}^{\frac{1}{2}}.
	\end{align}
	Let $\{\ket{i}\}_{i=1}^{m'}$ be an orthonormal basis of the support of $\mfD$ and $\{\ket{i}\}_{m'+1}^{m}$ be an orthonormal basis of the kernel of $\mfD$.
	This also implies that $\bra{i}\tilde{\mfB}\ket{i} = 0$ for all $i = 1,\cdots,m'$.
	
	Since $\tilde{\mu}>0$, the set $\{\tilde{\mu}^{-\frac{1}{2}}\ket{i}\}_{i=1}^{m}$ is a collection of $m$ linearly independent vectors. 
	Then there exists a unitary operator $V^{\dagger}$ that maps vectors in $\mathrm{span}\{\ket{i}\}_{i=1}^{m'}$ to those in $\mathcal{V} \coloneqq \mathrm{span}\{\tilde{\mu}^{-\frac{1}{2}}\ket{i}\}_{i=1}^{m'}$ and vectors in $\mathrm{span}\{\ket{i}\}_{i=m'+1}^{m}$ to those in $\mathcal{W} \coloneqq \mathcal{V}^{\perp}$.
    Observe that 
	\begin{align}
		\tilde{B}V^{\dagger}\ket{i} = 0,\quad \bra{i}V\tilde{B} = 0,\quad &\forall i=1,\cdots,m',\label{eq: B tilde ker}\\
		V^{*}\tilde{\chi}V^{\dagger}\ket{i} \neq 0,\quad &\forall i=1,\cdots,m',
	\end{align}
	which implies the block diagonal form
	\begin{align}
		\hat{B} \coloneqq V\tilde{B}V^{\dagger} = VWBW^{\dagger}V^{\dagger} = 0 \oplus \hat{b}_{22}.\label{eq: B hat def}
	\end{align}
	Furthermore, note that $\mathcal{V} = \mathrm{span}\{V^{\dagger}\ket{i}\}_{i=1}^{m'}$ is an $m'$-dimensional subspace of $\supp(\tilde{\chi})$.
	However, the dimensions of the support and kernel spaces of $\tilde{\chi}$ are the same as those of $\mfD$, because $\dim\mathrm{span}\{\ket{i}\}_{i=m'+1}^{m} = \dim\mathrm{span}\{\tilde{\mu}^{-\frac{1}{2}}\ket{i}\}_{i=m'+1}^{m}$.
	This means that $\mathcal{V} = \supp(\tilde{\chi})$ and $\mathcal{W} = \ker(\tilde{\chi})$, and that 
	\begin{align}
		\tilde{\chi}V^{\dagger}\ket{i} = 0,\quad \bra{i}V^{*}\tilde{\chi} = 0,\quad \forall i = m'+1,\cdots,m.\label{eq: chi tilde ker}
	\end{align}
	Eq.~\eqref{eq: chi tilde ker} again indicates the block diagonal form 
	\begin{align}
		\hat{\chi} \coloneqq V^{*}\tilde{\chi}V^{\dagger} = V^{*}W^{*}\chi W^{\dagger}V^{\dagger} = \hat{\chi}_{11} \oplus 0,\label{eq: chi hat def}
	\end{align}
	where $\hat{\chi}_{11}$ is a full rank $m'\times m'$ symmetric matrix. 
	It is also possible to choose $V$ so that $\hat{\chi}_{11}>0$ by $V \mapsto V (V'_{11}\oplus\1)$ with some $m'\times m'$ unitary $V'_{11}$ that does not change the block structure of other matrices.
	
	Now we focus on the matrix $A$. 
	Eq.~\eqref{eq: B hat def} and the inequality $\hat{B} \geq \pm\frac{1}{2}(\iden - \hat{A}\hat{A}^{\dagger})$ implies 
	\begin{align}
		\hat{A} \coloneqq VWAW^{\dagger}V^{\dagger} = \begin{pNiceArray}{c:c}
			\iden & 0 \\
			\hdottedline 
			0 & \hat{A}_{22}
		\end{pNiceArray}\breve{V} = \begin{pNiceArray}{c:c}
		\breve{v}_{11} & \breve{v}_{12}\\
		\hdottedline
		\hat{A}_{22}\breve{v}_{21} & \hat{A}_{22}\breve{v}_{22}
		\end{pNiceArray},\label{eq: hat A blocks}
	\end{align}
	for some unitary $\breve{V}$ with blocks $\breve{v}_{ij}$ and an $(m-m')\times (m-m')$-matrix $\hat{A}_{22}$.
	Eq.~\eqref{eq: chi invariant} can then be written as $\hat{\chi} = \hat{A}^{*}\hat{\chi}\hat{A}^{\dagger}$, or equivalently
	\begin{align}
		\begin{pNiceArray}{c:c}
			\hat{\chi}_{11} & 0\\
			\hdottedline
			0 & 0
		\end{pNiceArray} = \begin{pNiceArray}{c:c}
			\breve{v}_{11}^{*}\hat{\chi}_{11}\breve{v}_{11}^{\dagger} & \breve{v}_{11}^{*}\hat{\chi}_{11}\breve{v}_{21}^{\dagger}\hat{A}_{22}^{\dagger}\\
			\hdottedline
			\hat{A}_{22}^{*}\breve{v}_{21}^{*}\hat{\chi}_{11}\breve{v}_{11}^{\dagger} &  \hat{A}_{22}^{*}\breve{v}_{21}^{*}\hat{\chi}_{11}\breve{v}_{21}^{\dagger}\hat{A}_{22}^{\dagger}
		\end{pNiceArray}.\label{eq: chi hat invar}
	\end{align}
	Eq.~\eqref{eq: chi hat invar} and $\hat{\chi}_{11}>0$ lead to $\hat{A}_{22}\breve{v}_{21} = 0$, annulling the bottom left block in Eq.~\eqref{eq: hat A blocks}.
	Eq.~\eqref{eq: chi hat invar} then reduces to the equality $\hat{\chi}_{11} = \breve{v}_{11}^{*}\hat{\chi}_{11}\breve{v}_{11}^{\dagger}$.
	Since $\hat{\chi}_{11}$ is a full rank matrix, we have 
	\begin{align}
		|\det(\hat{\chi}_{11})| = |\det(\breve{v}_{11}^{*})\det(\hat{\chi}_{11})\det(\breve{v}_{11}^{\dagger})|.\label{eq: determinant hat chi}
	\end{align}
	On the other hand, the unitarity of $\breve{V}$ means that $0 \leq \breve{v}_{11}\breve{v}_{11}^{\dagger}  = \iden - \breve{v}_{12}\breve{v}_{12}^{\dagger} \leq \iden$. 
	If $\breve{v}_{11}\breve{v}_{11}^{\dagger} \neq \iden$, we have $|\det(\breve{v}_{11})|<1$, which contradicts Eq.~\eqref{eq: determinant hat chi}.
	Hence, $\breve{v}_{11}$ is a unitary matrix and $\breve{v}_{12} = 0$, annulling the top right block in Eq.~\eqref{eq: hat A blocks}.
	As a result, we obtain
	\begin{align}
		\hat{A} \coloneqq VWAW^{\dagger}V^{\dagger} =\hat{a}_{11}\oplus \hat{a}_{22},
	\end{align}
	where $\hat{a}_{11}$ is an $m'\times m'$ unitary matrix. 
	Setting $U = VW$, we conclude the proof.
\end{proof}

\section{Closure of gap between TOs and EnTOs in the Gaussian regime}\label{ap: EnTO}
In this section we prove one of our central results (Theorem \ref{th:gento_gto}, reiterated as Theorem \ref{th:gento_gto_app}), which is the fact that the gap between thermal operations and its enhanced version closes in the Gaussian regime. The proof relies largely on the full characterization of Gaussian covariant operations having their structure fully specified by the mathematical characterization of such operations, as outlined in Lemma \ref{le:free_dilation_app}.
\begin{theorem}[GEnTOs = GTOs]\label{th:gento_gto_app}
For any Gaussian enhanced thermal operation $\mE_{\mathrm{GEnTO}}$ on a system $S$, there exist an auxiliary Gibbs state $\gamma_R=\frac{e^{-\beta H_R}}{\tr(e^{-\beta H_R})}$ and a GCO unitary (i.e. a Gaussian unitary that is U(1)-covariant with respect to the Hamiltonian-generated representation) $\mathcal{U}_{\mathrm{PC}}$ acting on $SR$, such that
\begin{align}
    \mE_{\mathrm{GEnTO}}(\cdot)=\tr_R\left[\mathcal{U}_{\mathrm{PC}}(\cdot\otimes\gamma_R)\mathcal{U}_{\mathrm{PC}}^\dagger\right].
\end{align}
This implies GEnTOs and Gaussian thermal operations (GTOs) are equivalent on the level of channels.
\end{theorem}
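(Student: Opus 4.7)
The plan is to show that the free dilation guaranteed by Lemma~\ref{le:free_dilation_app} automatically becomes a \emph{thermal} dilation (Gibbs ancilla plus passive Gaussian unitary) as soon as the GCO preserves a Gibbs state. First, by Proposition~\ref{pro: indep evols}, I would reduce to a single frequency sector: the Gibbs-preservation condition forces $\mE_3$ in that decomposition to append precisely the Gibbs state on any frequencies absent from the input, while $\mE_1$ is the free partial trace, so the nontrivial content lies in each single-frequency block $\mE_\omega$. Using the equalization remark after Lemma~\ref{le:gene_GCO_app}, I may further assume $\Inm = \Outm = m$. In this setting the Gibbs state at frequency $\omega$ has second moments $\mu_\gamma = \nu\,\iden_m$ with $\nu = \tfrac{1}{2}\coth(\beta\omega/2)$ and $\chi_\gamma = 0$.

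The key algebraic step is to insert this $\mu_\gamma$ into Eq.~\eqref{eq:2nd_moment_mu}. Gibbs preservation $\mu_\gamma = A\mu_\gamma A^\dagger + B$ collapses instantly to
\begin{equation}\label{eq:key_B_GEnTO}
B \;=\; \nu\bigl(\iden - AA^\dagger\bigr).
\end{equation}
Because $B \geq 0$ and $\nu>0$, Eq.~\eqref{eq:key_B_GEnTO} automatically enforces both (F1) $\iden - AA^\dagger \geq 0$ and (F2) $\supp(B) = \supp(\iden - AA^\dagger)$, so Lemma~\ref{le:free_dilation_app} applies. Unpacking the explicit dilation constructed there: with the singular-value decomposition $A = U\,\diag(\sqrt{1-\Lambda_+},\iden_{m-m_+})\,W$ of Eq.~\eqref{eq:A_SVD}, identity~\eqref{eq:key_B_GEnTO} identifies $B_+ = \nu\Lambda_+$ in the block form of Eq.~\eqref{eq:B_decom}, whence the prescribed ancilla second moments are
\begin{equation}
\mu_R \;=\; \Lambda_+^{-1/2}\,B_+\,\Lambda_+^{-1/2} \;=\; \nu\,\iden_{m_+}, \qquad \chi_R \;=\; 0.
\end{equation}
This is exactly the Gibbs state $\gamma_R = e^{-\beta H_R}/\tr(e^{-\beta H_R})$ of $m_+$ modes at frequency $\omega$, so the ancilla is thermally free.

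The step I expect to demand the most care is verifying that the dilation unitary $V$ of Eq.~\eqref{eq:dila_V} is not merely Gaussian but also $\mathrm{U}(1)$-covariant, i.e.\ $[V, H_S + H_R] = 0$. Structurally, $V$ is a composition of two block-diagonal unitaries of the form $U\oplus\iden$ and $W\oplus\iden$ sandwiching a real orthogonal ``beam-splitter'' block that couples the $S$- and $R$-annihilation operators at the common frequency $\omega$ without ever mixing annihilation and creation operators. Hence in the $(A,B)$-language $V$ corresponds to a passive GCO unitary ($AA^\dagger = \iden$, $B=0$), which commutes with the total photon-number operator and therefore with $H_S + H_R$. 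Reassembling these per-frequency dilations across all frequency sectors (combining with the free partial trace for $\mE_1$ and with the Gibbs-state append for $\mE_3$) then yields the claimed thermal dilation of $\mE_{\mathrm{GEnTO}}$ and establishes the channel-level equivalence $\mathrm{GEnTO} = \mathrm{GTO}$.
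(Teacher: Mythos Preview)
Your proposal is correct and follows essentially the same route as the paper: you derive $B = \nu(\iden - AA^\dagger)$ from Gibbs preservation, feed it into the explicit dilation of Lemma~\ref{le:free_dilation_app} to obtain $\mu_R = \nu\,\iden_{m_+}$, and identify $V$ as a passive (beam-splitter plus local passive unitaries) dilation. Your treatment is in fact slightly more careful than the paper's, explicitly invoking Proposition~\ref{pro: indep evols} for the multi-frequency reduction and spelling out why $V$ is $\mathrm{U}(1)$-covariant, but the core argument is identical.
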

\begin{proof}
The second moment of the Gibbs state reads $(\mu_\gamma^\beta,0)$, where $\mu_\gamma^\beta=(\bar{n}_\beta+\frac{1}{2})\iden$ and $\bar{n}_\beta=\frac{1}{e^{\beta\omega}-1}$.
The Gibbs-preserving condition is then written as
\begin{align}
    (\bar{n}_{\beta}+\frac{1}{2})AA^\dagger +B= (\bar{n}_{\beta}+\frac{1}{2})\iden.\label{eq:fix_point}
\end{align}
Condition (F1) in Lemma~\ref{le:free_dilation_app} is ensured by $B\geq0$. For condition (F2), we rewrite Eq. (\ref{eq:fix_point}) as
\begin{align}
    B=(\bar{n}_{\beta}+\frac{1}{2})(\iden-AA^\dagger).
\end{align}
It follows directly that $B_+$ and $\Lambda_+$ in Eqs. (\ref{eq:I_AA_SVD}) and (\ref{eq:B_decom}) are related as
\begin{align}
    B_+=(\bar{n}_{\beta}+\frac{1}{2})\Lambda_+.
\end{align}
Therefore, the dilation for a GCO with fixed-point $(\mu_\gamma^\beta,0)$ is $\mu_R=\mu_\gamma^\beta$ and $V$ in the form of Eq. (\ref{eq:dila_V}). In other words, any Gaussian EnTO $\mE$ can be decomposed as $\mE=\mU_L\circ\mE_D\circ\mU_R$, where $\mU_L$ and $\mU_R$ are Gaussian phase-covariant unitaries, and $\mE_D$ is realized by mixing the $j$th system mode ($j=1,\dots,m_+$) with an auxiliary mode initially in the Gibbs state at inverse temperature $\beta$ via a beam-splitter with transmittance $\eta_j=1-\lambda_j$. It means that the sets of GEnTOs and GTOs are equivalent. 
\end{proof}

\section{The monotones of Gaussian covariant channels}
\label{ap:proofs2}

\subsection{Proof of finiteness and faithfulness (P1 in main text)}
Recall the definition of monotones 
\begin{align}\label{eq: Slpm def in main}
	\Sl_{\pm}(\mu,\chi) \coloneqq \boldsymbol{\sigma}_{1}\left[ \left(\mu^{*}\pm\frac{\iden}{2}\right)^{-\frac{1}{2}}\chi\ \left(\mu\pm\frac{\iden}{2}\right)^{-\frac{1}{2}}\right],
\end{align}
where $(\cdot)^{-\frac{1}{2}}$ denotes the pseudoinverse of $(\cdot)^{\frac{1}{2}}$ and $\boldsymbol{\sigma}_{1}[\cdot]$ denotes the largest singular value.
By definitions of singular values and pseudoinverse, $0\leq \Sl_{\pm}(\mu,\chi) < \infty$.

To prove the faithfulness, we need to show that $\ker(\mu\pm\frac{\iden}{2})\subseteq \ker(\chi)$.
\begin{proposition}\label{proposition: kernels inclusion}
	$\ker(\mu\pm\frac{\iden}{2})\subseteq \ker(\chi)$ for any second moments of a quantum state $(\mu,\chi)$.
\end{proposition}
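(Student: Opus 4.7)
The plan is to extract both kernel inclusions directly from the bosonic uncertainty relation $M+\mathbf{Z}\geq 0$ stated in the main text. Writing this out in block form,
\begin{equation*}
M+\mathbf{Z} \;=\; \begin{pmatrix} \mu^{*}+\tfrac{\iden}{2} & \chi \\[2pt] \chi^{\dagger} & \mu-\tfrac{\iden}{2} \end{pmatrix} \;\geq\; 0,
\end{equation*}
the lower-right block immediately yields $\mu\geq\iden/2$, hence $\mu+\iden/2\geq\iden>0$. The $+$ case of the claim is therefore vacuous: $\ker(\mu+\iden/2)=\{0\}\subseteq\ker(\chi)$ automatically.

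The $-$ case reduces to the standard Schur-type fact that for any positive semidefinite Hermitian $2\times 2$ block matrix, the kernel of a diagonal block is contained in the kernel of the off-diagonal block sharing its row. Applying this to the above block matrix with diagonal block $\mu-\iden/2$ gives exactly $\ker(\mu-\iden/2)\subseteq\ker(\chi)$. For completeness, the elementary derivation I would spell out is to fix $v\in\ker(\mu-\iden/2)$ and test the nonnegative quadratic form $M+\mathbf{Z}$ on vectors of the shape $(tu,v)^{\tp}$ with $t\in\mathbb{R}$ and arbitrary $u$. Since $(\mu-\iden/2)v=0$, the constant term vanishes and the inequality becomes
\begin{equation*}
t^{2}\,u^{\dagger}\bigl(\mu^{*}+\tfrac{\iden}{2}\bigr)u \;+\; 2t\,\Re\!\left(u^{\dagger}\chi v\right) \;\geq\; 0 \qquad \forall\, t\in\mathbb{R},
\end{equation*}
which forces $\Re(u^{\dagger}\chi v)=0$; the same argument with $u$ replaced by $iu$ gives the imaginary part, so $u^{\dagger}\chi v=0$ for every $u$, i.e.\ $\chi v=0$.

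I do not anticipate any real obstacle. The whole statement is essentially one invocation of the block-PSD kernel lemma applied to the bosonic uncertainty relation under the paper's conventions. The only bookkeeping point worth flagging is that $\chi$ is symmetric, so $\chi^{*}=\chi^{\dagger}$, making the lower-left block of $M+\mathbf{Z}$ genuinely the Hermitian conjugate of the upper-right block, which is precisely what is needed for the Schur-type conclusion to apply.
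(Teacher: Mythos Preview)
Your proof is correct and follows essentially the same route as the paper: both deduce the $+$ case trivially from $\mu+\tfrac{\iden}{2}>0$, and for the $-$ case both exploit the block form of the uncertainty relation $M+\mathbf{Z}\geq 0$ to conclude that any vector annihilated by the diagonal block $\mu-\tfrac{\iden}{2}$ must also be annihilated by the off-diagonal block $\chi$. The only cosmetic difference is that the paper invokes the one-line fact ``$P\geq 0$ and $x^{\dagger}Px=0\Rightarrow Px=0$'' directly on $(0,\psi^{\dagger})^{\tp}$, whereas you unpack the same conclusion via the $t$-parametrized test vector.
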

\begin{proof}
	First, $\ker(\mu+\frac{\iden}{2})\subseteq \ker(\chi)$ is obvious from $\mu+\frac{\iden}{2}>0$.
For the second part, recall that 
\begin{align}
	M+\mmZ = \begin{pmatrix}
		\mu^{*} + \frac{\iden}{2} & \chi\\
		\chi^{*} & \mu-\frac{\iden}{2}
	\end{pmatrix} \geq0,
\end{align}
from the uncertainty principle for the second moments.
For any $m$-dimensional vector $\psi^{\dagger}$, a direct calculation using the above equation shows that
\begin{align}
	\begin{pmatrix}
		0 & \psi
	\end{pmatrix} (M+\mmZ) \begin{pmatrix}
		0\\ \psi^{\dagger}
	\end{pmatrix} = \psi(\mu-\frac{\iden}{2})\psi^{\dagger},
\end{align}
and since both $\mu-\frac{\iden}{2}$ and $M+\mmZ$ are positive semidefinite,
\begin{align}\label{eq: equivalence of mu and chi kers}
	\psi(\mu-\frac{\iden}{2})\psi^{\dagger} = 0 \quad \iff \quad (M+\mmZ)\begin{pmatrix}
		0 \\ \psi^{\dagger}
	\end{pmatrix} = \begin{pmatrix}
		\chi\psi^{\dagger} \\ (\mu-\frac{\iden}{2})\psi^{\dagger}
	\end{pmatrix} = 0,
\end{align}
proving $\ker(\mu-\frac{\iden}{2})\subseteq \ker(\chi)$.
\end{proof}

On a related note, we show that without loss of generality, we can focus on states characterized by the second moments $(\mu,\chi)$ that fulfill the condition $\mu - \iden /2 > 0$, in other words, we can safely discard the modes that correspond to zero eigenvalues. This is because we show that other states that do not satisfy this condition can nevertheless be reversibly transformed via GCOs to such states.

\begin{lemma}\label{le:vac-block-diag}
	Let $(\mu,\chi)$ represent the second moment of an $m$-mode state such that $\mu-\frac{\iden}{2}$ is not positive definite. 
	Then there exists a unitary GCO $(V,0)$ and an $m'$-mode state with $0<m'<m$ with the second moment $(\mu'_{S'},\chi'_{S'})$ such that $V\mu V^{\dagger} = \frac{\iden}{2} \oplus \mu'_{S'}$ and $V^{*}\chi V^{\dagger} = 0 \oplus \chi'_{S'}$, where
	\begin{itemize}
		\item $\mu'_{S'}-\frac{\iden}{2}$ is positive definite,
		\item $(\mu_{S},\chi_{S})\toGCO(\mu'_{S'},\chi'_{S'})$ and $(\mu'_{S'},\chi'_{S'}) \toGCO (\mu_{S},\chi_{S})$, i.e. the transformations between the two are reversible.
	\end{itemize}
\end{lemma}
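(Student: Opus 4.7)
The plan is to construct the unitary $V$ from the spectral decomposition of $\mu - \iden/2$ and use the kernel inclusion established in Proposition~\ref{proposition: kernels inclusion} to obtain a simultaneous block decomposition of $\mu$ and $\chi$. By the uncertainty relation $M + \mathbf{Z} \geq 0$ we have $\mu - \iden/2 \geq 0$, and by hypothesis this matrix has a nontrivial kernel $K$ of dimension $m - m'$ with $0 < m - m' \leq m$. Pick any $m \times m$ unitary $V$ that maps $K$ onto the span of the first $m-m'$ coordinate vectors. Then $V(\mu - \iden/2)V^\dagger = 0 \oplus (\mu'_{S'} - \iden/2)$ with $\mu'_{S'} - \iden/2 > 0$, and equivalently $V\mu V^\dagger = \iden/2 \oplus \mu'_{S'}$.

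Next I would use Proposition~\ref{proposition: kernels inclusion} to obtain the block structure of $\chi$. The inclusion $K \subseteq \ker(\chi)$ means that in the rotated basis the columns of $V^* \chi V^\dagger$ indexed by $K$ are zero; the Autonne-type symmetry $\chi^{\tp} = \chi$ (so that $V^*\chi V^\dagger$ is complex symmetric) then forces the corresponding rows to vanish as well. Therefore $V^* \chi V^\dagger = 0 \oplus \chi'_{S'}$, matching the desired decomposition.

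Finally, I would assemble the reversible GCO transformation from these pieces. The pair $(V, 0)$ is a unitary GCO, since unitary GCOs are exactly those $(A, B)$ with $A$ unitary and $B = 0$, as established in the main characterization. The forward direction $(\mu_S, \chi_S) \toGCO (\mu'_{S'}, \chi'_{S'})$ is realized by applying $(V, 0)$ and then tracing out the first $m-m'$ modes, whose state is the vacuum; both operations are covered by Proposition~\ref{pro: indep evols}. The reverse direction is realized by first appending $m - m'$ vacuum modes (a GCO because the vacuum is Gaussian symmetric) and then applying the inverse unitary GCO $(V^\dagger, 0)$. The only formal obstacle is making sure that $m' > 0$: if $\mu = \iden/2$ identically, then Proposition~\ref{proposition: kernels inclusion} forces $\chi = 0$ and the state is the full vacuum, which is the degenerate case that is implicitly excluded by the condition $0 < m'$ in the lemma. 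In all other cases the spectral-plus-kernel argument above produces the claimed simultaneous block form, and the reversibility is immediate because both steps are unitarily conjugate through $V$ with vacuum ancillae on the trivial block.
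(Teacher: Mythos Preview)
Your proof is correct and follows essentially the same approach as the paper: diagonalize $\mu-\iden/2$ by a unitary $V$ sending its kernel to the first block, invoke Proposition~\ref{proposition: kernels inclusion} together with the symmetry of $\chi$ to obtain the compatible block structure $V^*\chi V^\dagger = 0\oplus \chi'_{S'}$, and realize the two GCO directions by the unitary $(V,0)$ plus tracing out vacuum modes (forward) and appending vacuum modes plus $(V^\dagger,0)$ (reverse). Your additional remarks on why the symmetry of $\chi$ kills the off-diagonal block and on the degenerate case $m'=0$ are fine elaborations but do not change the argument.
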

\begin{proof}
	Let $V^{\dagger}$ be the unitary that transforms the kernel of $\mu-\frac{\iden}{2}$ into the first $(m-m')$ basis vectors in the usual basis. 
	Then the Hermitian matrix $V(\mu-\frac{\iden}{2})V^{\dagger} = 0 \oplus (\mu'_{S'} - \frac{\iden}{2})$ with some positive definite $m'\times m'$ matrix $(\mu'_{S'} - \frac{\iden}{2})>0$.
	From Proposition~\ref{proposition: kernels inclusion}, we have $\ker(\mu-\frac{\iden}{2})\subseteq\ker(\chi)$, which implies $V^{*}\chi V^{\dagger} = 0 \oplus \chi'_{C'}$ with some symmetric $m'\times m'$ matrix $\chi'_{C'}$.
	The transformation $(\mu_{S},\chi_{S})\toGCO(\mu'_{S'},\chi'_{S'})$ can be made by i) applying the GCO unitary $(V,0)$ and ii) tracing out the first $(m-m')$ modes. 
	Conversely, $(\mu'_{S'},\chi'_{S'}) \toGCO (\mu_{S},\chi_{S})$ is realized by i) appending $(m-m')$ vacuum modes and ii) applying the GCO unitary $(V^{\dagger},0)$.
\end{proof}

\subsection{Proof of monotonicity (P2 in main text)}

	For the rest of our proofs, we often use an alternative definition of $\Sl_{\pm}$ that is equivalent to Definition~\ref{def: Sl monotones} in the main text.
	\begin{lemma}\label{lemma: Slpm takagi}
		Let $(\mu,\chi)$ be the second moments of an $m$-mode quantum state with $\chi\neq0$. 
		Then
		\begin{align}\label{eq: Slpm def in app}
			\Sl_{\pm}(\mu,\chi)= \sup\limits_{\psi^*\chi\psi^\dagger\neq0}  \frac{|\psi^*\chi\psi^\dagger|}{\psi (\mu\pm\frac{\iden}{2})\psi^\dagger},
		\end{align}
		which are in the form of generalized Rayleigh quotients maximized over $m$-dimensional row vectors $\psi$.
	\end{lemma}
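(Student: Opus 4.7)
The plan is to reduce the singular-value definition Eq.~\eqref{eq: Slpm def in main} to the Rayleigh-quotient form Eq.~\eqref{eq: Slpm def in app} via a change of variables, exploiting that the matrix inside $\boldsymbol{\sigma}_{1}[\cdot]$ is complex-symmetric (which is what the Takagi-style factorization alluded to in the lemma's name is good for).

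Setting $P_{\pm}\coloneqq\mu\pm\iden/2$ and $N_{\pm}\coloneqq P_{\pm}^{*-1/2}\chi P_{\pm}^{-1/2}$, I would first note that $P_{\pm}$ is Hermitian and positive semidefinite (from the uncertainty relation $M+\mmZ\geq 0$), so $P_{\pm}^{T}=P_{\pm}^{*}$; together with $\chi^{T}=\chi$ this gives $N_{\pm}^{T}=N_{\pm}$. The Autonne--Takagi factorization $N_{\pm}=V^{T}DV$ with $V$ unitary and $D$ diagonal PSD then yields the identity
\begin{equation*}
	\sigma_{1}(N_{\pm}) \;=\; \sup_{x\neq 0}\,\frac{|x^{T}N_{\pm}x|}{x^{\dagger}x},
\end{equation*}
by taking $z=Vx$ and bounding $|z^{T}Dz|$ with the triangle inequality, saturated at the top Takagi vector.

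The main step is then the substitution $x = P_{\pm}^{1/2}b$. Using $(P_{\pm}^{-1/2})^{T}=P_{\pm}^{*-1/2}$, one checks that any component of $x$ in $\ker(P_{\pm})$ drops out of both $x^{T}N_{\pm}$ and $N_{\pm}x$, so the supremum may be restricted to $x\in\supp(P_{\pm})$, on which $b\mapsto P_{\pm}^{1/2}b$ is bijective. Direct computation gives $x^{\dagger}x = b^{\dagger}P_{\pm}b$; the projectors $P_{\pm}^{\pm 1/2}P_{\pm}^{\mp 1/2}=\Pi_{\supp(P_{\pm})}$ that pop out when computing $x^{T}N_{\pm}x$ are absorbed into $\chi$ using Proposition~\ref{proposition: kernels inclusion} (and its complex conjugate), which together yield $\Pi_{\supp(P_{\pm}^{*})}\chi\Pi_{\supp(P_{\pm})}=\chi$; hence $x^{T}N_{\pm}x=b^{T}\chi b$. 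Relabeling $b=\psi^{\dagger}$ (so $b^{T}=\psi^{*}$ and $b^{\dagger}=\psi$) produces Eq.~\eqref{eq: Slpm def in app}, and Proposition~\ref{proposition: kernels inclusion} again shows that $\psi^{*}\chi\psi^{\dagger}\neq 0$ forces $\psi^{\dagger}\notin\ker(P_{\pm})$, so the constraint displayed in the supremum is equivalent to the denominator being nonzero.

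The hard part will be the pseudoinverse bookkeeping when $P_{\pm}$ is singular, since $\supp(P_{\pm})$ and $\supp(P_{\pm}^{*})$ need not coincide as subspaces; the kernel inclusions from Proposition~\ref{proposition: kernels inclusion} are exactly what lets the emerging projectors be absorbed by $\chi$. Everything else is routine once $N_{\pm}$ has been recognised as complex-symmetric.
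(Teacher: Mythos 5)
Your proposal is correct and follows essentially the same route as the paper's proof: both rest on the identity $\boldsymbol{\sigma}_{1}[K]=\sup_{\phi\neq0}|\phi^{\tp}K\phi|/\phi^{\dagger}\phi$ for the complex-symmetric matrix $K_{\pm}$, then substitute $\phi=(\mu\pm\tfrac{\iden}{2})^{\frac{1}{2}}\psi^{\dagger}$ and invoke Proposition~\ref{proposition: kernels inclusion} to show that the restriction to $\supp(\mu\pm\tfrac{\iden}{2})$ forced by the pseudoinverse does not alter the supremum. Your explicit Autonne--Takagi justification of the singular-value identity and the projector bookkeeping $\Pi_{\supp(P_{\pm}^{*})}\chi\Pi_{\supp(P_{\pm})}=\chi$ are just more detailed renderings of steps the paper states more briefly.
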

	\begin{proof}
		We start from Eq.~\eqref{eq: Slpm def in main} and derive Eq.~\eqref{eq: Slpm def in app}.
		Since the matrix $K_{\pm} \coloneqq (\mu^{*}\pm\frac{\iden}{2})^{-\frac{1}{2}}\chi\ (\mu\pm\frac{\iden}{2})^{-\frac{1}{2}}$ is symmetric, the largest singular value $\Sl_{\pm}(\mu,\chi)$ can be written as 
		\begin{align}\label{eq:Sl_pm_def_compact}
			\Sl_\pm(\mu,\chi)=\sup_{\|\phi\|=1}\left|\phi^{\tp} K_{\pm} \phi\right| = \sup_{\phi}\frac{\left |\phi^{\tp}K_{\pm}\phi\right |}{\phi^{\dagger}\phi}.
		\end{align}
		Note that the supremum is achieved only when $\phi \in \supp(\mu\pm\frac{\iden}{2})$ because the pseudoinverse operator $(\mu\pm\frac{\iden}{2})^{-\frac{1}{2}}$ inside $K_{\pm}$ annihilate any component of $\phi$ that is not in the support of $\mu\pm\frac{\iden}{2}$.
		Define $\psi^{\dagger} = (\mu\pm\frac{\iden}{2})^{-\frac{1}{2}}\phi$ and we obtain $|\phi^{\tp} K_{\pm} \phi| = |\psi^{*} \chi \psi^{\dagger}|$ for the numerator, and $\phi^{\dagger}\phi = \psi (\mu\pm\frac{\iden}{2})\psi^\dagger$ for the denominator using $\phi \in \supp(\mu\pm\frac{\iden}{2})$. 
		Now we can write 
		\begin{align}\label{eq: Slpm def with constraints}
			\Sl_{\pm}(\mu,\chi)= \sup\limits_{\psi^{\dagger}\in\supp(\mu\pm\frac{\iden}{2})}  \frac{|\psi^*\chi\psi^\dagger|}{\psi (\mu\pm\frac{\iden}{2})\psi^\dagger} = \sup\limits_{\substack{\psi^{\dagger}\in\supp(\mu\pm\frac{\iden}{2})\\ \psi^*\chi\psi^\dagger\neq0}}  \frac{|\psi^*\chi\psi^\dagger|}{\psi (\mu\pm\frac{\iden}{2})\psi^\dagger},
		\end{align}
		where the second equality follows from the fact that $\Sl_{\pm}(\mu,\chi)\geq0$ and $\psi^*\chi\psi^\dagger=0$ gives $\Sl_{\pm}(\mu,\chi) = 0$.
		The only difference between Eqs.~\eqref{eq: Slpm def in app} and~\eqref{eq: Slpm def with constraints} is the additional constraint $\psi^{\dagger}\in\supp(\mu\pm\frac{\iden}{2})$ for the latter. 
		
		A general vector $\tilde{\psi}^{\dagger}$ can be written as $\tilde{\psi}^{\dagger} = \psi^{\dagger} + \zeta^{\dagger}$ where $\psi^{\dagger}\in\supp(\mu\pm\frac{\iden}{2})$ and $\zeta^{\dagger}\in\ker(\mu\pm\frac{\iden}{2})$.
		We substitute $\psi^{\dagger}$ with this $\tilde{\psi}^{\dagger}$ in the RHS of Eq.~\eqref{eq: Slpm def in app}.
		Proposition~\ref{proposition: kernels inclusion} implies that $\chi\zeta^{\dagger} = 0$ and $\frac{|\tilde{\psi}^*\chi\tilde{\psi}^\dagger|}{\tilde{\psi} (\mu\pm\frac{\iden}{2})\tilde{\psi}^\dagger} = \frac{|\psi^*\chi\psi^\dagger|}{\psi (\mu\pm\frac{\iden}{2})\psi^\dagger}$.
		Therefore, the additional constraint $\psi^{\dagger}\in\supp(\mu\pm\frac{\iden}{2})$ does not change the supremum value of Eq.~\eqref{eq: Slpm def in app}, and we prove the desired equivalence. 
	\end{proof}

Now we prove the monotonicity of $\Sl_{\pm}$ using Lemma~\ref{lemma: Slpm takagi}.
 
\begin{lemma}[GCO monotones]\label{le:monotones_app}
	Whenever $(\mu,\chi)\toGCO(\mu',\chi')$ is possible for some $\Inm$-mode and $\Outm$-mode systems $(\mu,\chi)$ and $(\mu',\chi')$, then 
    \begin{align}
        \Sl_{\pm}(\mu,\chi)\geq\Sl_{\pm}(\mu',\chi').
    \end{align}
\end{lemma}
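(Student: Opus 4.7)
The plan is to use the Rayleigh-quotient form of $\Sl_{\pm}$ established in Lemma~\ref{lemma: Slpm takagi} and the explicit $(\mu,\chi)\mapsto(A\mu A^{\dagger}+B,\,A^{*}\chi A^{\dagger})$ update from Lemma~\ref{le:gene_GCO_app}. The key idea is that, for any $\Outm$-dimensional test row vector $\phi$ evaluating $\Sl_{\pm}(\mu',\chi')$, the $\Inm$-dimensional row vector $\psi\coloneqq\phi A$ is a legitimate test vector for $\Sl_{\pm}(\mu,\chi)$, and the Rayleigh quotient at $\phi$ is dominated by the Rayleigh quotient at $\psi$.

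First, I would show equality of the numerators. A direct computation gives
\begin{equation}
\psi^{*}\chi\psi^{\dagger}=\phi^{*}A^{*}\chi A^{\dagger}\phi^{\dagger}=\phi^{*}\chi'\phi^{\dagger},
\end{equation}
so in particular $\phi^{*}\chi'\phi^{\dagger}\neq 0$ if and only if $\psi^{*}\chi\psi^{\dagger}\neq 0$.

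Next, I would bound the denominators using the complete-positivity constraint of Lemma~\ref{le:gene_GCO_app}. Writing $\mu'=A\mu A^{\dagger}+B$ and using $\psi=\phi A$, one finds
\begin{equation}
\phi\!\left(\mu'\pm\tfrac{\iden}{2}\right)\!\phi^{\dagger}-\psi\!\left(\mu\pm\tfrac{\iden}{2}\right)\!\psi^{\dagger}=\phi\!\left(B\pm\tfrac{1}{2}(\iden-AA^{\dagger})\right)\!\phi^{\dagger}\geq 0,
\end{equation}
where the inequality is exactly the constraint \eqref{eq:HUR_A_B} applied separately with each sign. Combining the numerator equality with this denominator inequality,
\begin{equation}
\frac{|\phi^{*}\chi'\phi^{\dagger}|}{\phi(\mu'\pm\frac{\iden}{2})\phi^{\dagger}}\leq\frac{|\psi^{*}\chi\psi^{\dagger}|}{\psi(\mu\pm\frac{\iden}{2})\psi^{\dagger}}\leq\Sl_{\pm}(\mu,\chi),
\end{equation}
whenever both denominators are strictly positive. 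Taking the supremum over admissible $\phi$ gives the desired monotonicity.

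The one step that requires a bit of care, but not much, is the boundary case in which either denominator vanishes. If $\phi(\mu'\pm\tfrac{\iden}{2})\phi^{\dagger}=0$, then $\phi^{\dagger}\in\ker(\mu'\pm\tfrac{\iden}{2})$, so Proposition~\ref{proposition: kernels inclusion} applied to $(\mu',\chi')$ forces $\chi'\phi^{\dagger}=0$ and such $\phi$ is excluded from the supremum in \eqref{eq: Slpm def in app}. If instead the right-hand denominator $\psi(\mu\pm\tfrac{\iden}{2})\psi^{\dagger}$ vanishes, Proposition~\ref{proposition: kernels inclusion} forces $\chi\psi^{\dagger}=0$, and by the numerator equality also $\phi^{*}\chi'\phi^{\dagger}=0$, so again $\phi$ contributes $0$ to the supremum. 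Hence no genuine obstruction arises, and the supremum inequality holds over the full index set of $\phi$ with $\phi^{*}\chi'\phi^{\dagger}\neq 0$. I do not expect any other technical difficulty; the proof is essentially a one-line substitution once Lemma~\ref{lemma: Slpm takagi} is in hand.
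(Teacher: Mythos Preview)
Your proposal is correct and follows essentially the same route as the paper: both arguments pass to the Rayleigh-quotient form of $\Sl_{\pm}$ via Lemma~\ref{lemma: Slpm takagi}, substitute the test vector $\psi=\phi A$, invoke the constraint $B\geq\pm\tfrac12(\iden-AA^{\dagger})$ to get the denominator inequality (which the paper writes equivalently as $\mu'\pm\tfrac{\iden}{2}\geq A(\mu\pm\tfrac{\iden}{2})A^{\dagger}$), and dispose of the degenerate case with Proposition~\ref{proposition: kernels inclusion}. The only cosmetic difference is that the paper picks the optimal $\bar\psi$ for $(\mu',\chi')$ at the outset, whereas you bound every admissible $\phi$ and take the supremum afterward.
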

\begin{proof}
	Suppose that $(\mu,\chi)\toGCO(\mu',\chi')$. 
	From Lemma~\ref{le:gene_GCO_app} and the uncertainty principle, we have that the initial and final second moments are related to each other by 
	\begin{align}
		\mu'\pm\frac{\iden}{2 }& \geq  A(\mu\pm\frac{\iden}{2})A^\dagger \geq 0, \label{eq:mup_vs_mu}\\
		\chi' & =  A^*\chi A^\dagger.\label{eq:chip_vs_chi}
	\end{align}
	Suppose that $\bar\psi$ is the optimal vector for the form of $\Sl_{\pm}$ in Eq.~\eqref{eq: Slpm def in app} achieving 
	\begin{align}
		\Sl_{\pm}(\mu',\chi') = \frac{|\bar\psi^*\chi'\bar\psi^\dagger|}{\bar\psi (\mu'\pm\frac{\iden}{2})\bar\psi^\dagger}.
	\end{align}
	Let us consider the vector $\bar\psi_A\coloneqq\bar\psi A$. Due to Eq.~\eqref{eq:chip_vs_chi}, this is simply another $m$-dimensional vector that gives rise to $\bar\psi^{*}\chi'\bar\psi^{\dagger} = \bar\psi_{A}^{*}\chi\bar\psi_{A}^{\dagger}$.
	If $\bar\psi_A^* \chi\bar\psi_A^\dagger=0$, then $\Sl_\pm(\mu',\chi')=0$, and $\Sl_\pm(\mu,\chi)\geq\Sl_\pm(\mu',\chi')$ trivially holds from the non-negativity of $\Sl_\pm$. 
	Otherwise, $\bar\psi_A^* \chi\bar\psi_A^\dagger\neq0$, implying $\bar\psi_{A}(\mu-\frac{\iden}{2})\bar\psi_A^{\dagger} \neq 0$ from Proposition~\ref{proposition: kernels inclusion}. 
	In this case, from Eq.~\eqref{eq:mup_vs_mu} it follows that 
	\begin{align}\label{eq: monotonicity-easy}
		\Sl_{\pm}(\mu',\chi') \leq \frac{|\bar\psi_A^* \chi\bar\psi_A^\dagger|}{\bar\psi_A(\mu\pm\frac{\iden}{2})\bar\psi_A^\dagger}\nonumber\leq \Sl_{\pm}(\mu,\chi).
	\end{align}
    This concludes the proof.
\end{proof}

\subsection{Proof of complete non-extensiveness (P3 in main text)}

We establish a key property of the above GCO monotone, which will be useful to establish monotonicity under catalytic operations. In particular, we note the general non-extensiveness of the GCO monotone.
\begin{lemma}[Lower bounds on the GCO monotones]\label{le:monotones-composite}
	Let $(\mu,\chi)$ be the second moments of a $(m_{S}+m_{C})$-dimensional system whose block matrix form is 
	\begin{align}
		\mu = \begin{pNiceArray}{c:c}
			\mu_{S} & \mu_{SC} \\
			\hdottedline
			\mu_{CS} & \mu_{C}
		\end{pNiceArray},\quad 
		\chi = \begin{pNiceArray}{c:c}
			\chi_{S} & \chi_{SC} \\
			\hdottedline
			\chi_{CS} & \chi_{C}
		\end{pNiceArray},
	\end{align}
	 with $m_{S}\times m_{S}$-blocks $\mu_{S},\chi_{S}$ and $(m_{C}\times m_{C})$-blocks $\mu_{C},\chi_{C}$.
	 Then
     \begin{align}\label{eq:lowerbound}
         \Sl_{\pm}(\mu,\chi) \geq \max\{\Sl_{\pm}(\mu_{S},\chi_{S}), \Sl_{\pm}(\mu_{C},\chi_{C})\}.
     \end{align}
	 Furthermore, equality for Eq.~\eqref{eq:lowerbound} is achieved for systems that are uncorrelated, i.e. $\mu_{SC} = \chi_{SC} = 0$. 
\end{lemma}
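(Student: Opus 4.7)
The plan is to exploit the variational characterization of $\Sl_{\pm}$ provided by Lemma~\ref{lemma: Slpm takagi}, namely the generalized Rayleigh quotient form in Eq.~\eqref{eq: Slpm def in app}. This form turns the inequality into an exercise in choosing good test vectors for the supremum, rather than a matrix-analytic calculation involving pseudoinverses.

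For the lower bound, I would use a padding argument. Suppose $\chi_{S}\neq 0$ (the case $\chi_{S}=0$ gives $\Sl_{\pm}(\mu_{S},\chi_{S})=0$ by (P1), so the bound is trivial). Pick $\bar{\psi}_{S}$ nearly optimal in Eq.~\eqref{eq: Slpm def in app} for $(\mu_{S},\chi_{S})$ with $\bar{\psi}_{S}^{*}\chi_{S}\bar{\psi}_{S}^{\dagger}\neq 0$, and define the zero-padded row vector $\psi=(\bar{\psi}_{S},\,0)$ of the full system. Direct block-multiplication gives
\begin{equation}
	\psi^{*}\chi\psi^{\dagger}=\bar{\psi}_{S}^{*}\chi_{S}\bar{\psi}_{S}^{\dagger},\qquad \psi\bigl(\mu\pm\tfrac{\iden}{2}\bigr)\psi^{\dagger}=\bar{\psi}_{S}\bigl(\mu_{S}\pm\tfrac{\iden}{2}\bigr)\bar{\psi}_{S}^{\dagger},
\end{equation}
since the off-diagonal blocks $\mu_{SC},\chi_{SC}$ are annihilated by the zeros in $\psi$ (the diagonal block $\chi_{S}$ does not depend on whether the system is uncorrelated). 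Plugging $\psi$ into the supremum shows $\Sl_{\pm}(\mu,\chi)\geq \Sl_{\pm}(\mu_{S},\chi_{S})$, and an identical argument with the $C$-side padding yields $\Sl_{\pm}(\mu,\chi)\geq \Sl_{\pm}(\mu_{C},\chi_{C})$. Taking the maximum proves Eq.~\eqref{eq:lowerbound}.

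For the equality case with $\mu_{SC}=\chi_{SC}=0$, the matrices become block-diagonal and a general test vector splits as $\psi=(\psi_{S},\psi_{C})$, giving
\begin{equation}
	\frac{|\psi^{*}\chi\psi^{\dagger}|}{\psi(\mu\pm\tfrac{\iden}{2})\psi^{\dagger}}=\frac{|\psi_{S}^{*}\chi_{S}\psi_{S}^{\dagger}+\psi_{C}^{*}\chi_{C}\psi_{C}^{\dagger}|}{\psi_{S}(\mu_{S}\pm\tfrac{\iden}{2})\psi_{S}^{\dagger}+\psi_{C}(\mu_{C}\pm\tfrac{\iden}{2})\psi_{C}^{\dagger}}.
\end{equation}
The key inequality is the mediant bound $\tfrac{a+b}{c+d}\leq \max\{\tfrac{a}{c},\tfrac{b}{d}\}$ valid for $a,b\geq 0$ and $c,d>0$, applied after the triangle inequality in the numerator. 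This upper-bounds the right-hand side by $\max\{\Sl_{\pm}(\mu_{S},\chi_{S}),\Sl_{\pm}(\mu_{C},\chi_{C})\}$, and taking the supremum over $\psi$ gives the reverse inequality, closing the equality.

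The only subtlety I anticipate is handling degenerate denominators: one of $\psi_{S}(\mu_{S}\pm\tfrac{\iden}{2})\psi_{S}^{\dagger}$ or $\psi_{C}(\mu_{C}\pm\tfrac{\iden}{2})\psi_{C}^{\dagger}$ could vanish. This is resolved by Proposition~\ref{proposition: kernels inclusion}, which forces the corresponding numerator term $\psi_{S/C}^{*}\chi_{S/C}\psi_{S/C}^{\dagger}$ to also vanish; such degenerate components contribute nothing and can be dropped, reducing to the nondegenerate case where the mediant inequality applies. A parallel caveat for the lower bound is that we must choose $\bar{\psi}_{S}$ in $\supp(\mu_{S}\pm\tfrac{\iden}{2})$ (which is free by the same proposition and the argument used in Lemma~\ref{lemma: Slpm takagi}), so that the padded vector lies in the admissible set of the full-system supremum.
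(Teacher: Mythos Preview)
Your lower-bound argument via zero-padding in the variational form is exactly the paper's proof. For the equality case, however, you take a different route: you stay in the Rayleigh-quotient picture and combine the triangle inequality with the mediant bound, whereas the paper switches back to the spectral definition Eq.~\eqref{eq: Sl monotone def} and observes that $K_{\pm}(\mu_{S}\oplus\mu_{C},\chi_{S}\oplus\chi_{C})=K_{\pm}(\mu_{S},\chi_{S})\oplus K_{\pm}(\mu_{C},\chi_{C})$, so the largest singular value of the direct sum is the maximum of the two. Both arguments are correct; the paper's is a one-liner that avoids any degenerate-denominator casework, while yours is self-contained in the variational language and does not require revisiting the $K_{\pm}$ matrix. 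Your handling of the degenerate case via Proposition~\ref{proposition: kernels inclusion} is sound, and the final caveat about choosing $\bar{\psi}_{S}\in\supp(\mu_{S}\pm\tfrac{\iden}{2})$ is harmless but unnecessary: the admissibility condition in Eq.~\eqref{eq: Slpm def in app} is only $\psi^{*}\chi\psi^{\dagger}\neq 0$, which your padded vector already satisfies.
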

\begin{proof}
	The first part of the lemma statement is easy to prove using Lemma~\ref{lemma: Slpm takagi}.
	It follows from the fact that trivial extensions of $\psi$ that enter the optimization of $\Sl_{\pm}(\mu,\chi)$ in Eq.~\eqref{eq: Slpm def in app} for individual systems are already included in the optimization across the global system. 
	More concretely, let us start by assuming that $\mu_{S} \neq \frac{\iden}{2}$ and $\mu_{C} \neq \frac{\iden}{2}$.
	Let us additionally consider the row vectors $\psi_{S}\in\mathbb{C}^{m_{S}}$, $\psi_{C}\in\mathbb{C}^{m_{C}}$ that produce
	\begin{align}
		\Sl_{\pm}(\mu_{S},\chi_{S}) &= \frac{|\psi_{S}^{*}\chi_{S}\psi_{S}^{\dagger}|}{\psi_{S}(\mu_{S}\pm\frac{\iden}{2})\psi_{S}^{\dagger}}, \qquad
		\Sl_{\pm}(\mu_{C},\chi_{C}) = \frac{|\psi_{C}^{*}\chi_{C}\psi_{C}^{\dagger}|}{\psi_{C}(\mu_{C}\pm\frac{\iden}{2})\psi_{C}^{\dagger}}.
	\end{align}
	If we choose $\psi = (\psi_{S},0)\in\mathbb{C}^{m_{S}+m_{C}}$, 
	\begin{align}
		\Sl_{\pm}(\mu_{S},\chi_{S})=  \frac{|\psi^{*}\chi\psi^{\dagger}|}{\psi(\mu\pm\frac{\iden}{2})\psi^{\dagger}} \leq \Sl_{\pm}(\mu,\chi).
	\end{align}
	Similarly, $\Sl_{\pm}(\mu_{C},\chi_{C}) \leq \Sl_{\pm}(\mu,\chi)$ can be shown by choosing $\psi = (0,\psi_{C})\in\mathbb{C}^{m_{S}+m_{C}}$.
	Next, note that whenever $\mu_{S} = \frac{\iden}{2}$ and/or $\mu_{C} = \frac{\iden}{2}$, then $\Sl_{\pm}(\mu_{S},\chi_{S}) = 0$ and/or $\Sl_{\pm}(\mu_{C},\chi_{C}) = 0$, giving the same conclusion. 
	
	The special case of equality can also be easily shown using the original definition of $\Sl_{\pm}$.
	Denote $K_{\pm}(\mu,\chi) = (\mu^{*}\pm\frac{\iden}{2})^{-\frac{1}{2}}\chi\ (\mu\pm\frac{\iden}{2})^{-\frac{1}{2}}$; then $\Sl_{\pm}(\mu,\chi)$ is the largest singular value of $K_{\pm}(\mu,\chi)$. 
	It is straightforward to check that $K_{\pm}(\mu_{S}\oplus\mu_{C},\chi_{S}\oplus\chi_{C})=K_{\pm}(\mu_{S},\chi_{S})\oplus K_{\pm}(\mu_{C},\chi_{C})$. 
	Therefore, the largest singular value of the LHS is the larger one of the largest singular values of $K_{\pm}(\mu_{S},\chi_{S})$ and $K_{\pm}(\mu_{C},\chi_{C})$.
\end{proof}

\subsection{Monotonicity under correlated catalysis (P4 in main text)}
The next two lemmas address how our GCO monotones behave under correlated catalytic transformations. Firstly, we again prove that without loss of generality, the catalyst state can be characterized by some $\tilde{\mu}_{C'}>\frac{\iden}{2}$.

\begin{lemma}\label{le: catalyst-vac-removal}
	Suppose that a GCO process $(\mu_{S},\chi_{S}) \to (\mu'_{S},\chi'_{S})$ can be achieved by a catalyst $(\mu_{C},\chi_{C})$.
	Then this process can also be achieved by a catalyst $(\tilde{\mu}_{C'},\tilde{\chi}_{C'})$, such that $\tilde{\mu}_{C'}>\frac{\iden}{2}$.
\end{lemma}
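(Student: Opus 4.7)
The plan is to invoke Lemma~\ref{le:vac-block-diag} on the catalyst to peel off its vacuum modes as a free resource, then absorb them into the input register of the joint channel so that the catalyst that remains satisfies the strict inequality.

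If $\mu_C - \frac{\iden}{2}$ is already positive definite there is nothing to show. Otherwise, Lemma~\ref{le:vac-block-diag} supplies a unitary GCO $\mU_C$, implemented by a unitary matrix $V_C$ on $C$, such that $V_C \mu_C V_C^\dagger = \frac{\iden}{2} \oplus \tilde{\mu}_{C'}$ and $V_C^* \chi_C V_C^\dagger = 0 \oplus \tilde{\chi}_{C'}$ with $\tilde{\mu}_{C'} > \frac{\iden}{2}$, where the first block acts on a subsystem $V$ of vacuum modes. Because a Gaussian state whose second moments decompose as a direct sum over two mode blocks factorizes as a tensor product of Gaussian states, $\mU_C$ reversibly maps the catalyst to the product $(\frac{\iden}{2}, 0)_V \otimes (\tilde{\mu}_{C'}, \tilde{\chi}_{C'})$ of vacuum modes with a strictly non-vacuum residue.

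Next I would conjugate the given joint GCO $\mE_{SC}$ that realizes the original catalytic transformation, obtaining $\mE' \coloneqq (\id_S \otimes \mU_C^\dagger) \circ \mE_{SC} \circ (\id_S \otimes \mU_C)$. This is a composition of GCOs, hence itself a GCO, now viewed as acting on $SVC'$. Feeding it the input $(\mu_S, \chi_S) \otimes (\frac{\iden}{2}, 0)_V \otimes (\tilde{\mu}_{C'}, \tilde{\chi}_{C'})$, the output $\varrho_{SVC'}$ has $S$-marginal $(\mu'_S, \chi'_S)$, while its $VC'$-marginal equals $\mU_C^\dagger (\mu_C, \chi_C) = (\frac{\iden}{2}, 0)_V \otimes (\tilde{\mu}_{C'}, \tilde{\chi}_{C'})$, since partial trace on $S$ commutes with a local unitary on $C$ and the original catalyst marginal is preserved by assumption.

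The proposed catalytic GCO $\tilde{\mE}_{SC'}$ is then the composition of appending vacuum modes $V$ to the system (free, as the vacuum is symmetric Gaussian), running $\mE'$, and tracing $V$ out. Each step is a GCO, so is the composition. Applied to $(\mu_S, \chi_S) \otimes (\tilde{\mu}_{C'}, \tilde{\chi}_{C'})$, it produces $\Tr_V \varrho_{SVC'}$, whose $S$- and $C'$-marginals are $(\mu'_S, \chi'_S)$ and $(\tilde{\mu}_{C'}, \tilde{\chi}_{C'})$ respectively, certifying the desired correlated catalytic transformation with the non-degenerate catalyst. The only slightly delicate point is ensuring the $VC'$-marginal is a genuine tensor product rather than merely direct-sum-block-diagonal in its second moments, but this follows immediately from Lemma~\ref{le:vac-block-diag} together with the Gaussian correspondence between direct-sum second moments and tensor-product states, so no real technical obstacle arises.
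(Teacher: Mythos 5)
Your proof is correct and follows essentially the same route as the paper's: invoke Lemma~\ref{le:vac-block-diag} to split off the catalyst's vacuum block by a local unitary GCO, then realize the new catalytic protocol by appending vacuum modes, conjugating the original joint GCO by that unitary, and tracing the vacuum back out. The only blemish is that the conjugation should read $(\id_S\otimes\mU_C)\circ\mE_{SC}\circ(\id_S\otimes\mU_C^{\dagger})$ so that the block-diagonal input is first mapped back to $(\mu_C,\chi_C)$; with that sign fixed, and noting that the catalytic condition only requires the correct $C'$-marginal (so the tensor-product worry at the end is moot), the argument goes through.
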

\begin{proof}
	Suppose that $\mu_{C}-\frac{\iden}{2}$ is not positive definite. 
	Then Lemma~\ref{le:vac-block-diag} applies, and we can write
	\begin{align}
		V\mu_{C}V^{\dagger} = \frac{\iden}{2} \oplus \tilde{\mu}_{C'},\quad V^{*}\chi_{C} V^{\dagger} = 0 \oplus \tilde{\chi}_{C'}, 
	\end{align} 
	with a unitary $V$, for some $(\tilde{\mu}_{C'}, \tilde{\chi}_{C'})$ with $\tilde{\mu}_{C'}>\frac{\iden}{2}$.
	Denote the original GCO applied to $(\mu_{S}\oplus\mu_{C},\chi_{S}\oplus\chi_{C})$ as $(A,B)$.
	We can construct the following catalytic process starting from $(\mu_{S}\oplus\tilde{\mu}_{C'},\chi_{S}\oplus\tilde{\chi}_{C'})$:
	\begin{enumerate}
		\item Append some vacuum modes to obtain $(\mu_{S}\oplus\frac{\iden}{2} \oplus \tilde{\mu}_{C'},\chi_{S}\oplus0 \oplus \tilde{\chi}_{C'})$.
		\item Apply the unitary GCO $(\iden\oplus V^{\dagger},0)$ to obtain $(\mu_{S}\oplus\mu_{C},\chi_{S}\oplus\chi_{C})$.
		\item Apply the GCO $(A,B)$ to obtain the desired system marginal moments $(\mu'_{S},\chi'_{S})$ while keeping the catalyst marginal moments $(\mu_{C},\chi_{C})$.
		\item Apply the unitary GCO $(0\oplus V,0)$ to transform catalyst moments to be $(\frac{\iden}{2} \oplus \tilde{\mu}_{C'},0 \oplus \tilde{\chi}_{C'})$.
		\item Remove vacuum modes and retrieve the catalyst $(\tilde{\mu}_{C'},\tilde{\chi}_{C'})$.
	\end{enumerate}
	Since all five steps are valid GCOs, the lemma statement holds true.
\end{proof}

\begin{lemma}\label{le: no catalysis} 
	$\Sl_{\pm}(\mu,\chi)\geq\Sl_{\pm}(\mu',\chi')$ whenever $(\mu,\chi)$ can be transformed into $(\mu',\chi')$ with the help of some catalyst state $(\mu_{C},\chi_{C})$ that can be correlated at the end.
\end{lemma}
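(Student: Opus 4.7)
The plan is to bound $\Sl_\pm(\mu'_S,\chi'_S)$ directly via the variational formulation of Lemma~\ref{lemma: Slpm takagi}, propagated through the joint Gaussian covariant operation $(A,B)$ on $SC$. Writing $A$ and $B$ in $2\times 2$ block form $A_{SS},A_{SC},A_{CS},A_{CC}$, the system output moments read $\chi'_S = A_{SS}^* \chi_S A_{SS}^\dagger + A_{SC}^* \chi_C A_{SC}^\dagger$ and $\mu'_S \pm \tfrac{\iden}{2} \geq A_{SS}(\mu_S \pm \tfrac{\iden}{2}) A_{SS}^\dagger + A_{SC}(\mu_C \pm \tfrac{\iden}{2}) A_{SC}^\dagger$, where the inequality comes from the $SS$ block of the CP constraint $B \geq \pm\tfrac{1}{2}(\iden-AA^\dagger)$. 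As a preliminary simplification I would invoke Lemma~\ref{le: catalyst-vac-removal} to assume $\mu_C > \iden/2$ without loss of generality.

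For a test vector $\phi'_S$ in the supremum, set $\alpha \coloneqq \phi'_S A_{SS}$ and $\beta \coloneqq \phi'_S A_{SC}$. The numerator splits cleanly as $\alpha^*\chi_S\alpha^\dagger + \beta^*\chi_C\beta^\dagger$, the first piece being controlled directly by $\Sl_\pm(\mu_S,\chi_S)$ and the second needing further processing. The key manoeuvre is to iterate the catalyst preservation identity $\chi_C = A_{CS}^*\chi_S A_{CS}^\dagger + A_{CC}^*\chi_C A_{CC}^\dagger$ and its CP counterpart $\mu_C \pm \tfrac{\iden}{2} \geq A_{CC}(\mu_C \pm \tfrac{\iden}{2})A_{CC}^\dagger + A_{CS}(\mu_S \pm \tfrac{\iden}{2})A_{CS}^\dagger$. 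After $N$ rounds, $\beta^*\chi_C\beta^\dagger = \sum_{k=0}^{N-1}\gamma_k^*\chi_S\gamma_k^\dagger + \delta_N^*\chi_C\delta_N^\dagger$ with $\gamma_k = \beta A_{CC}^k A_{CS}$ and $\delta_N = \beta A_{CC}^N$, and similarly the denominator is lower-bounded by a matching sum. Applying the triangle inequality, the defining bound $|\psi^*\chi_S\psi^\dagger|\leq \Sl_\pm(\mu_S,\chi_S)\cdot \psi(\mu_S\pm\tfrac{\iden}{2})\psi^\dagger$ on every $\chi_S$-term, and the mediant inequality $(a+b)/(c+d)\leq\max(a/c,b/d)$, the overall ratio is bounded by $\max\bigl(\Sl_\pm(\mu_S,\chi_S),\; |\delta_N^*\chi_C\delta_N^\dagger|/\delta_N(\mu_C\pm\tfrac{\iden}{2})\delta_N^\dagger\bigr)$.

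The hard part will be eliminating the tail, since $A_{CC}$ need not be strictly contractive and $\delta_N$ may fail to vanish. My plan is to decompose $A_{CC}$ into its strictly contractive part, on which $\delta_N \to 0$ and the tail dies by taking $N\to\infty$, and its isometric part, where a direct structural argument takes over. Concretely, on any $A_{CC}$-invariant subspace where $A_{CC}$ acts isometrically, the trace of $\mu_C - A_{CC}\mu_C A_{CC}^\dagger \geq A_{CS}\mu_S A_{CS}^\dagger \geq 0$ collapses and forces $A_{CS}$ to vanish on that subspace; combining this with both signs of the CP inequality on the off-diagonal $SC$ block then forces $A_{SC}$ to vanish on the same subspace, so that $\beta A_{CC}^N$ lives entirely in the contractive part and the tail decays after all. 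The $\chi_S = 0$ case is the cleanest illustration: there $(A_{CC},\,\mu_C - A_{CC}\mu_C A_{CC}^\dagger)$ is itself a GCO on $C$ with fixed-point $(\mu_C,\chi_C)$, and Lemma~\ref{le: fixed point decomposition} applies directly to identify $\supp(\chi_C)$ as the relevant isometric subspace. Extending the structural argument to arbitrary $\chi_S$, where $(\mu_C,\chi_C)$ is only a quasi-fixed-point of the reduced $C$-dynamics with offset $A_{CS}^*\chi_S A_{CS}^\dagger$, is where the bulk of the technical work will sit.
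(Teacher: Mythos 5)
Your overall strategy runs parallel to the paper's: the paper likewise iterates the catalyst-preservation identities, bounds the resulting ratio by a convex-combination (mediant) argument, and then must deal with the case where the iteration never "uses up" the catalyst's asymmetry. The paper organizes this around the catalyst's own optimal vector $\psi$ (reducing, via complete non-extensiveness, to showing $\Sl_{\pm}(\mu_C,\chi_C)\leq\Sl_{\pm}(\mu_S,\chi_S)$ or else a structural decoupling), whereas you work directly with the output-system test vector; that difference is cosmetic. The substantive issue is the step you explicitly defer, and it is a genuine gap rather than routine bookkeeping.

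Two concrete problems sit there. First, your trace argument for the isometric part is not sound as stated: $A_{CC}$ is in general not a Euclidean contraction at all (the CP constraint gives both upper and lower bounds on $A_{CC}A_{CC}^{\dagger}$, neither of which is $\iden$), and it is only a contraction with respect to the metric induced by $\mu_C\pm\frac{\iden}{2}$. Moreover, for a proper $A_{CC}^{\dagger}$-invariant subspace $\mathcal{V}$, the restriction $P_{\mathcal{V}}A_{CC}\mu_C A_{CC}^{\dagger}P_{\mathcal{V}}$ involves $\mu_C$ on the whole catalyst space, so the trace of $\mu_C - A_{CC}\mu_C A_{CC}^{\dagger}$ restricted to $\mathcal{V}$ does not collapse; one needs the $\mu$-weighted similarity transformation, an Autonne--Takagi factorization of the weighted $\chi_C$, and a determinant argument to conclude that the relevant block of $A_{CC}$ is unitary and that $A_{CS}$, $A_{SC}$, and the corresponding blocks of $B$ vanish there --- this is exactly the content of Lemma~\ref{le: fixed point decomposition}, which you invoke only for $\chi_S=0$. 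Second, for general $\chi_S\neq0$ the pair $(\mu_C,\chi_C)$ is not a fixed point of the reduced $C$-dynamics, so that lemma does not apply directly; the paper's resolution (its Case 3-2) is to show that when the tail of your expansion fails to decay along $\delta_N=\beta A_{CC}^N$, the subspace $\mathrm{span}\{(A_{CC}^{\dagger})^N\psi^{\dagger}\}$ carries a \emph{genuine} fixed point of a sub-GCO, peel off the resulting unitarily evolving block, and iterate with a strictly smaller catalyst until termination. Without this reduction-and-iteration step your argument does not close, so completing your proposal would essentially amount to reconstructing the paper's Case 3-1/3-2 analysis.
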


The initial states of $S$ and $C$ are labeled as $(\mu_S,\chi_S)$ and $(\mu_C,\chi_C)$ respectively. After the action of a GCO $(A,B)$, the reduced states becomes $(\mu'_S,\chi'_S)$ and $(\mu'_C,\chi'_C)$. 
In the following, we will sketch the proof of $\Sl_\pm(\mu_S',\chi_S')\leq \Sl_\pm(\mu_S,\chi_S)$ from the catalytic condition $(\mu_C,\chi_C)=(\mu'_C,\chi'_C)$.

Lemma~\ref{lemma: Slpm takagi} singles out vectors $\psi$ achieving the optimal value for each $\Sl_\pm(\mu_C',\chi_C')$.
This vector can be regarded as the most asymmetric mode $C(\psi)$ among catalyst modes $C$ in some basis. 

\textbf{Case 2}: The second moment of this most asymmetric mode $\mu'_{C(\psi)}$ after the catalytic transformation depends on the input of the system modes $\mu_S$. 
We prove that the catalytic condition implies $\Sl_\pm(\mu_C,\chi_C)\leq\Sl_\pm(\mu_S,\chi_S)$. 
Using Lemma~\ref{le:monotones_app} and~\ref{le:monotones-composite}, we obtain $\Sl_\pm(\mu'_S,\chi'_S)\leq \Sl_\pm(\mu_S,\chi_S)$.

\textbf{Case 3-1}: $\mu_C'$ does not depend on $\mu_S$, i.e., the second moment of all the catalytic modes is not affected by the input of the system modes. 
Then the catalytic condition immediately implies that the state of the catalyst is preserved by a smaller GCO. 
By Lemma~\ref{le: fixed point decomposition}, the asymmetric modes in $C$ should evolve independently. Effectively, the system modes can only interact with catalyst modes with vanishing $\chi$, and it follows directly that $\Sl_\pm$ of $S$ is not increased.

\textbf{Case 3-2}: $\mu_C'$ depends on $\mu_S$ but $\mu'_{C(\psi)}$ does not depend on $\mu_S$, i.e., although the second moment of the entire catalytic modes is affected by the input state of the system modes, the most asymmetric one is unaffected.
This can happen only if the number of modes in $C$ is no less than 2. 
For this case, we prove that at least one of the most asymmetric catalytic modes should evolve independently, and $S$ can only interact with the remaining catalytic modes labeled as $C^{(1)}$.

With this new catalyst $C^{(1)}$, we can again identify the most asymmetric mode $C^{(1)}(\psi)$ and iterate the above discussion, which will either prove the Lemma or yield a new catalyst $C^{(2)}$ and so on.
In each iteration, the size of $C^{(j)}$ is strictly reduced, and the iteration terminates as long as it does not fall into \textbf{Case 3-2}. 
In the worst case, we arrive at the single mode catalyst $C^{(m_{C}-1)}$, which necessarily falls into either \textbf{Case 2} or \textbf{3-1} and ends the iteration. 
Below is the formal proof of this iteration.

\begin{proof}
	Suppose that $(\mu,\chi)$ is the second moment for the initial $SC$ state and $(\mu',\chi')$ the one for the final $SC$ state. 
	Using Lemma~\ref{le:monotones_app} and~\ref{le:monotones-composite}, we obtain
	\begin{align}\label{eq: initial S C monotones bounding final S monotone}
		\Sl_{\pm}(\mu,\chi) = \max\{\Sl_{\pm}(\mu_{S},\chi_{S}), \Sl_{\pm}(\mu_{C},\chi_{C})\} \geq \Sl_{\pm}(\mu',\chi') \geq \max\{\Sl_{\pm}(\mu'_{S},\chi'_{S}), \Sl_{\pm}(\mu'_{C},\chi'_{C})\}.
	\end{align}
	The first equality is true since it is always assumed that the initial state of $S$ and $C$ are uncorrelated. 
	Eq.~\eqref{eq: initial S C monotones bounding final S monotone} then implies
	\begin{align}
		\max\{\Sl_{\pm}(\mu_{S},\chi_{S}), \Sl_{\pm}(\mu_{C},\chi_{C})\} \geq \Sl_{\pm}(\mu'_{S},\chi'_{S}).
	\end{align}
	Hence, showing
	\begin{align}\label{eq:desired-inequality}
		\Sl_{\pm}(\mu_{C},\chi_{C}) \leq \Sl_{\pm}(\mu_{S},\chi_{S})
	\end{align}
	is sufficient to prove the monotonicity of $\Sl_{\pm}$ even under catalytic transformations; we do so in \textbf{Cases 1} and \textbf{2} below.
	
	We begin by explicitly writing the GCO $(A,B)$ acting on $SC$ as
	\begin{align}
		A=\begin{pNiceArray}{c:c}
			a_{SS} & a_{SC} \\
			\hdottedline
			a_{CS} & a_{CC}
		\end{pNiceArray},B=\begin{pNiceArray}{c:c}
			b_{SS} & b_{SC} \\
			\hdottedline
			b_{CS} & b_{CC}
		\end{pNiceArray},
	\end{align}
	where $a_{SS}$ and $b_{SS}$ are $m_{S}\times m_{S}$ matrices and $a_{CC}$ and $b_{CC}$ are $m_C\times m_C$ matrices. 
	Recall that if a matrix is positive semidefinite, then all its principal submatrices have to also be positive semidefinite. This means that $B\geq\pm\frac{1}{2}(\iden-AA^\dagger)$ implies
	\begin{align}
		b_{SS} & \geq \pm\frac{1}{2}(\iden-a_{SS}a_{SS}^{\dagger}-a_{SC}a_{SC}^\dagger),\label{eq:posi_b_11}\\
		b_{CC} & \geq \pm\frac{1}{2}(\iden-a_{CS}a_{CS}^\dagger-a_{CC}a_{CC}^\dagger).\label{eq:posi_b_22}
	\end{align}
	The marginal state of $S$ in the output is then calculated as
	\begin{align}
		\mu'_S & =a_{SS}\mu_{S}a_{SS}^{\dagger}+a_{SC}\mu_{C} a_{SC}^\dagger+b_{SS},\label{eq:mu_S_p}\\
		\chi_S' & = a^{*}_{SS}\chi_{S}a_{SS}^{\dagger}+a_{SC}^*\chi_{C} a_{SC}^\dagger.\label{eq:chi_S_p}
	\end{align}
	For $C$, the catalytic condition reads
	\begin{align}
		\mu_C & = a_{CS}\mu_S a_{CS}^\dagger+a_{CC}\mu_C a_{CC}^\dagger+b_{CC},\label{eq:cat-condition-1}\\
		\chi_C & =  a^{*}_{CS}\chi_S a_{CS}^\dagger+a_{CC}^{*}\chi_C a_{CC}^\dagger,\label{eq:cat-condition-2}
	\end{align}
	which can be rewritten as
	\begin{align}
		\mu_C\pm\frac{\iden}{2} & = a_{CS}(\mu_S \pm \frac{\iden}{2})a_{CS}^\dagger+a_{CC}(\mu_C \pm \frac{\iden}{2})a_{CC}^\dagger+\Delta_{\pm},\label{eq:cat-condition-1-alt}\\
		\chi_C & = a^{*}_{CS}\chi_S a_{CS}^\dagger+a_{CC}^{*}\chi_C a_{CC}^\dagger.\label{eq:cat-condition-2-alt}
	\end{align}
	where $\Delta_\pm\equiv b_{CC}\pm\frac{1}{2}(\iden-a_{CS}a_{CS}^\dagger-a_{CC}a_{CC}^\dagger)\geq0$ from Eq.~\eqref{eq:posi_b_22}.
	
	Let $\psi$ be a $m_{C}$-dimensional row vector achieving 
	\begin{align}
		\Sl_{\pm}(\mu_{C},\chi_{C}) = \frac{|\psi^{*}\chi_{C}\psi^{\dagger}|}{\psi (\mu_{C}\pm\frac{\iden}{2})\psi^{\dagger}}.
	\end{align}
	Our proof strategy is to replace $\chi_{C}$ and $\mu_{C}$ with the fixed-point conditions Eqs.~\eqref{eq:cat-condition-1} and~\eqref{eq:cat-condition-2}, which gives
	\begin{align}
		|\psi^{*}\chi_{C}\psi^{\dagger}| = |\psi^{*}a^{*}_{CS}\chi_{S} a_{CS}^{\dagger}\psi^{\dagger} +\psi^{*}a_{CC}^{*}\chi_{C} a_{CC}^{\dagger}\psi^{\dagger}|
	\end{align}
	for the numerator and 
	\begin{align}
		\psi (\mu_{C}\pm\frac{\iden}{2})\psi^{\dagger} = \psi a_{CS}(\mu_S \pm \frac{\iden}{2})a_{CS}^{\dagger}\psi^{\dagger} +\psi a_{CC}(\mu_C \pm \frac{\iden}{2})a_{CC}^{\dagger}\psi^{\dagger} + \psi\Delta_{\pm}\psi^{\dagger}
	\end{align}
	for the denominator. 
	To declutter the notation, we define an $m_{S}$-dimensional row vector $\psi_{CS} \coloneqq \psi a_{CS}$ and an $m_{C}$-dimensional row vector $\psi_{CC} \coloneqq \psi a_{CC}$ to write 
	\begin{align}
		|\psi^{*}\chi_{C}\psi^{\dagger}| &= |\psi_{CS}^{*}\chi_{S} \psi_{CS}^{\dagger} +\psi_{CC}^{*}\chi_{C} \psi_{CC}^{\dagger}|,\\
		\psi (\mu_{C}\pm\frac{\iden}{2})\psi^{\dagger} &= \psi_{CS} (\mu_S \pm \frac{\iden}{2})\psi_{CS}^{\dagger} +\psi_{CC}(\mu_C \pm \frac{\iden}{2})\psi_{CC}^{\dagger} + \psi\Delta_{\pm}\psi^{\dagger}.
	\end{align}
	
	\textbf{Case 1}: consider the trivial case of $\psi_{CS}(\mu_{S}\pm\frac{\iden}{2})\psi_{CS}^{\dagger}=0$ and $\psi_{CC}(\mu_{C}\pm\frac{\iden}{2})\psi_{CC}^{\dagger}=0$.
	Then, $\psi_{CS}^{*}\chi_{S}\psi_{CS}^{\dagger} = 0$ and $\psi_{CC}^{*}\chi_{C}\psi_{CC}^{\dagger} = 0$ from Proposition~\ref{proposition: kernels inclusion} implying that $\psi^*\chi_{C}\psi^\dagger = 0$ and $\Sl_{\pm}(\mu_{C},\chi_{C}) =0$.
	The non-negativity of $\Sl_{\pm}$ yields the desired inequality Eq.~\eqref{eq:desired-inequality}.\\
	
	\textbf{Case 2}: assume that $\psi_{CS}(\mu_{S}\pm\frac{\iden}{2})\psi_{CS}^{\dagger}\neq0$. Note that Eq.~\eqref{eq:cat-condition-1-alt} gives 
	\begin{align}
		\Sl_{\pm}(\mu_{C},\chi_{C}) 
        &\leq \frac{|\psi_{CS}^{*}\chi_{S}\psi_{CS}^{\dagger} + \psi_{CC}^{*}\chi_{C}\psi_{CC}^{\dagger}|}{\psi_{CS}(\mu_{S}\pm\frac{\iden}{2})\psi_{CS}^{\dagger} + \psi_{CC}(\mu_{C}\pm\frac{\iden}{2})\psi_{CC}^{\dagger}} \nonumber\\
        &\leq \frac{|\psi_{CS}^{*}\chi_{S}\psi_{CS}^{\dagger}| + |\psi_{CC}^{*}\chi_{C}\psi_{CC}^{\dagger}|}{\psi_{CS}(\mu_{S}\pm\frac{\iden}{2})\psi_{CS}^{\dagger} + \psi_{CC}(\mu_{C}\pm\frac{\iden}{2})\psi_{CC}^{\dagger}}\nonumber\\
		&\leq \frac{\Sl_{\pm}(\mu_{S},\chi_{S})\psi_{CS}(\mu_{S}\pm\frac{\iden}{2})\psi_{CS}^{\dagger} + \Sl_{\pm}(\mu_{C},\chi_{C})\psi_{CC}(\mu_{C}\pm\frac{\iden}{2})\psi_{CC}^{\dagger}}{\psi_{CS}(\mu_{S}\pm\frac{\iden}{2})\psi_{CS}^{\dagger} + \psi_{CC}(\mu_{C}\pm\frac{\iden}{2})\psi_{CC}^{\dagger}},\label{eq:psi-nonzero-case}
	\end{align}
    where the first inequality makes use of $\Delta_{\pm}\geq0$, the second simply being triangle inequality, and the last coming from the definition of $\Sl_\pm$ itself.
	Note that we used Proposition~\ref{proposition: kernels inclusion} again to take into account the case $\psi_{CC}(\mu_{C}\pm\frac{\iden}{2})\psi_{CC}^{\dagger}=0$.
	Eq.~\eqref{eq:psi-nonzero-case} states that $\Sl_{\pm}(\mu_{C},\chi_{C})$ is less than or equal to the convex combination of $\Sl_{\pm}(\mu_{S},\chi_{S})$ and itself, with the coefficients   
	\begin{align}
		\frac{\psi_{CS}(\mu_{S}\pm\frac{\iden}{2})\psi_{CS}^{\dagger}}{\psi_{CS}(\mu_{S}\pm\frac{\iden}{2})\psi_{CS}^{\dagger} + \psi_{CC}(\mu_{C}\pm\frac{\iden}{2})\psi_{CC}^{\dagger}} \in (0,1),\quad \frac{\psi_{CC}(\mu_{C}\pm\frac{\iden}{2})\psi_{CC}^{\dagger}}{\psi_{CS}(\mu_{S}\pm\frac{\iden}{2})\psi_{CS}^{\dagger} + \psi_{CC}(\mu_{C}\pm\frac{\iden}{2})\psi_{CC}^{\dagger}} \in (0,1).
	\end{align}
	Hence, Eq.~\eqref{eq:psi-nonzero-case} can be true only when Eq.~\eqref{eq:desired-inequality} is true. 
	
	Finally, assume that $\psi_{CS}(\mu_{S}\pm\frac{\iden}{2})\psi_{CS}^{\dagger}=0$
	but $\psi_{CC}(\mu_{C}\pm\frac{\iden}{2})\psi_{CC}^{\dagger}\neq0$. We need to analyze two separate cases in this situation.\\
	
	\textbf{Case 3-1}: if $a_{CS}(\mu_S \pm \frac{\iden}{2})a_{CS}^\dagger=0$, the catalyst $(\mu_{C},\chi_{C})$ is preserved by a GCO $(a_{CC},b_{CC})$
	\footnote{In particular, when $a_{CS} = 0$, the GCO $(A,B)$ represents a catalytic channel~\cite{Son2024RC} as the catalyst is preserved regardless of the system state. 
		We prove the impossibility of such a catalytic channel.}.
	Lemma~\ref{le: fixed point decomposition} implies that there exists a unitary matrix $U = \1_{S} \oplus U_{C}$ such that 
	\begin{align}
		UAU^{\dagger} = \begin{pNiceArray}{c:c:c}
			a_{SS} & a_{SC_{1}} & a_{SC_{2}}\\
			\hdottedline
			a_{C_{1}S} & a_{C_{1}C_{1}} & 0\\
			\hdottedline
			a_{C_{2}S} & 0 & a_{C_{2}C_{2}}
		\end{pNiceArray},\quad 
		UBU^{\dagger} = \begin{pNiceArray}{c:c:c}
			b_{SS} & b_{SC_{1}} & b_{SC_{2}}\\
			\hdottedline
			b_{C_{1}S} & 0 & 0\\
			\hdottedline
			b_{C_{2}S} & 0 & b_{C_{2}C_{2}}
		\end{pNiceArray},\quad 
		U^{*}\chi U^{\dagger} = \begin{pNiceArray}{c:c:c}
			\chi_{S} & 0 & 0\\
			\hdottedline
			0 & \chi_{C_{1}} & 0\\
			\hdottedline
			0 & 0 & 0
		\end{pNiceArray}
	\end{align}
	with a unitary matrix $a_{C_{1}C_{1}}$.
	The condition $B\geq0$ requires $b_{SC_1}=0$ and $b_{C_1S}=0$, while $\Delta_{\pm}\geq0$ implies $a_{C_{1}S} = 0$.
	Then, the inequality
	\begin{align}
		U\left(B \mp \frac{1}{2}(\iden - AA^{\dagger})\right)U^{\dagger} = \begin{pNiceArray}[cell-space-limits = 3pt]{c:c:c}
			b_{SS} \pm \frac{1}{2}(a_{SS}a_{SS}^{\dagger} + a_{SC_{2}}a_{SC_{2}}^{\dagger} - \iden) & \pm\frac{1}{2}(a_{SC_{1}}a_{C_{1}C_{1}}^{\dagger}) & \Delta_{SC_{2}}\\
			\hdottedline
			\pm\frac{1}{2}(a_{C_{1}C_{1}}a_{SC_{1}}^{\dagger})  & 0 &  \Delta_{C_{1}C_{2}}\\
			\hdottedline
			\Delta_{C_{2}S} & \Delta_{C_{2}C_{1}} & \Delta_{C_{2}C_{2}}
		\end{pNiceArray} \geq 0,
	\end{align}	
	forces $a_{SC_{1}}a_{C_{1}C_{1}}^{\dagger} = 0$. 
	Since $a_{C_{1}C_{1}}$ is a unitary matrix, we obtain $a_{SC_{1}} =0$.
	As a result, $a_{SC}^{*}\chi_{C} a_{SC}^{\dagger} = 0$, and $\chi_{S}'  =  a^{*}_{SS}\chi_{S}a_{SS}^{\dagger}$.
	This directly implies that $\Sl_{\pm}(\mu'_{S},\chi'_{S})\leq \Sl_{\pm}(\mu_{S},\chi_{S})$.
	
	\textbf{Case 3-2}:
	Assume that $a_{CS}(\mu_S \pm \frac{\iden}{2})a_{CS}^\dagger\neq0$ but $\psi_{CS}(\mu_{S}\pm\frac{\iden}{2})\psi_{CS}^{\dagger}=0$.
   This is true only if the number of catalytic modes $m_C\geq2$.
	Eqs.~\eqref{eq:cat-condition-1-alt} and~\eqref{eq:cat-condition-2-alt} can be rewritten as 
	\begin{align}
		\mu_{C}\pm\frac{\iden}{2} & =  \sum_{j=0}^{N} (a_{CC})^{j} [a_{CS}(\mu_{S} \pm \frac{\iden}{2})a_{CS}^{\dagger}+\Delta_{\pm} ](a_{CC}^{\dagger})^{j} + (a_{CC})^{N+1} (\mu_{C}\pm\frac{\iden}{2})(a_{CC}^{\dagger})^{N+1},\\
		\chi_{C} & =  \sum_{j=0}^{N} (a_{CC}^{*})^{j} a^{*}_{CS} \chi_{S} a_{CS}^{\dagger} (a_{CC}^{\dagger})^{j} + (a_{CC}^{*})^{N+1}\chi_{C}(a_{CC}^{\dagger})^{N+1},
	\end{align}
	for any $N$ by repeatedly replacing $\mu_{C}$ and $\chi_{C}$ in the RHS expression of the equations.
	If there exists $j>0$ such that either $\psi^{*}(a_{CC}^{*})^{j} a^{*}_{CS} \chi_{S} a_{CS}^{\dagger} (a_{CC}^{\dagger})^{j}\psi^{\dagger}\neq0$ or $\psi(a_{CC})^{j} \Delta_{\pm} (a_{CC}^{\dagger})^{j}\psi^{\dagger}\neq0$, we prove the desired inequality by the argument used for \textbf{Case 2}.
	Otherwise, we have 
	\begin{align}
		\psi(\mu_{C}\pm\frac{\iden}{2})\psi^{\dagger} &= \psi (a_{CC})^{N}(\mu_{C}\pm\frac{\iden}{2})(a_{CC}^{\dagger})^{N} \psi^{\dagger},\label{eq: psi aCC N 1}\\
		\psi^{*}\chi_{C}\psi^{\dagger} &= \psi^{*} (a_{CC}^{*})^{N}\chi_{C}(a_{CC}^{\dagger})^{N} \psi^{\dagger},\label{eq: psi aCC N 2}
	\end{align}
	for any $N$.
	This implies that a vector space $\mathcal{V} \coloneqq \mathrm{span}\{(a_{CC}^{\dagger})^{N} \psi^{\dagger} | N\geq0\}$ is an invariant subspace under the application of $a_{CC}^{\dagger}$, and the matrix $a_{CC}$ and the column vector $\psi^{\dagger}$ have the block form
	\begin{align}
		a_{CC} = \begin{pNiceArray}{c:c}
			a_{VV} & 0\\
			\hdottedline
			a_{WV} & a_{WW}
		\end{pNiceArray}, \quad \psi^{\dagger} = \begin{pNiceArray}{c}
			\psi_{V}^{\dagger} \\ \hdottedline 0
		\end{pNiceArray},
	\end{align}
	where $a_{VV}$ maps $\mathcal{V}$ into itself. 
	Finally, we assume, without loss of generality, that $\mu_{S}\pm \frac{\iden}{2}>0$ using Lemma~\ref{le:vac-block-diag}.
	Then, the condition $\psi(a_{CC})^{N}a_{CS}(\mu_{S}\pm\frac{\iden}{2})a_{CS}^{\dagger}(a_{CC}^{\dagger})^{N}\psi^{\dagger} = 0$ for any $N\geq0$ implies that $a_{CS}(\mu_{S}\pm\frac{\iden}{2})a_{CS}^{\dagger}$ has no support in $\mathcal{V}$.
	In other words, $a_{CS}$ is in the block form $a_{CS}^{\dagger} = \begin{pNiceArray}{c:c} 0 & a_{WS}^{\dagger} \end{pNiceArray}$.
	
	Now we rewrite the block form of $A$
	\begin{align}
		A = \begin{pNiceArray}{c:c:c}
			a_{SS} & a_{SV} & a_{SW}\\
			\hdottedline
			0 & a_{VV} & 0\\
			\hdottedline
			a_{WS} & a_{WV} & a_{WW}
		\end{pNiceArray}. 
	\end{align}
	Recall the catalyst recovery conditions Eqs.~\eqref{eq:cat-condition-1} and~\eqref{eq:cat-condition-2}. 
	If we restrict our attention to $\mathcal{V}$ subspace, the following equalities hold
	\begin{align}
		\mu_{VV} &= a_{VV}\mu_{VV}a_{VV}^{\dagger} + b_{VV},\\
		\chi_{VV} &= a_{VV}^{*}\chi_{VV}a_{VV}^{\dagger}.
	\end{align} 
	These equations imply that $(\mu_{VV},\chi_{VV})$ is the fixed-point of a GCO $(a_{VV},b_{VV})$.
	Applying Lemma~\ref{le: fixed point decomposition}, another block decomposition
	\begin{align}\label{eq: V1 is separated}
		A = \begin{pNiceArray}{c:c:c:c}
			a_{SS} & 0 & a_{SV_{2}} & a_{SW}\\
			\hdottedline
			0 & a_{V_{1}V_{1}} & 0 & 0\\
			\hdottedline
			0 & 0 & a_{V_{2}V_{2}} & 0\\
			\hdottedline
			a_{WS} & 0 & a_{WV_{2}} & a_{WW}
		\end{pNiceArray},\quad 
		B = \begin{pNiceArray}{c:c:c:c}
			b_{SS} & 0 & b_{SV_{2}} & b_{SW}\\
			\hdottedline
			0 & 0 & 0 & 0\\
			\hdottedline
			b_{V_{2}S} & 0 & b_{V_{2}V_{2}} & b_{V_{2}W}\\
			\hdottedline
			b_{WS} & 0 & b_{WV_{2}} & b_{WW}
		\end{pNiceArray},
	\end{align}
	are derived. 
    Here $V_1$ and $V_2$ are divided such that $\chi_{VV}=\chi_{V_1V_1}\oplus 0_{V_2V_2}$. Because $\chi_{VV}\neq0$, $\dim(a_{V1V1})\geq1$.
	Note that we already used $B\geq\pm\frac{1}{2}(\iden - AA^{\dagger})$ condition to remove all blocks of $A$ and $B$ containing $V_{1}$ as an index, except for the unitary matrix $a_{V_{1}V_{1}}$. 
	Eq.~\eqref{eq: V1 is separated} indicates that modes in $V_{1}$ block evolve unitarily and independently from all other modes.
	Hence, there exists a catalyst $(\mu'_{C'},\chi'_{C'})$ and a GCO $(A',B')$ that can achieve the same state transformations and they are given by 
	\begin{align}
		A' = \begin{pNiceArray}{c:c:c}
			a_{SS}  & a_{SV_{2}} & a_{SW}\\
			\hdottedline
			0  & a_{V_{2}V_{2}} & 0\\
			\hdottedline
			a_{WS}  & a_{WV_{2}} & a_{WW}
		\end{pNiceArray},\quad 
		B' = \begin{pNiceArray}{c:c:c}
			b_{SS} & b_{SV_{2}} & b_{SW}\\
			\hdottedline
			b_{V_{2}S} & b_{V_{2}V_{2}} & b_{V_{2}W}\\
			\hdottedline
			b_{WS} & b_{WV_{2}} & b_{WW}
		\end{pNiceArray},\quad
		\mu'_{C'} = \begin{pNiceArray}{c:c}
			\mu_{V_{2}V_{2}} & \mu_{V_{2}W}\\
			\hdottedline
			\mu_{WV_{2}} & \mu_{WW}
		\end{pNiceArray},\quad 
		\chi'_{C'} = \begin{pNiceArray}{c:c}
			0 & \chi_{V_{2}W} \\ 
			\hdottedline
			\chi_{WV_{2}} & \chi_{WW}
		\end{pNiceArray}.
	\end{align}
	Therefore we can repeat the same argument iteratively until i) \textbf{Case 3-1} can be applied and the proof is concluded or ii) catalyst state is reduced to a single mode and either \textbf{Case 2} or \textbf{Case 3-1} is true. 
\end{proof}

\end{document}